\newtheorem{proposition}{Proposition}[section]
\newtheorem{lemma}{Lemma}[section]
\newtheorem{theorem}{Theorem}[section]
\newtheorem{corollary}{Corollary}[section]
\newtheorem{remark}{Remark}[section]
\newtheorem{definition}{Definition}[section]
\newtheorem{example}{Example}
\newtheorem*{assumption*}{Assumption}
\newtheorem{assumption}{Assumption}[section]    
\newenvironment{proofof}[1][]{\begin{trivlist}
   \item[\hskip \labelsep {\bfseries Proof of #1.}]}{\end{trivlist}}
\newcommand{\Prob}{\mathbb{P}}
\newcommand{\R}{\mathbb{R}}
\DeclareMathOperator{\Exp}{\mathbb{E}}   
\DeclareMathOperator{\Cov}{Cov}
\DeclareMathOperator{\Var}{Var}
\DeclareMathOperator*{\MEC}{MEC}
\newcommand{\ignore}[1]{}
\def\boxit#1{\vbox{\hrule\hbox{\vrule\kern6pt  \vbox{\kern6pt#1\kern6pt}\kern6pt\vrule}\hrule}}
\def\tcr{\textcolor{red}}
\def\bX{\mathbf{X}}
\def\bXtil{\widetilde{\bX}}
\def\Ytil{\widetilde{Y}}
\def\bv{\mathbf{v}}
\def\bZ{\mathbf{Z}}
\def\bmu{\boldsymbol{\mu}}
\def\bM{\mathbf{M}}
\def\bSigma{\boldsymbol{\Sigma}}
\def\bbeta{\boldsymbol{\beta}}
\def\bbetaS{\bbeta_S}
\def\bbetahat{\widehat{\bbeta}}
\def\bbetahatS{\bbetahat_S}
\def\P{\mathbb{P}}
\def\E{\mathbb{E}}
\def\R{\mathbb{R}}
\def\Ssc{\mathcal{S}}
\def\Dsc{\mathcal{D}}
\def\Jsc{\mathcal{J}}
\def\psitwo{\psi_2}
\def\psitwonorm#1{\left\| #1 \right\|_{\psitwo}}
\def\psitwonormstar#1{\left\| #1 \right\|_{\psitwo}^*}
\def\psione{\psi_1}
\def\psionenorm#1{\left\| #1 \right\|_{\psione}}
\def\psialpha{\psi_{\alpha}}
\def\psialphanorm#1{\left\| #1 \right\|_{\psialpha}}
\def\psibeta{\psi_{\beta}}
\def\psibetanorm#1{\left\| #1 \right\|_{\psibeta}}
\def\psigamma{\psi_{\gamma}}
\def\psigammanorm#1{\left\| #1 \right\|_{\psigamma}}
\def\Ltwonorm#1{\left\| #1 \right\|_2}
\def\Linfnorm#1{\left\| #1 \right\|_{\infty}}
\def\MLtwonorm#1{\left|\left| #1 \right|\right|_2}
\def\MLinfnorm#1{\left|\left| #1 \right|\right|_{\infty}}
\def\Mmaxnorm#1{\left|\left| #1 \right|\right|_{\max}}
\def\bXS{\bX_{S}}
\def\bXSi{\bX_{S,i}}
\def\bmuS{\bmu_{S}}
\def\bXbar{\overline{\bX}}
\def\bXbarS{\bXbar_S}
\def\Ybar{\overline{Y}}
\def\muY{\mu_Y}
\def\bSigmaS{\bSigma_{S}}
\def\bSigmaSY{\bSigma_{S,Y}}
\def\bSigmahat{\widehat{\bSigma}}
\def\bSigmahatS{\bSigmahat_S}
\def\bSigmahatSY{\bSigmahat_{S,Y}}
\def\bSigmatil{\widetilde{\bSigma}}
\def\bSigmatilS{\bSigmatil_S}
\def\bSigmatilSY{\bSigmatil_{S,Y}}
\def\bGamma{\boldsymbol{\Gamma}}
\def\bGammahat{\widehat{\bGamma}}
\def\bGammahatS{\bGammahat_S}
\def\bGammahatSY{\bGammahat_{S,Y}}
\def\bzero{\mathbf{0}}
\def\bXtil{\widetilde{\bX}}
\def\bXtilS{\bXtil_S}
\def\Ytil{\widetilde{Y}}
\def\bpsi{\boldsymbol{\psi}}
\def\bPsi{\boldsymbol{\Psi}}
\def\bpsiS{\bpsi_{S}}
\def\bPsiS{\bPsi_{S}}
\def\bR{\mathbf{R}}
\def\bRnS{\bR_{n,S}}
\def\bTnS{\mathbf{T}_{n,S}}
\def\bZ{\mathbf{Z}}
\def\bZtil{\widetilde{\bZ}}
\def\bZS{\bZ_S}
\def\bZSi{\bZ_{S,i}}
\def\bZtilS{\bZtil_S}
\def\bZbar{\overline{\bZ}}
\def\bZbarS{\bZbar_S}
\def\bnu{\boldsymbol{\nu}}
\def\bnuS{\bnu_S}
\def\bXi{\boldsymbol{\Xi}}
\def\bXiS{\bXi_S}
\def\bXitil{\widetilde{\bXi}}
\def\bXitilS{\bXitil_S}
\def\bUps{\boldsymbol{\Upsilon}}
\def\bUpshatS{\widehat{\bUps}_{S}}
\def\lambdamin{\lambda_{\min}}
\def\lambdamax{\lambda_{\max}}
\def\lambdasupSsc{\lambda_{\sup,\Ssc}}
\def\lambdainfSsc{\lambda_{\inf,\Ssc}}
\def\lambdatil{\widetilde{\lambda}}
\def\lambdatilsupSsc{\lambdatil_{\sup,\Ssc}}
\def\lambdatilinfSsc{\lambdatil_{\inf,\Ssc}}
\def\sigmaY{\sigma_Y}
\def\sigmaXSsc{\sigma_{\bX,\Ssc}}
\def\sigmatil{\widetilde{\sigma}}
\def\sigmatilZSsc{\sigmatil_{\bZ, \Ssc}}
\def\KSsc{K_{\Ssc}}
\def\Ktil{\widetilde{K}}
\def\KtilSsc{\Ktil_{\Ssc}}
\def\Ln{L_{n}}
\def\LnS{L_{n,\Ssc}}
\def\qn{q_{n}}
\def\qnS{q_{n,\Ssc}}
\def\qntil{\widetilde{q}_n}
\def\rn{r_n}
\def\rntil{\widetilde{r}_n}
\def\CSsc{C_{\Ssc}}
\def\bu{\mathbf{u}}
\def\ba{\mathbf{a}}
\def\baS{\ba_S}
\def\AscnS{\mathcal{A}_{n,S}}
\def\BscnS{\mathcal{B}_{n,S}}
\def\ASsc{\mathcal{A}_{\Ssc}}
\def\betahat{\widehat{\beta}}
\def\Rsc{\mathcal{R}}
\def\S{\mathbb{S}}
\def\sigmaxiAS{\sigma_{\xi_{\ASsc}}}
\def\rhoxiAS{\rho_{\xi_{\ASsc}}}
\def\supS{\underset{S \in \Ssc}{\sup}}
\def\convd{\overset{d}{\rightarrow}} 
\numberwithin{equation}{section}
\def\tcr{\textcolor{red}}
\def\tcm{\textcolor{magenta}}
\def\tcr{\textcolor{black}}
\def\tcm{\textcolor{black}}
\def\tcR{\textcolor{red}}
\def\tcR{\textcolor{black}}
\begin{document}


\begin{frontmatter}
\title{Inference for Individual Mediation Effects and Interventional Effects in Sparse High-Dimensional Causal Graphical Models}
\runtitle{Inference in High-Dimensional Causal Graphical Models}
\begin{aug}
  \author[A]{\fnms{Abhishek}  \snm{Chakrabortty$^{\dag}$}\ead[label=e1]{abhishek@stat.tamu.edu}},
  \author[B]{\fnms{Preetam} \snm{Nandy$^{\dag}$}\ead[label=e2]{preetamnandy@gmail.com}}
  \and
  \author[C]{\fnms{Hongzhe} \snm{Li}\corref{Correspondence}\ead[label=e3]{hongzhe@pennmedicine.upenn.edu}\thanksref{t1}}
  \runauthor{A. Chakrabortty, P. Nandy and H. Li}
\thankstext{t1}{ $^{\dag}$These authors contributed equally to this work. Both authors were previously at the University of Pennsylvania during the initial preparation of this work. 
This research was supported in part by the National Institutes of Health grants GM123056 and GM129781.\vspace{0.05in}}

\address[A]{Department of Statistics, Texas A\&M University, \printead{e1}}
\address[B]{LinkedIn Corp., \printead{e2}}
\address[C]{Department of Biostatistics, Epidemiology \& Informatics, University of Pennsylvania, \printead{e3}}
\end{aug}

\begin{keyword}[class=MSC]
\kwd{Primary 62F12, 62H05, 62H10, 62J05, 92B15, 62A09}
\end{keyword}

\begin{keyword}
\kwd{Linear structural equation models, Directed acyclic graphs, Markov equivalence class, Sparsity, Post-selection inference, High-dimensional regression on varying subsets of covariates, Uniform bounds and asymptotic linearity.}
\end{keyword}

\begin{abstract}
We consider the problem of identifying intermediate variables (or mediators) that regulate the effect of a treatment on a response variable. While there has been
significant research \tcr{on this classical topic}, little work has been done when the set of potential mediators is high-dimensional.
\tcr{A further complication arises when these 
mediators are interrelated (with unknown dependencies).} In particular, we assume that the causal structure of the treatment, the pre-treatment covariates \tcr{(or confounders)}, the potential mediators and the response is 
a \tcr{(possibly unknown)} directed acyclic graph (DAG). High-dimensional DAG models have previously been used for the estimation of causal effects from observational data. In particular, methods called IDA and joint-IDA have been developed \tcr{for estimating the effects of single interventions and  
multiple simultaneous interventions, respectively.}
In this paper, we propose an IDA-type method, called MIDA, \tcr{for estimating so-called `individual' mediation effects from high-dimensional observational data under our setting.} Although IDA and joint-IDA estimators have been shown to be consistent in certain sparse high-dimensional settings, their asymptotic properties such as convergence in distribution and inferential tools in such settings \tcr{have remained unknown.} In this paper, we prove high-dimensional consistency of MIDA for linear structural equation models with sub-Gaussian errors. More importantly, we derive distributional convergence results for MIDA in similar high-dimensional settings, which are applicable to IDA and joint-IDA estimators as well. \tcr{To the best of our knowledge, these are the first such distributional convergence results facilitating inference for IDA-type estimators.} These results \tcr{are} 
built on our novel theoretical results regarding uniform bounds for linear regression estimators over varying subsets of high-dimensional covariates, which may be of independent interest. Finally, we empirically \tcr{validate} our asymptotic theory \tcr{and demonstrate the usefulness of MIDA} 
in the identification of large mediation effects \tcr{via extensive simulations, and we also} illustrate a practical application of MIDA in genomics with a real dataset.
\end{abstract}

\end{frontmatter}

\vbadness10000 


\section{Introduction}\label{section: introduction}
Although confirmatory causal inference from high-dimensional observational data is impossible due to identifiability issues, this topic has received great attention in the recent past. Intervention experiments are considered to be the gold-standard for making causal inference. However, experimental data cannot always be generated given the considerable ethical concerns, time constraints, and the high costs associated with performing appropriate experiments. Another major problem that can arise in many scientific disciplines is that the sheer number of causal hypotheses is simply too large to test experimentally. Good examples are gene knockout experiments, where potential candidate genes for \tcr{the} knockout experiments typically lie \tcr{in the order of}
thousands. In such a situation, causal predictions from observational data can be extremely useful in prioritizing intervention experiments \citep{MaathuisColomboKalischBuehlmann10, StekhovenEtAl12, LeEtAl17}.

There has been a lot of recent progress in 
estimating causal effects from high-dimensional observational data \tcr{based on a graphical model framework}. Most of these methods assume that the data are generated from an unknown linear structural equation model (LSEM) with independent Gaussian errors, and that the causal relationships among the variables can be represented by a directed cyclic graph (DAG). Under these assumptions, high-dimensional consistency results have been derived for the estimation of the causal graph and causal effects. In particular, \citet{MaathuisKalischBuehlmann09} proposed Interventional calculus when the DAG is Absent (IDA) for estimating the total causal effect of a variable on another variable, and \tcr{they} 
proved a high-dimensional consistency result for \tcr{their IDA estimator. The IDA method has been further
extended to the joint-IDA by \citet{NandyMaathuisRichardson17} for estimating the effects of multiple simultaneous interventions as well,} and a similar high-dimensional consistency result for the joint-IDA estimator has been
 proved therein.

The IDA method estimates a multi-set of causal effects as follows. The first step is to estimate a partially directed graph, called Completed Partially Directed Acyclic Graph (CPDAG), from high-dimensional observational data. This can be done by applying a structure learning algorithm such as the PC algorithm \citep{SpirtesEtAl00, ColomboEtAl14}, greedy equivalence search \tcR{(GES)} \citep{Chickering03} and adaptively restricted greedy equivalence search \tcR{(ARGES)} \citep{NandyHauserMaathuis16}. High-dimensional consistency results for these structure learning algorithms \tcr{have been proved in \citet{KalischBuehlmann07a, ColomboEtAl14,NandyHauserMaathuis16}, e.g. The reason behind estimating a partially directed graph here instead of 
the underlying directed graph is that the true causal DAG is not identifiable from observational data alone without making further stringent assumptions.} A CPDAG
uniquely represents a Markov equivalence class of DAGs that can generate the same joint distribution of the variables. The IDA method estimates a possible causal effect for each DAG in the Markov equivalence class represented by the estimated CPDAG and combines them to produce a multi-set (where each element can have multiple copies) of causal effects. The authors \tcr{also} noted that the listing of all DAGs in the Markov equivalence class from a given CPDAG is typically computationally infeasible for large graphs with thousands of variables, \tcr{and 
provided} computational shortcuts to obtain the multi-set of possible effects without listing all DAGs in the Markov equivalence class of the estimated CPDAG. It is common practice to summarize the multi-set of possible effects by its average or the minimum absolute value.

\tcr{Despite all these recent advances in estimating total causal 
effects and/or learning the underlying causal structure, very little work has been done on the corresponding problem of \emph{causal mediation analysis}
in high-dimensional settings. Such problems, however, are of considerable relevance in the modern `big data' era, with a growing interest across various scientific disciplines in understanding the role of `networks' of multiple intermediate variables (or mediators) in simultaneously regulating the causal effect of a treatment on a response. Inspired by such motivations, we consider here the problem of identifying mediators in settings where the set of potential mediators are: (i) \emph{high dimensional}, and more importantly, (ii) possibly \emph{interrelated with unknown dependencies}. 
Both aspects combined make our setting considerably challenging and distinct compared to most of the existing classical mediation literature. The latter aspect, in particular, creates unique challenges and necessitates revisiting the very \emph{definition} of mediation effects in such settings, compared to existing definitions that typically apply only for conditionally independent mediators.}
\tcr{In this regard, we first propose a novel definition of the mediation effect in the presence of multiple mediators (see Section \ref{subsection: setup}, \tcR{and in particular, Definition \ref{definition: mediation effect} therein}).}

\tcr{Classical mediation analysis has a rich literature. A simpler problem considering only one potential mediator has been well studied within the framework of LSEMs \citep{JuddKenny81, JamesEtAl82, Sobel82, BaronKenny86, MacKinnonEtAl02}. The goal of causal mediation analysis with a single mediator is to understand what portion of the total causal effect of a treatment on a response can be attributed to the potential mediator. In fact, the total effect \tcR{in this case} 
can be decomposed as a sum of the direct effect and the indirect effect, where the indirect effect is the effect of the treatment on the response that goes through the potential mediator. Similarly, in the case of multiple potential mediators, we are interested in understanding what portion of the total effect of the treatment, $X_t$, on the response, $X_p$, can be attributed to a potential mediator, $X_j$. We refer to it as the \emph{individual mediation effect}  \tcR{(Defn. \ref{definition: mediation effect})} with respect to $X_j$. Note
that \tcR{with possibly interrelated mediators,}
the total effect of the treatment on the response may \emph{not} be decomposed as the sum of all individual mediation effects and the direct effect here, unless the potential mediators are conditionally independent of each other given the treatment.} 


\tcr
{The estimation and testing for mediation effects in causal models with conditionally independent mediators have also been considered in both 
classical settings \citep{PreacherHayes08, BocaEtAl14}, as well as in high-dimensional settings \citep{ZhangEtAl16}. For causal models with conditionally dependent mediators, \citet{VanderWeeleVansteelandt14} discussed estimation methods for the total effect of all mediators (or the total indirect effect), while 
\citet{HuangPan16} proposed to estimate the individual effects with respect to a transformed set of conditionally independent variables in high-dimensional settings. In contrast to these existing works, we are interested in separately evaluating the importance of each potential mediator, \tcR{allowing for (unknown) inter-dependencies}. The identification of mediators corresponding to large individual mediation effects can be very useful in a variety of scientific applications, including genomics, where it is often of interest to understand how an influential genotype regulates a phenotype of interest through gene expressions.
}

\paragraph*{Our contributions\tcR{.}}\label{sec:contributions}
In this paper, we propose an IDA-type \tcR{estimation} method, called MIDA \tcR{(see Section \ref{subsection:MIDA-estimator})}, for estimating the causal mediation effect of a treatment variable on a response variable through an intermediate variable (a.k.a.\ mediator) in high-dimensional settings. In particular, we consider a treatment (a.k.a.\ exposer) \tcr{$X_t$, a set of pre-treatment covariates (a.k.a.\ confounders) $\{X_1,\ldots, X_{t-1}\}$,} a response variable $X_p$, and a set of potential mediators \tcr{$\{X_{t+1},\ldots, X_{p-1}\}$} \tcR{that could be high-dimensional}. \tcr{A pre-treatment covariate can be a common cause of the treatment variable and the response, and also a common cause of the potential mediators and the response.} We assume that the causal relationships among the variables in $\mathbf{X} = \{X_1, X_2, \hdots$ $, X_p\}$ can be represented by a DAG \tcR{(possibly unknown)}, where $X_i$ and $X_j$ are connected by a directed edge if and only if $X_i$ is a direct cause of $X_j$.

 As is the case with IDA\tcR{-type estimators}, MIDA relies on the estimation of an underlying CPDAG, and it produces a multi-set of possible mediation effects, which we summarize by taking the average. We prove the \emph{consistency} of MIDA for certain sparse high-dimensional LSEMs with sub-Gaussian errors \tcR{(Theorem \ref{theorem: high-dimensional consistency})}.
 Furthermore, we provide \emph{unified distributional convergence results} for IDA-type estimators in similar high-dimensional settings \tcR{(Theorems \ref{theorem: asymptotic linearity}--\ref{theorem: asymptotic distribution}, Corollaries \ref{corollary: asymptotic normality of IDA}--\ref{corollary: asymptotic distribution of MIDA})}, thus facilitating \emph{inference} for such estimators. These results have been built on a novel \emph{uniform non-asymptotic theory for linear regression over varying subsets} of high-dimensional covariates \tcR{(Theorem \ref{linreg:main_thm1})} which may be of independent interest. \tcr{This is a critical tool in our case for handling the possibly large multi-set of causal effects obtained from the estimated CPDAG which poses the key challenge in our theory for inference. 
 The theory notably also does \emph{not} depend on the nature of the CPDAG estimation procedure as long as it is consistent.} To the best of our knowledge, we propose the first estimation method for mediation effects when the data are generated from an unknown DAG, as well as the first high-dimensional distributional convergence results and inferential tools for IDA-type estimators of \emph{both} interventional effects as well as mediation effects. \tcr{Our contributions in the latter regard thus extend beyond just inference for mediation effects.} 

 \tcr{
 Finally, we note that while we work with the CPDAG here, the underlying causal DAG is identifiable in the following special cases: (i) when \emph{all} error variables in the LSEM are non-Gaussian \citep{ShimizuEtAl06-JMLR, ShimizuEtAl11} and (ii) when \emph{all} error variables in the LSEM are Gaussian with equal error variances \citep{PetersBuhlmann14, ShiLi20}. In these cases, MIDA can \emph{still} be applied with the estimated DAG (instead of the estimated CPDAG) to obtain mediation effects (instead of a multi-set of possible mediation effects). However, these additional assumptions cannot be verified from observational data typically, and hence, a more conservative approach of estimating the CPDAG is recommended. Note also that the equal error variance assumption \emph{cannot} be achieved by normalizing the data to have equal variances for all observed variables since the assumption is on the underlying data generating error variables instead of the observed variables (see Definition \ref{definition: LSEM}).
 }

\par\smallskip
\noindent\tcR{\emph{Organization.}}
The rest of this paper is organized as follows. Section \ref{section: preliminaries} provides some necessary background material. In Section \ref{section: estimation}, we propose the MIDA algorithm for estimating individual mediation effects from observational data. \tcr{In Section \ref{section: high-dimensional consistency}, we prove consistency of MIDA in 
sparse high-dimensional LSEMs with sub-Gaussian errors and also discuss the modifications required to relax the linear sub-Gaussian assumption}. Our non-asymptotic theoretical results on linear regression over varying subsets of high-dimensional covariates are given in Section \ref{section: linear regression theory} which can be read independently. \tcr{Section \ref{section: asymptotic distribution} discusses the distributional convergence results and inferential tools for MIDA and IDA-type estimators.} 
Section \ref{section: simulations} contains simulation results, where we demonstrate the usefulness of MIDA and our asymptotic theory for the identification of \tcr{non-zero} mediation effects. In Section \ref{section: application}, we apply MIDA to a real dataset generated from a collection of yeast segregants, \tcr{and we end with a concluding discussion in Section \ref{section: discussion}.}
All proofs, additional technical materials, and additional numerical results are collected in the \tcr{\hyperref[supp_mat]{Supplement} 
(\tcR{Appendices} \ref{section: proofs}\tcR{--}\ref{sec:FDR_Control}).} 

\section{Preliminaries}\label{section: preliminaries}
We begin with a few basic definitions and notations.

\subsection{Graph Terminology}

We consider graphs $\mathcal H = (\mathbf{X},E)$ with vertex (or node) set $\mathbf{X} = \{X_1,\ldots,X_p\}$ and edge set $E$. There is at most one edge between any pair of vertices and edges may be either directed ($X_i\to X_j$) or undirected ($X_i - X_j$). If $\mathcal H$ contains only (un)directed edges, it is called \emph{(un)directed}. If $\mathcal H$ contains directed and/or undirected edges, it is called \emph{partially directed}.  A pair of nodes $\{X_i,X_j\}$ are \emph{adjacent} if there is an edge between $X_i$ and $X_j$. If $X_i \to X_j$, then $X_i$ is a \emph{parent} of $X_j$. We denote the set of all parents of $X_j$ in $\mathcal{H}$ by $\mathbf{Pa}_{\mathcal{H}}(X_j)$, \tcr{and all adjacent nodes of $X_j$ in $\mathcal{H}$ by $\mathbf{Adj}_{\mathcal{H}}(X_{j})$.} A \emph{path} between $X_i$ and $X_j$ is a sequence of distinct nodes $\{X_i,\ldots,X_j\}$ such that all successive pairs of nodes are adjacent. A \emph{directed path} from $X_i$ to $X_j$ is a path between $X_i$ and $X_j$ where all edges are directed towards $X_j$. A directed path from $X_i$ to $X_j$ together with the edge $X_j \to X_i$ forms a \emph{directed cycle}. A (partially) directed graph that does not contain a directed cycle is called a \emph{(partially) directed acyclic graph} or (P)DAG. 

\subsection{Linear Structural Equation Models (LSEMs)}\label{subsection: LSEM}

\begin{definition}\label{definition: LSEM}
   Let $\mathcal{G}_0 = (\mathbf{X},E)$ be a DAG and let $B_{\mathcal{G}_0}$ be a $p \times p$ matrix such that $(B_{\mathcal{G}_0})_{ij} \neq 0$ if and only if $X_i \in \mathbf{Pa}_{\mathcal{G}_0}(X_j)$. Let $\boldsymbol{\epsilon} = (\epsilon_1,\ldots,\epsilon_p)^{T}$ be a zero mean random vector of jointly independent error variables. Then $\mathbf{X} = (X_1,\dots,X_p)^T$ is said to be generated from a linear structural equation model (LSEM) characterized by the pair $ (B_{\mathcal{G}_0},\boldsymbol{\epsilon})$ if
   \begin{align}\label{eq: LSEM 1}
      (\mathbf{X} - \boldsymbol{\mu} ) \; \leftarrow \;  B_{\mathcal{G}_0}^T(\mathbf{X} - \boldsymbol{\mu}) + \boldsymbol{\epsilon}, \quad \mbox{where}  \;\; \boldsymbol{\mu} \; := \; \E(\mathbf{X}).
   \end{align}
\end{definition}
If $\mathbf{X}$ is generated from an LSEM characterized by the pair $ (B_{\mathcal{G}_0,}\boldsymbol{\mathbf{\epsilon}})$, then
we call $\mathcal{G}_0$ the \emph{causal DAG}. The symbol ``$\leftarrow$" in (\ref{eq: LSEM 1}) emphasizes that the expression should be understood as a generating mechanism rather than as a mere equation. We emphasize that we assume here \tcr{that there are no hidden confounders (see Section \ref{sec:assumption_discussion} for more discussion on this case)}, and hence the joint independence of the error terms. 
In the rest of the paper, we refer to LSEMs without explicitly mentioning the independent error assumption.

\subsection{Markov Equivalence Class of DAGs}\label{subsection: Markov equivalence class of DAGs}

The causal DAG $\mathcal{G}_0$ is \tcR{(typically)} not identifiable from (observational data from) the distribution of $\mathbf{X}$. A DAG encodes conditional independence relationships via the notion of \textit{d-separation} (\cite{Pearl00}, Theorem 1.2.4, page 18). In general, several DAGs can encode the same conditional independence relationships, and such DAGs form a \emph{Markov equivalence class}. Two DAGs belong to the same Markov equivalence class if and only if they have the same skeleton and the same v-structures \citep{VermaPearl90}. A Markov equivalence class of DAGs can be uniquely represented by a \emph{completed partially directed acyclic graph} (CPDAG) \citep{SpirtesEtAl00, Chickering02}, which is a graph that can contain both directed and undirected edges. A CPDAG satisfies the following: $X_i\to X_j$ in the CPDAG if $X_i\to X_j$ in every DAG in the Markov equivalence class, and $X_i - X_j$ in the CPDAG if the Markov equivalence class contains a DAG for which $X_i\to X_j$ as well as a DAG for which $X_i\leftarrow X_j$. CPDAGs can be estimated from observational data using various algorithms \citep{SpirtesEtAl00, Chickering03, TsamardinosEtAl06, NandyHauserMaathuis16}.

\subsection{Problem Setup}
\label{subsection: setup}
\tcr{
We assume that $\mathbf{X} = \{X_1, \ldots, X_p\}$ is generated from an LSEM characterized by the pair $(B_{\mathcal{G}_0,}\boldsymbol{\mathbf{\epsilon}})$ as in \eqref{eq: LSEM 1}, where $\{X_1,\ldots, X_{t-1}\}$ is a set of pre-treatment covariates (a.k.a. \emph{confounders}), $X_t$ is the \emph{treatment} variable, $\{X_{t+1},\ldots, X_{p-1}\}$ is a set of \emph{potential mediators} \tcm{and $X_p$ denotes the \emph{response} variable. Note that we do allow the case $t=1$ here to represent the absence of 
pre-treatment covariates.} We assume that no potential mediator is a direct cause of a variable in $\{X_1, \ldots, X_t\}$, i.e. 
$(B_{\mathcal{G}_0})_{ji} = 0$ for all $i \leq t$, \tcm{for each $j = t+1,\ldots, p-1$}. Further, we assume that the response variable $X_p$ is not a direct cause of any other variable in $\mathbf{X}$, i.e. 
$(B_{\mathcal{G}_0})_{pj} = 0$ for all $j < p$. \tcm{Finally, the \emph{observed data} consists of $n$ independent and identically distributed (i.i.d.) realizations of $\mathbf{X}$, where throughout we allow for a \emph{high-dimensional} setting with $p$ allowed to diverge with the sample size $n$.}
%
}

\par\smallskip
In order to define the total causal effect of a variable $X_i$ on another variable $X_k$, we consider a hypothetical outside \emph{intervention} to the system where we set a variable $X_i$ to some value $x_i$ uniformly over the entire population. This can be denoted by Pearl's do-operator: $do(X_i = x_i)$ \citep{Pearl09}, which corresponds to removing the edges into $X_i$ in $\mathcal{G}_0$ (or equivalently, setting the $i$-th column of $B_{\mathcal{G}_0}$ equal to zero) and replacing $\epsilon_i$ by the constant $x_i$. The post-interventional expectation of $X_k$ is denoted by $\Exp[X_k \mid do(X_i = x_i)]$.

Under the LSEM 
assumption, $\Exp[X_k \hspace{0.05in} \vert \hspace{0.05in} do(X_i = x_i)]$ is a linear function of $x_i$ and the \emph{total causal effect} of $X_i$ on $X_k$ is defined as \citep{MaathuisKalischBuehlmann09}
\[
\theta_{ik} \; := \; \frac{\partial}{\partial x_i} \Exp[X_k \mid do(X_i = x_i)].
\]
\tcr{To provide} a graphical interpretation of $\theta_{ik}$, we define the effect of $X_{i_0}$ to $X_{i_{k+1}}$ through a directed path $\{X_{i_0},X_{i_1},\ldots,X_{i_k},X_{i_{k+1}}\}$ \tcr{as} 
 $\prod_{r=0}^{k}(B_{\mathcal{G}_0})_{i_{r}i_{r+1}}$. Then the total causal effect $\theta_{ik}$ is given by the sum of the effects of $X_i$ to $X_k$ through all directed paths from $X_i$ to $X_k$. This is known as the \emph{path method} for computing the total causal effects in an LSEM \citep{Wright21}.

We denote a \emph{joint-intervention} on $X_i$ and $X_j$ by $do(X_i = x_i,~ X_j = x_j)$. Again, the post-interventional expectation $\Exp[X_k \mid do(X_i = x_i,~ X_j = x_j)]$ is a linear function of $(x_i,x_j)$ and the effect of $X_i$ on $X_k$ in the joint intervention $do(X_i = x_i,~ X_j = x_j)$ is defined as \citep{NandyMaathuisRichardson17}
\[
\theta_{ik}^{(i,j)} \; := \; \frac{\partial}{\partial x_i} \Exp[X_k \mid do(X_i = x_i,~ X_j = x_j)].
\]
Note that $\theta_{ik}^{(i,j)}$ can be interpreted as the total causal effect of $X_i$ on $X_k$ when we set $X_j = x_j$ uniformly over the entire population, that is, the portion of the total effect of $X_i$ on $X_k$ that does not go though $X_j$.

Finally, we define the \emph{individual mediation effect} of a potential mediator $X_j$ \tcr{($j = t+1, \ldots, p-1$)} to be the portion of total effect of the treatment variable $X_t$ on the response $X_p$ that goes through $X_j$.
\begin{definition}\label{definition: mediation effect}
The \tcr{individual mediation effect, $\eta_j$,} with respect to a potential mediator $X_j$ \tcr{($j = t+1, \ldots, p-1$) is defined as}
\[
\eta_j \; := \; \frac{\partial}{\partial x_t} \Exp[X_p \mid do(X_t = x_t)] - \frac{\partial}{\partial x_t} \Exp[X_p \mid do(X_t = x_t,~ X_j = x_j)].
\]
\end{definition}

{\color{black}
\vspace{-0.15in}
\begin{remark}\label{remark: general SEM}
\emph{
\tcr{
The individual mediation effect can be interpreted as the change in the total causal effect of the treatment $X_t$ on the response $X_p$ when the potential mediator $X_j$ is knocked out from the causal graph $\mathcal{G}_0$ by the intervention $do(X_j = x_j)$. Note that Definition \ref{definition: mediation effect} as well as this interpretation of the individual mediation effect holds for a general structural equation model: $X_i \leftarrow f_i(\mathbf{X}_{\mathbf{Pa}_{\mathcal{G}_0}}(X_i), \epsilon_i)$ for $i = 1,\ldots,p$. While under the linearity assumption $\eta_j$ does not depend on the intervention values $x_t$ and $x_j$, in a more general setting $\eta_j(x_t, x_j)$ \emph{can} be a non-trivial function of \tcm{$(x_t,x_j)$}. 
}
}
\end{remark}
}
Under the linearity assumption, the individual mediation effect $\eta_j$ is given by the sum of the effects of $X_t$ to $X_p$ through all directed paths from $X_t$ to $X_p$ that go through $X_j$. It follows from Theorem 3.1 of \citet{NandyMaathuisRichardson17} that $\eta_j$ equals the product of the total causal effect of $X_t$ on $X_j$ and the total causal effect of $X_j$ on $X_p$. \tcr{We formalize this in the proposition below.}
\begin{proposition}\label{proposition: product formula}
Let $\mathbf{X}$ be generated from an LSEM. The individual mediation effect \tcr{$\eta_j$} with respect to a potential mediator $X_j$ is \tcr{then} given by $$\eta_j \; = \; \theta_{tj}\theta_{jp} \quad \tcr{(j = t+1, \ldots, p-1)},$$ where \tcr{for any $(i,k)$,} $\theta_{ik}$ denotes the total causal effect of $X_i$ on $X_k$.
\end{proposition}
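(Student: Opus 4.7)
The plan is to express both terms in the definition of $\eta_j$ via the path method and then identify $\eta_j$ with the contribution of exactly those directed paths from $X_1$ to $X_p$ which pass through $X_j$. First I would rewrite
\[
\eta_j \;=\; \frac{\partial}{\partial x_1}\E[X_p \mid do(X_1 = x_1)] \;-\; \frac{\partial}{\partial x_1}\E[X_p \mid do(X_1 = x_1,\, X_j = x_j)] \;=\; \theta_{1p} - \theta_{1p}^{(1,j)},
\]
using the defining notation for single and joint interventional effects introduced in the setup.

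Next I would apply the path method. By Wright's formula, $\theta_{1p}$ equals the sum, over all directed paths from $X_1$ to $X_p$ in $\mathcal{G}_0$, of the product of the edge weights $(B_{\mathcal{G}_0})_{ab}$ along the path. For the joint-intervention effect, the do-operator $do(X_1 = x_1,\, X_j = x_j)$ removes all edges into $X_1$ and into $X_j$ in $\mathcal{G}_0$, i.e.\ replaces the $1$st and $j$th columns of $B_{\mathcal{G}_0}$ by zero. The path method applied to this mutilated LSEM therefore shows that $\theta_{1p}^{(1,j)}$ is exactly the sum over directed paths from $X_1$ to $X_p$ that \emph{do not} pass through $X_j$, because any path going through $X_j$ must enter $X_j$ via an edge that has been deleted. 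Subtracting, $\eta_j$ is the sum of the edge-weight products over all directed paths from $X_1$ to $X_p$ that \emph{do} pass through $X_j$.

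The remaining step is a bijective decomposition of such paths at the vertex $X_j$. Each directed path from $X_1$ to $X_p$ through $X_j$ splits uniquely into a directed path $P_1$ from $X_1$ to $X_j$ and a directed path $P_2$ from $X_j$ to $X_p$, and the edge-weight product factorizes as $w(P_1)\,w(P_2)$. Conversely, every pair $(P_1,P_2)$ concatenates to a valid directed path provided $P_1$ and $P_2$ share no vertex other than $X_j$; this is where I would use the DAG structure of $\mathcal{G}_0$, since any common node $X_k \neq X_j$ would simultaneously be a proper descendant of $X_j$ (via $P_2$) and a proper ancestor of $X_j$ (via $P_1$), producing a directed cycle. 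Hence the decomposition is a bijection, and summing gives
\[
\eta_j \;=\; \Bigl(\sum_{P_1 : X_1 \to X_j} w(P_1)\Bigr)\Bigl(\sum_{P_2 : X_j \to X_p} w(P_2)\Bigr) \;=\; \theta_{1j}\,\theta_{jp},
\]
again by the path method, completing the proof.

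The only real subtlety I anticipate is the vertex-disjointness argument underpinning the bijection; everything else is bookkeeping once the path method is invoked on $\mathcal{G}_0$ and on its mutilation under $do(X_1,X_j)$. As noted in the excerpt, this conclusion also follows as a special case of Theorem~3.1 of \citet{NandyMaathuisRichardson17}, so the above argument may alternatively be replaced by a direct citation.
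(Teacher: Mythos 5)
Your proposal is correct, but it takes a genuinely different route from the paper. The paper's proof is a two-line deduction: it quotes the identity $\theta_{1p}^{(1,j)} = \theta_{1p} - \theta_{1j}\theta_{jp}$ from Theorem 3.1 of \citet{NandyMaathuisRichardson17} and subtracts, exactly as in your closing remark about the "direct citation" alternative. What you do instead is prove that identity from scratch: you apply Wright's path method to $\mathcal{G}_0$ and to the graph mutilated by $do(X_1,X_j)$, identify $\theta_{1p}^{(1,j)}$ with the sum over directed $X_1$-to-$X_p$ paths avoiding $X_j$ (since every path through $X_j$ uses a deleted edge into $X_j$, while paths avoiding $X_j$ use no edge into $X_1$ or $X_j$ and so survive), and then factor the remaining sum by splitting each path at $X_j$. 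The one nontrivial point, that concatenating an arbitrary $X_1$-to-$X_j$ path with an arbitrary $X_j$-to-$X_p$ path yields a path (i.e.\ the pieces meet only at $X_j$), you handle correctly: a common vertex other than $X_j$ would be both a proper ancestor and a proper descendant of $X_j$, contradicting acyclicity, so the splitting map is a bijection and the double sum factorizes into $\theta_{1j}\theta_{jp}$. The trade-off: the paper's argument is shorter and leverages existing machinery, whereas yours is self-contained and elementary, makes rigorous the graphical interpretation the paper only states informally just before the proposition ("$\eta_j$ is the sum of effects through paths via $X_j$"), and does not use the setup restrictions $(B_{\mathcal{G}_0})_{j1} = (B_{\mathcal{G}_0})_{pj} = 0$ or any side condition inherited from the cited theorem, so it applies to any LSEM over a DAG with $X_1$, $X_j$, $X_p$ distinct.
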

\tcr{It is important to note that this `product-type' representation of the mediation effect $\eta_j$ does \emph{not} correspond, in general, to the product of regression coefficients obtained from 
regressing the mediator vs. \tcR{the} treatment, and the response vs. \tcR{the} mediator, as in `marginal' mediation analyses under conditionally independent mediators. We illustrate this further in Examples \ref{example: individual mediation effect}\tcR{--}\ref{example: covariate adjustment} below. \tcR{(Example \ref{example: covariate adjustment} is in Section \ref{subsection: covariate adjustment} and is a continuation of Example \ref{example: individual mediation effect}.)}}

{\color{black}
\vspace{-0.05in}
\begin{example}\label{example: individual mediation effect}
\emph{
\tcr{
\tcm{We consider a simple case with $p = 7$, $t = 2$.}
The DAG  in Figure \ref{fig: example} represents the causal structure among the pre-treatment covariate $X_1$, the treatment variable $X_2$, the potential mediators $\{X_3,\ldots, X_6\}$ and the response variable $X_7$. The edge weights represent the coefficients of the following LSEM: $X_1 \leftarrow \epsilon_1$, $X_2 \leftarrow1.6X_1 + \epsilon_2$, $X_3 \leftarrow 0.7X_2 + 1.4X_4 +  \epsilon_3$, $X_4 \leftarrow 1.4X_1 + \epsilon_4$, $X_5 \leftarrow 1.2X_2 + 0.9X_3 + \epsilon_5$, $X_6 \leftarrow 1.1X_5 + \epsilon_6$, and $X_7 \leftarrow 0.6X_3 + 0.8X_4 + 1.8X_6 + \epsilon_7$. The error variables can be assumed to have any distribution for deriving the total causal effects and the individual mediation effects, since the total causal effects in an LSEM do not depend on the distributions of the error variable. The total causal effects of the treatment variable on the potential mediators, the total causal effects of the potential mediators on the response variable, and the individual mediation effects are given in Table \ref{table: example}. Note that the equality of $\eta_5$ and $\eta_6$ represents the fact that the change in the total causal effect of $X_2$ on $X_7$ for knocking out $X_5$ from the causal graph is the same as the change for knocking out $X_6$.
}
}
\end{example}

\begin{table}[!ht]
\vspace{-0.1in}
\centering
\footnotesize
\caption{Individual mediation effects\tcr{: Illustration for the DAG in Example \ref{example: individual mediation effect}.}}
\label{table: example}
\begin{tabular}{c|c|c|c}
j & $\theta_{2j}$  & $\theta_{j7}$ & $\eta_j$  \\ \hline
$3$ & $0.7$ & $0.6 + 0.9 \times 1.1 \times 1.8$ & $1.6674$ \\ \hline
$4$ & $0$ & $0.8 + 1.3 \times 0.6 + 1.3 \times 0.9 \times 1.1 \times 1.8$ & $0$ \\ \hline
$5$ & $0.7 \times 0.9 + 1.2$ & $1.1 \times 1.8$ & $3.6234$ \\ \hline
$6$ & $0.7 \times 0.9  \times 1.1+ 1.2 \times 1.1$ & $1.8$ & $3.6234$
\end{tabular}
\vspace{-0.1in}
\end{table}
}

\begin{figure}[!t]
  \centering
  \footnotesize
       \begin{tikzpicture}[scale=0.8, transform shape]
        \node        (X2) at (0, 0)                      {$X_2$};
       \node        (X1) at (1.5, 1.5)        {$X_1$};
       \node       (X3) at (1.5, 0)             {$X_3$};
        \node         (X5) at (1.5, -1.5)            {$X_5$};
       \node       (X4)    at (3, 0.75)        {$X_4$};
     \node    (X6) at (3, -0.75)        {$X_6$};
     \node    (X7) at (4.5, 0)            {$X_7$};
        \draw[->, thick] (X1) edge  node[left=2pt, above=0.5pt]{$1.6$}  (X2);
          \draw[->, thick] (X2) edge  node[above=1pt]{$0.7$} (X3);
          \draw[->, thick] (X2) edge   node[left=1pt]{$1.2$} (X5);
         \draw[->, thick] (X3) edge  node[right=1pt]{$0.9$}  (X5);
      \draw[->, thick] (X1) edge  node[right=1pt, above=1pt]{$1.4$}  (X4);
           \draw[->, thick] (X4) edge  node[above=1pt]{$1.3$} (X3);
           \draw[->, thick] (X4) edge  node[above=1pt]{$0.8$} (X7);
           \draw[->, thick] (X3) edge  node[above=1pt]{$0.6$} (X7);
           \draw[->, thick] (X5) edge  node[below=1pt]{$1.1$} (X6);
           \draw[->, thick] (X6) edge  node[below=1pt]{$1.8$} (X7);
   \end{tikzpicture}
     \caption{\tcr{Example of a} weighted DAG representing the data generating process.}
\label{fig: example}
\vspace{-0.15in}
  \end{figure}


\subsection{Notations \tcr{and the `Faithfulness' Assumption}}\label{subsection: notation}

We denote the vector of potential mediators $(X_{t+1},\ldots,$ $X_{p-1})^T$ by $\mathbf{X}'$ and the corresponding subgraph of $\mathcal{G}_{0}$ by $\mathcal{G}_{0}'$ (obtained by deleting the nodes $\mathbf{X}_{\{1,\ldots, t\}}\cup X_p$ and the corresponding edges from $\mathcal{G}_0$). \tcr{Let $B_{\mathcal{G}_0'}$ be the sub-matrix of $B_{\mathcal{G}_0}$ that corresponds to $\mathbf{X}'$ and let $\boldsymbol{\epsilon}' = (\epsilon_{t+1},\ldots,\epsilon_{p-1})^T$.} Further, we denote the CPDAG representing the Markov equivalence class of $\mathcal{G}_{0}'$ by $\mathcal{C}_{0}'$ and the Markov equivalence class by $\MEC(\mathcal{C}_0')$.  We assume that the conditional distribution of $\mathbf{X}'$ given $\{X_1, \ldots, X_t\}$ is \emph{faithful} to $\mathcal{G}_0$. The faithfulness condition states that every independence constraint that holds in the distribution is encoded by $\mathcal{G}_0'$ (see, \tcr{e.g.,} 
Definition 3.8 of \cite{KollerFriedman09}). This assumption is a necessary condition for learning causal structures from observational data \citep{SpirtesEtAl00, Chickering02}, and we do \emph{not} need the faithfulness assumption when the underlying causal structure is known or given.

We will often treat sets as vectors and vice versa, where we consider an arbitrary ordering of the elements in a vector unless specified otherwise. For example, $(i,S,k)$ denotes a vector where the first element is $i$, the last element is $k$, but elements of the set $S$ are ordered arbitrarily in $(i,S,k)$. We denote the covariance matrix of $\mathbf{X}$ by $\Sigma_0$. For any set $S \subseteq \{1,\ldots,p\}$, we denote the corresponding random vector $\{X_r : r \in S\}$, \tcr{i.e. the restriction of $\mathbf{X}$ onto $S$,}  by $\mathbf{X}_S$. Further, we denote $\Cov(\mathbf{X}_{S_1},\mathbf{X}_{S_2})$ by $(\Sigma_0)_{S_1S_2}$. For simplicity, we denote $(\Sigma_0)_{\{i\}\{j\}}$ and $(\Sigma_0)_{\{i\}S}$ by $(\Sigma_0)_{ij}$ and $(\Sigma_0)_{iS}$ respectively.

We denote the $i$-th column of the $k \times k$ identity matrix by $e_{i,k}$. For $i \neq k$ and any set $S \subseteq \{1,\ldots,p\}\setminus \{i,k\}$, we denote the coefficient of $X_i$ in the linear regression of $X_k$ on $\mathbf{X}_{\{i\} \cup S}$ by $\beta_{ik | S}$ or by $\beta_{ik | \mathbf{X}_S}$. For simplicity, we denote $\beta_{ik |\emptyset} $ by $\beta_{ik}$. 
Note that $\beta_{ik | S} $ is \emph{model-free,} i.e. it is well-defined \emph{regardless} of whether or not the conditional expectation $\Exp[X_k\mid \mathbf{X}_{\{i\} \cup S}]$ is a linear function of $\{X_i\} \cup \mathbf{X}_S$, and in general, $\beta_{ik | S} = e_{1,|S|+1}^{T} ((\Sigma_0)_{(i,S)(i,S)})^{-1}(\Sigma_0)_{(i,S)k}$.


\tcr{For any vector $\bv = (v_j)_{j=1}^d\in \R^d$, for any $d \geq 1$, 
$\| \bv \|_r := (\sum_{j=1}^d |v_j|^r)^{1/r}$, for any $r \geq 1$, 
and $\|\bv\|_{\infty} := \max \{|v_j| : j = 1, \hdots, d\}$ denote the $L_r$ and $L_{\infty}$ norms of $\bv$, respectively.
For any matrix $\bM = [M_{ij}]_{i,j=1}^d \in \R^{d \times d}$, 
$\MLtwonorm{\bM} := {\sup}_{\Ltwonorm{\bv} \leq 1} \Ltwonorm{\bM v}$, $\MLinfnorm{\bM} := \underset{1 \leq i \leq d}{\max} \sum_{j=1}^d |M_{ij}|$ and $\Mmaxnorm{\bM} := \underset{1 \leq i,j \leq d}{\max} |M_{ij}|$ respectively denote the spectral norm, the matrix-$L_{\infty}$ norm and the maximum norm of $\bM$. Further, we denote any symmetric positive definite (p.d.) matrix $\bM \in \R^{d \times d}$ as $\bM \succ 0$, and its minimum and maximum eigenvalues (or singular values) as $\lambdamin(\bM) > 0$ and $\lambdamax(\bM) \equiv \MLtwonorm{\bM} > 0$, respectively.}


%


\section{Estimating Individual Mediation Effects}\label{section: estimation}
\tcr{We \tcR{begin with} 
a useful representation of the individual mediation effect\tcR{(s)} via regression coefficients \tcR{(arising from different regressions involving the mediator and/or its parent set).}}

\subsection{Individual Mediation Effects via Covariate Adjustments}\label{subsection: covariate adjustment}
{\color{black}
\tcr{Consider $X_i$ and $X_k$ such that $X_k$ is not a direct cause of $X_i$, that is, $X_k$ is not a parent of $X_i$ in the causal graph. Then the linearity assumption implies that the total causal effect of $X_i$ on $X_k$ can be expressed as the coefficient of $X_i$ in the linear regression of $X_k$ on $X_i \cup \mathbf{X}_S$ for an appropriately chosen set of covariates $\mathbf{X}_S \subseteq \mathbf{X} \setminus \{X_i, X_k\}$ \citep{MaathuisKalischBuehlmann09, NandyMaathuisRichardson17}. An example of such a covariate set is the \tcR{\emph{parent set}} $\mathbf{Pa}_{\mathcal{G}_0}(X_i)$. More generally, any set of variables satisfying Pearl's back-door criterion (see Definition 3.3.1 of \cite{Pearl00}) can be chosen as $\mathbf{X}_S$. Using this we can write \tcm{the individual mediation effects $\eta_j$'s} as a product of regression coefficients as follows.}

\begin{lemma}\label{lemma: covariate adjustment}
\tcr{Let $\mathbf{X}$ be as in Section \ref{subsection: setup}. 
Then, $\forall \; j = t+1, \ldots, p-1$,} 
\begin{align}
\label{eq: product formula}
& \tcr{\eta_j \; = \; \theta_{tj}\theta_{jp} \; = \; \beta_{tj | \mathbf{X}_{\{1,\ldots, t-1\}}}~  \beta_{jp | \mathbf{Pa}_{\mathcal{G}_0'}(X_j) \cup \mathbf{X}_{\{1,\ldots,t\}}}.} 
\end{align}
\end{lemma}
}

{\color{black}
\vspace{-0.1in}
\begin{example}\label{example: covariate adjustment}
\emph{
\tcr{
\tcm{To illustrate Lemma \ref{lemma: covariate adjustment},} we reconsider Example \ref{example: individual mediation effect} and let $\Sigma = \Cov(X)$. Using \eqref{eq: product formula}, the individual mediation effect $\eta_{\tcR{j}}$'s can be computed as in Table \ref{table: covariate adjustment example}. Furthermore, it is easy to verify that a \emph{naive method} that ignores the causal graph \tcm{among the mediators} 
 and computes the individual mediation effects as $\beta_{2j | \{1\}} \times \beta_{j7 | \{1,2\}}$ would be \emph{inaccurate} for $j = 3, 5, 6$.
}
}
\end{example}

\begin{table}[!ht]
\vspace{-0.1in}
\centering
\footnotesize
\caption{Individual mediation effects  via covariate adjustments for the DAG in \tcR{Example \ref{example: individual mediation effect}.}}
\label{table: covariate adjustment example}
\begin{tabular}{c|l}
j & $\eta_j = \theta_{2j}  \times \theta_{j7}$  \\ \hline
$3$ & $\beta_{23 | \{1\}} \times \beta_{37|\{1,2,4\}} = \mathbf{e}_{2,2}^T(\Sigma_{(1,2)(1,2)})^{-1} \Sigma_{(1,2)3} \times \mathbf{e}_{3,4}^T(\Sigma_{(1,2,3,4)(1,2,3,4)})^{-1} \Sigma_{(1,2,3,4)7}$  \\ \hline
$4$ & $\beta_{24 | \{1\}} \times \beta_{47|\{1,2\}} = \mathbf{e}_{2,2}^T(\Sigma_{(1,2)(1,2)})^{-1} \Sigma_{(1,2)4} \times \mathbf{e}_{3,3}^T(\Sigma_{(1,2,4)(1,2,4)})^{-1} \Sigma_{(1,2,4)7}$ \\ \hline
$5$ & $\beta_{25 | \{1\}} \times \beta_{57|\{1,2,3\}} = \mathbf{e}_{2,2}^T(\Sigma_{(1,2)(1,2)})^{-1} \Sigma_{(1,2)5} \times \mathbf{e}_{4,4}^T(\Sigma_{(1,2,3,5)(1,2,3,5)})^{-1} \Sigma_{(1,2,3,5)7}$ \\ \hline
$6$ & $\beta_{26 | \{1\}} \times \beta_{67|\{1,2,3\}} = \mathbf{e}_{2,2}^T(\Sigma_{(1,2)(1,2)})^{-1} \Sigma_{(1,2)6} \times \mathbf{e}_{4,4}^T(\Sigma_{(1,2,5,6)(1,2,5,6)})^{-1} \Sigma_{(1,2,5,6)7}$
\end{tabular}
\vspace{-0.1in}
\end{table}
}

\subsection{The MIDA Estimator}\label{subsection:MIDA-estimator}
{\color{black}
Our goal is to estimate $\eta_j=\theta_{tj}\theta_{jp}$ based on i.i.d. data from the distribution of $\mathbf{X}$, for $j=t+1,\ldots,p-1$. Note that if $\mathbf{Pa}_{\mathcal{G}_0'}(X_j)$ were \emph{known}, then we could estimate $\eta_j$ by plugging in the sample regression coefficients $\hat{\beta}_{tj | \mathbf{X}_{\{1,\ldots, t-1\}}}$ and $\hat{\beta}_{jp | \mathbf{Pa}_{\mathcal{G}_0'}(X_j)\cup \mathbf{X}_{\{1,\ldots,t\}}}$ in \eqref{eq: product formula}.

\tcr{\tcR{When} $\mathcal{G}_0'$ is \emph{unknown}, we need to estimate it from the data. However, a causal DAG is \tcR{(usually)} not identifiable \tcm{from} observational data \tcm{without further assumptions.} But we can estimate the CPDAG representing the corresponding Markov equivalence class (see Section \ref{subsection: Markov equivalence class of DAGs}). In particular, we \tcm{can} estimate the CPDAG $\mathcal{C}_0'$ that represents the Markov equivalence class of $\mathcal{G}_0'$. Consequently, $\theta_{jp}$ and $\eta_j$ are \tcR{also} not identifiable from observational data \tcm{(with unknown $\mathcal{G}_0'$).} Therefore, following the IDA approach of \cite{MaathuisKalischBuehlmann09} \tcm{and} \cite{NandyMaathuisRichardson17}, we \tcm{aim to} estimate the \tcm{following \emph{identifiable version} 
$\eta_j(\mathcal{C}_0')$ of $\eta_j$} defined as}
\begin{align*}
& \tcr{\eta_j(\mathcal{C}_0') \hspace{0.01in}\;\;\; := \; \beta_{tj | \mathbf{X}_{\{1,\ldots, t-1\}}}~\times~\mathrm{aver}(\Theta_{jp}(\mathcal{C}_0')), \quad \mbox{where}}\\
& \tcr{\Theta_{jp}(\mathcal{C}_0') \; := \; \{\beta_{jp | \mathbf{Pa}_{\mathcal{G}}(X_j) \cup \mathbf{X}_{\{1,\ldots,t\}}} : \mathcal{G} \in \MEC(\mathcal{C}_0') \} }
\end{align*}
\tcr{ is a \emph{multi-set} of possible causal effects of $X_j$ on $X_p$, and $\mathrm{aver}(A)$ \tcm{denotes} the \emph{average} of all numbers (respecting \tcR{any} multiple occurrences) in the multi-set $A$. We will empirically verify that $\eta_j(\mathcal{C}_0')$ serves as a reasonable proxy for $\eta_j$ in sparse high-dimensional settings (see Section \ref{section: simulations} \tcm{for more details}).}
}

In order to estimate \tcr{the CPDAG} $\mathcal{C}_0'$, we first remove the effect of $\mathbf{X}_{\{1,\ldots,t\}}$ on each potential mediator $X_j$ by replacing the data that corresponds to $X_j$ by the \emph{residuals} of the regression of $X_j$ on $\mathbf{X}_{\{1,\ldots,t\}}$. Then, we apply \tcR{any suitable} 
structure learning algorithm on th\tcR{is} transformed data for estimating $\mathcal{C}_0'$, \tcr{followed by estimating $\eta_j(\mathcal{C}_0')$.
The steps of our MIDA approach 
are formalized next in Algorithm \ref{algorithm: CPDAG estimation}. For any $i, k$ and $\mathbf{X}_S \subseteq \mathbf{X}$, let $\hat{\beta}_{ki | \mathbf{X}_{S}}$ denote the estimated regression coefficient of $X_i$ in the linear regression of $X_k$ on $\mathbf{X}_{\{i\}\cup S}$.} 
\begin{algorithm}[!h]
   \caption{MIDA}
   \label{algorithm: CPDAG estimation}
   \begin{algorithmic}[1]
   \REQUIRE $n$ i.i.d.\ observations of $\mathbf{X}$ (data)
   \ENSURE Estimates of $\eta_j(\mathcal{C}_0')$ for $j = t+1,\ldots,p-1$
     \STATE for $j=t+1,\ldots,p-1$, obtain the vector of residuals $\mathbf{r}_{j} = (r_{j}^{(1)},\ldots,r_{j}^{(n)})$ from the regression of $X_j$ on $\mathbf{X}_{\{1,\ldots,t\}}$;
   \STATE apply \tcr{\tcR{\emph{any}} suitable} structure learning algorithm (such as (AR)GES or PC) on the data $\{\mathbf{r}_{t+1},\ldots,\mathbf{r}_{p-1}\}$ to obtain an estimate $\hat{\mathcal{C}_0'}$ of the CPDAG $\mathcal{C}_0'$;
    \STATE for each $j = t+1,\ldots,p-1$, obtain a multi-set of possible causal effects $\hat{\Theta}_{jp}(\hat{\mathcal{C}_0'}) := \{\hat{\beta}_{jp | \mathbf{Pa}_{\mathcal{G}}(X_j)\cup \mathbf{X}_{\{1,\ldots,t\}}} : \mathcal{G} \in \MEC(\hat{\mathcal{C}}_0')\}$ based on the original data;\label{line: estimator}
   \RETURN $\hat{\eta}_j(\hat{\mathcal{C}_0'})  := \hat{\beta}_{tj | \mathbf{X}_{\{1,\ldots, t-1\}}}~\times~\mathrm{aver}(\hat{\Theta}_{jp}(\hat{\mathcal{C}_0'}))$ for $j = t+1,\ldots,p-1$. 
     \end{algorithmic}
\end{algorithm}


\vspace{-0.1in}
\begin{lemma}\label{lemma: sub-LSEM} Let $\mathbf{X}$ be as in Section \ref{subsection: setup} and let $\mathbf{X}'$, $\mathcal{G}_0'$, $B_{\mathcal{G}_0'}$ and $\boldsymbol{\epsilon}'$ be as in Section \ref{subsection: notation}. Then, the conditional expectation $\Exp[\mathbf{X}'| \mathbf{X}_{\{1,\ldots, t\}}]$ is linear in $\mathbf{X}_{\{1,\ldots, t\}}$. Further, $\mathbf{X}^{\dagger} := \mathbf{X}' - \Exp[\mathbf{X}'|\mathbf{X}_{\{1,\ldots, t\}}] $ satisfies: $\mathbf{X}^{\dagger} = B_{\mathcal{G}_0'}^T \mathbf{X}^{\dagger} + \epsilon'$, and the distribution of $\mathbf{X}^{\dagger}$ is faithful to $\mathcal{G}_0'$.
\end{lemma}

{\color{black}
\vspace{-0.1in}
\begin{lemma}\label{lemma: partial correlation}
\tcr{Let $\hat{\rho}_{ik|S}$ denote the sample partial correlation between $X_i$ and $X_k$ given $\mathbf{X}_S$
\tcm{and 
$\hat{\rho}_{ik|S}^{\dagger}$} 
 the sample partial correlation between $X_i^{\dagger}$ and $X_k^{\dagger}$ given $\mathbf{X}_S^{\dagger}$ computed based on the residuals $\mathbf{r}_{j}$'s defined in Algorithm \ref{algorithm: CPDAG estimation}. \tcm{Then, for all $i,k \in \{t+1,\ldots,p-1\}$, $i\neq k$ and $S \subseteq \{t+1,\ldots,p-1\}\setminus \{i, k\}$,}}
\[
\tcr{\hat{\rho}_{ik|S}^{\dagger2} \; = \; \hat{\rho}_{ik|S\cup\{1,\ldots,t\}}^2.}
\]
\end{lemma}
\tcr{\tcm{Lemmas \ref{lemma: sub-LSEM}\tcR{--}
\ref{lemma: partial correlation}} \tcm{justify} the estimation of $\mathcal{C}_0'$ from the residuals $\mathbf{r}_{j}$'s in Algorithm \ref{algorithm: CPDAG estimation}. This is because both \tcm{the} (AR)GES and PC \tcm{algorithms} are designed to estimate the CPDAG corresponding to a faithful DAG of an LSEM based on the squared sample partial correlations. In particular, (AR)GES sequentially adds and deletes edges based on the $\log(1 - \hat{\rho}_{ik|S}^2)$ values (see Lemma 5.1 and Section 6 of \citet{NandyHauserMaathuis16}) and the PC algorithm uses the $\log(\frac{1 + |\hat{\rho}_{ik|S}|}{1 - |\hat{\rho}_{ik|S}|})$ values (see Section 2.2.2 of \citet{KalischBuehlmann07a}).}
}

The main difference between $\hat{\Theta}_{jp}(\hat{\mathcal{C}_0'})$ \tcr{above} and the corresponding original IDA estimator of \citet{MaathuisKalischBuehlmann09} is that we \emph{always include} $\mathbf{X}_{\{1,\ldots, t\}}$ in the adjustment set, leveraging the fact that $X_j$ is not a direct cause of the treatment variable $X_t$ and the \tcr{confounders} 
$\mathbf{X}_{\{1,\ldots, t-1\}}$.
Further, note that computing $\MEC(\hat{\mathcal{C}}_0')$ can be computationally infeasible for a large CPDAG $\hat{\mathcal{C}}_0'$ \citep{MaathuisKalischBuehlmann09}. This computation bottleneck can be relieved by directly obtaining the multi-set of parent sets $\mathcal{PA}_{\hat{\mathcal{C}_0'}}(X_{j}) = \{\mathbf{Pa}_{\mathcal{G}}(X_j) : \mathcal{G} \in \MEC(\hat{\mathcal{C}}_0') \}$ from $\hat{\mathcal{C}}_0'$ without computing $\MEC(\hat{\mathcal{C}}_0')$ via Algorithm 3 of \citet{NandyMaathuisRichardson17}.
We note that the output of Algorithm 3 of \citet{NandyMaathuisRichardson17} and $\mathcal{PA}_{\hat{\mathcal{C}}_0}(X_{j})$ may not be the same multi-set, but Theorem 5.1 of \cite{NandyMaathuisRichardson17} guarantees that they are equivalent multi-sets in the sense that they have the same distinct elements and the ratio of the multiplicities of any two elements in the output of Algorithm 3 of \citet{NandyMaathuisRichardson17} equals the ratio of their multiplicities in $\mathcal{PA}_{\hat{\mathcal{C}}_0}(X_{j})$. Therefore,
using the output of Algorithm 3 of \citet{NandyMaathuisRichardson17} instead of $\MEC(\hat{\mathcal{C}}_0')$ makes no difference in obtaining $\mathrm{aver}(\hat{\Theta}_{jp}(\hat{\mathcal{C}_0'}))$. Thus, for simplicity, we can safely pretend that we use $\MEC(\hat{\mathcal{C}}_0')$ for computing $\mathrm{aver}(\hat{\Theta}_{jp}(\hat{\mathcal{C}_0'}))$ in the rest of the paper.

\section{Consistency in High-Dimensional Settings}\label{section: high-dimensional consistency}

We now consider an asymptotic scenario where the sample size $n$ and the number of potential mediators $(p-t-1)$ in $\mathbf{X}$ grows to infinity. (We consider $t \geq 1$ to be fixed.) We prove high-dimensional consistency of the MIDA estimators 
$\hat{\eta}_{\tcR{j}}(\hat{\mathcal{C}}_0')$ 
defined in Algorithm \ref{algorithm: CPDAG estimation}, 
when\tcR{ever} the CPDAG $\hat{\mathcal{C}}_0'$ is estimated consistently. We 
note \tcR{here} that \tcR{such} high-dimensional consistency in the CPDAG estimation holds under the following assumptions, and some additional assumptions (e.g., see \citet{KalischBuehlmann07a, NandyHauserMaathuis16}), when $\hat{\mathcal{C}}_0'$ \tcm{is} estimated using (AR)GES or PC (see Lemmas \ref{lemma: sub-LSEM} and \ref{lemma: partial correlation} and the subsequent discussion).


\begin{assumption}[LSEM with sub-Gaussian error variables]\label{assumption: LSEM} $\mathbf{X}_{}$ is generated from a linear SEM $(B_{\mathcal{G}_{0}},\boldsymbol{\epsilon}_{})$ with sub-Gaussian error variables satisfying $\underset{1\leq i \leq p}{\max} ||\epsilon_{i}||_{\psi_2} \leq C_1$ for some absolute constant $C_1>0$, where $||\cdot||_{\psi_2}$ denotes the sub-Gaussian norm \tcr{given in Definition \ref{def:psialnorm} (in \tcR{Appendix \ref{sec:technicaltools} of} the \hyperref[supp_mat]{Supplement}).} 
\end{assumption}

\begin{assumption}[High-dimensional setting]\label{assumption: high-dim} 
$p = O(n^{a})$ for some $a \geq 0$. 
\end{assumption}

\begin{assumption}[Sparsity condition]\label{assumption: sparsity} 
Let $q_{} := \max_{\tcr{t < j \leq p_{}-1}} |\mathbf{Adj}_{\mathcal{C}_{0}'}(X_{j})|$ \tcr{denote} the maximum degree in $\mathcal{C}_{0}'$.  Then, $q_{} = O(n^{1-b_1})$ for some $0 < b_1 \leq 1$. 
\end{assumption}

\begin{assumption}[Structure learning consistency]\label{assumption: structure learning consistency}
The estimated \tcR{CPDAG} $\hat{\mathcal{C}_0'}$ in Algorithm \ref{algorithm: CPDAG estimation} is a consistent estimator of $\mathcal{C}_0'$, i.e. $\Prob(\hat{\mathcal{C}}_{0}' \neq \mathcal{C}_{0}') \longrightarrow 0$.
\end{assumption}

\begin{assumption}[Bounds on the eigenvalues of covariance matrices]\label{assumption: eigenvalue condition} For any $(q+t+2) \times (q+t+2)$ principal submatrix $\Sigma_{}$ of $\Sigma_{0} = \Cov(\mathbf{X}_{})$, 
$$C_2 \; \leq \; 1/\MLtwonorm{\Sigma_{}^{-1}} \; \leq \; \MLtwonorm{\Sigma_{}^{}} \; \leq \; C_3,$$ for some absolute constants $C_2,C_3>0$, where $\MLtwonorm{\cdot}$ denotes the spectral norm \tcr{(as defined in Section \ref{subsection: notation})} and $q$ is \tcr{as defined in 
Assumption \ref{assumption: sparsity} above.}
\end{assumption}

\begin{theorem}[\tcR{Uniform consistency of MIDA}]\label{theorem: high-dimensional consistency}
Let $\hat{\eta}_{j}(\hat{\mathcal{C}}_{0}') = \hat{\beta}_{tj | \mathbf{X}_{\{1,\ldots, t-1\}}} \times\mathrm{aver}(\hat{\Theta}_{jp}(\hat{\mathcal{C}}_{0}'))$ denote the output of Algorithm
\ref{algorithm: CPDAG estimation}. \tcr{Then, under Assumptions \ref{assumption: LSEM}\tcR{--}\ref{assumption: eigenvalue condition}, we have:}
\begin{enumerate}
\item $\underset{t < j < p_{}}{\max}\left| \mathrm{aver}(\hat{\Theta}_{jp}(\hat{\mathcal{C}}_{0}')) - \mathrm{aver}(\Theta_{jp})\right| ~\overset{\Prob}{\longrightarrow}~ 0$, \; and
\item $ \underset{t < j < p_{}}{\max}\left| \hat{\eta}_{j}(\hat{\mathcal{C}}_{0}') - \eta_j(\mathcal{C}_{0}')\right| ~\overset{\Prob}{\longrightarrow}~ 0.$
\end{enumerate}
%
\end{theorem}
\noindent Note that the high dimensional consistency results \tcr{for} 
\tcr{(joint-)}IDA estimators were proven only for LSEMs with Gaussian errors \citep{MaathuisKalischBuehlmann09, NandyMaathuisRichardson17}. Theorem \ref{theorem: high-dimensional consistency} \tcr{thus} \emph{extends} the existing high-dimensional consistency results \tcr{for} 
IDA estimators to LSEMs with \emph{sub-Gaussian} errors. \tcR{Finally, it is worth noting again that the results above do \emph{not} depend on the nature of the CPDAG estimation \tcr{method} 
as long as it is consistent.}


{\color{black}
\subsection{Discussion on Assumption \ref{assumption: LSEM}}\label{sec:assumption_discussion}
\tcr{We demonstrate the adaptivity of the high-dimensional consistency result \tcm{above for} 
the MIDA estimator, with respect to some of the recent efforts in relaxing the linearity assumption for the IDA-type estimators \citep{NandyMaathuisRichardson17, FrotEtAl19}. We outline \tcm{here} the modifications needed to relax Assumption \ref{assumption: LSEM}, but we refrain from a mere repetition of the existing theoretical analyses 
\tcm{for the sake of} brevity.}

\par\smallskip
\noindent{\tcm{\emph{Relaxing linearity.}}}
\tcr{The linear sub-Gaussian setting is a key requirement for the high-dimensional inference \tcm{results for MIDA} presented in Section \ref{section: asymptotic distribution} \tcm{later}. However, the high-dimensional consistency result \tcm{given here}
can \tcm{indeed} be extended beyond the linear sub-Gaussian setting by combining some existing results \citep{HarrisDrton13, HanLiu17, NandyMaathuisRichardson17, NandyHauserMaathuis16, FrotEtAl19}. In particular, we can \tcm{establish} a high-dimensional consistency \tcm{result} under \tcm{only} the assumption \tcm{that} $\mathbf{X}$ follows a \emph{transelliptical distribution} (Definition 2.2 of \citet{HanLiu17}), \tcm{i.e.} 
$(f_1(X_1),\ldots,f_p(X_p))^T$ has an elliptical distribution \tcm{(e.g.\tcR{,} Gaussian)}
for some increasing (or decreasing) functions $f_1,\ldots,f_p$.} 
\tcr{When the distribution of $(f_1(X_1),\ldots,f_p(X_p))^T$ is multivariate Gaussian, the distribution of $\mathbf{X}$ is called \emph{nonparanormal}. Under the nonparanormal assumption, \citet{HarrisDrton13} proved a
high-dimensional consistency \tcm{result for} the so-called Rank PC algorithm, and \citet{NandyMaathuisRichardson17} extended this result to a modified version of the (joint-) IDA estimator. The high-dimensional consistency of (AR)GES under the nonparanormal distribution, \tcm{as well as} the more general transelliptical distribution, has been proved in \citet{NandyHauserMaathuis16, FrotEtAl19}.}

\tcr{A high-dimensional consistency \tcm{result for} 
  a modified version \tcm{`(M)IDA' of MIDA} under the transelliptical assumption can be obtained by combining the structure learning consistency result of \citet{FrotEtAl19} with the proofs of Theorem 7.1 of \citet{NandyMaathuisRichardson17} and our Theorem \ref{theorem: high-dimensional consistency}. The required modification \tcm{for (M)IDA} is to \tcm{simply} apply Algorithm \ref{algorithm: CPDAG estimation} on a transformed sample rank correlation matrix $\sin((\pi/2)\hat{T})$ instead of the sample covariance matrix $\hat{\Sigma}$ \tcm{itself}, where the sine function is applied element-wise and $\hat{T}$ is the Kendall's rank correlation matrix. To this end, note that the first three steps of Algorithm \ref{algorithm: CPDAG estimation} \tcm{are} equivalent to applying \tcm{the (AR)GES algorithm} on \tcR{the covariance matrix}}
\begin{align}\label{eq: conditional covariance}
&\tcR{\hat{\Sigma}_{\mathbf{X}_{\{t+1,\ldots,p-1\}}\mathbf{X}_{\{t+1,\ldots,p-1\}}} \; - \; 
\hat{\Sigma}_{\mathbf{X}_{\{t+1,\ldots,p-1\}}\mathbf{X}_{\{1,\ldots,t\}}}\hat{\Sigma}_{\mathbf{X}_{\{1,\ldots,t\}}\mathbf{X}_{\{1,\ldots,t\}}}^{-1}\hat{\Sigma}_{\mathbf{X}_{\{1,\ldots,t\}}\mathbf{X}_{\{t+1,\ldots,p-1\}}}.
}
\end{align}

\par\smallskip
\noindent\tcr{\tcm{\emph{\tcR{Handling} the case of hidden confounders.}} Finally, recall that the independence of the error variables in Definition \ref{definition: LSEM} corresponds to the no hidden \tcm{(or unmeasured)} confounder (that is a common cause of more than one observed variable) assumption. This assumption has been relaxed in \citet{FrotEtAl19}. In particular, the authors considered a setting with a few hidden variables that have a direct effect on many of the observed variables and derived a consistent method for estimating the underlying CPDAG corresponding to the observed variables. The main idea is to apply a low-rank plus sparse decomposition \citep{ChandrasekaranEtAl12} of the inverse covariance matrix to remove the influence of the hidden confounders before applying a structure learning method that assumes the absence of hidden confounders. Furthermore,
the authors proposed to use the estimated CPDAG and an estimate of the conditional covariance matrix given the hidden confounders to obtain IDA-type estimators of total causal effects under the linear sub-Gaussian assumption or the transelliptical assumption. The same techniques can be applied to the covariance matrix 
\tcR{in} \eqref{eq: conditional covariance} or the corresponding transformed rank correlation matrix to obtain a consistent MIDA estimator in the presence of a small number of highly influential hidden variables.}
}

\section{Linear Regression over Varying Subsets of High-Dimensional Covariates}\label{section: linear regression theory}

This section considers linear regression over varying subsets of high-dimensional covariates in a general setting, \tcr{and derives uniform bounds and
first order expansions for the resulting estimators.} \tcr{These} results \tcr{are} 
used in Section \ref{section: asymptotic distribution} to derive the asymptotic distributions of the \tcr{estimators} of interventional and mediation effects. 
While derived primarily for establishing the (uniform) asymptotic normality \tcr{and inferential tools for} 
our proposed estimators, these results are applicable far more generally to any setting involving linear regressions 
over varying (non-random) subsets of high-dimensional regressors, and may  be of independent interest. For notational simplicity and clarity of exposition, we therefore derive them under more general and standard notations where $Y$ and $\bX$ denote a generic response and a (high-dimen\tcR{s}ional) covariate vector, respectively. All our results here are non-asymptotic.

\paragraph*{Basic setup and definitions\tcR{.}}\label{susubsec:1:linreg} Let $\Dsc_n := \{ \bZ_i \equiv (Y_i, \bX_i)\}_{i=1}^n$ denote the observed data consisting of $n$ i.i.d. realizations of $\bZ := (Y, \bX)$, where $Y \in \R$, $\bX \in \R^p$ and $p \equiv p_n$ is allowed to diverge with $n$. \tcr{Neither $Y$ nor $\bX$ is needed to be centered (i.e. zero-mean) and/or continuous.} Let $\muY := \E(Y)$, $\bmu := \E(\bX) $ and $\bSigma := \E\{ (\bX - \bmu) (\bX - \bmu)^T\} \equiv \Cov(\bX)$, where we assume $\bSigma \succ \mathbf{0}$.  Further, let $\Ybar := n^{-1} \sum_{i=1}^n Y_i$, $\bXbar := n^{-1} \sum_{i=1}^n \bX_i$ and $\Jsc := \{1, \hdots, p\}$.

\par\smallskip
\tcr{\tcR{Let} $\Omega_{\Jsc}$ denote 
the collection of all possible subsets of $\Jsc$.  For any $S \in \Omega_{\Jsc}$ with $|S| = s \leq p$, and 
any vector $\bv = (v_i)_{i=1}^p \in \R^p$, let $\bv_S \in \R^s$ denote the restriction of $\bv$ onto $S$, i.e. for $S = \{i_1, \hdots, i_s\} \subseteq \Jsc$,
$\bv_S = (v_{i_j})_{j=1}^s$.} 
Let $\bXS$, $\{\bXSi\}_{i=1}^n$, $\bmuS$ and $\bXbarS$ respectively denote the restrictions of $\bX$, $\{\bX_i\}_{i=1}^n$, $\bmu$ and $\bXbar$ onto $S$. 
Let $\bSigmaS := \Cov(\bXS)$ and $\bSigmaSY := \Cov(Y,\bXS)$, and define: 
\begin{align*}
 & \tcr{\;\;\bSigmahatS  \;\; := \; \frac{1}{n} \sum_{i=1}^n (\bXSi - \bXbarS)(\bXSi - \bXbarS)^T, \;\; \bGammahatS  \; := \; (\bXbarS - \bmuS) (\bXbarS - \bmuS)^T,} \nonumber \\
 & \tcr{\bSigmahatSY  \; := \; \frac{1}{n} \sum_{i=1}^n (Y_i - \Ybar)(\bXSi - \bXbarS), \;\; \mbox{and} \;\; \bGammahatSY \; := \; (\Ybar -\muY)(\bXbarS - \bmuS).} \nonumber 
\end{align*}

\paragraph*{Varying subset linear regression estimator(s)\tcR{.}} 
Let $\Ssc \subseteq \Omega_{\Jsc}$ denote any collection of subsets of $\Jsc$ \tcr{with} 
$\max \{s := |S|: S \in \Ssc\} \leq \qn$, for some $\qn \equiv \qnS$ $\leq \min(n,p_n)$, and let $\Ln \equiv \LnS := |\Ssc|$. \tcr{We now
consider 
linear regression(s) of $Y$ on $\bXS$, for all $S \in \Ssc$, via the \emph{ordinary least squares} (OLS) {estimator}, $\bbetahatS$, defined, 
along with its corresponding \emph{target parameter}, $\bbetaS$, as follows.} 
\begin{align}
& \tcr{\bbetaS \; := \; \underset{\bbeta \in \R^s}{\arg \min} \; \E[\{ (Y - \muY) - (\bXS - \bmuS)^T \bbeta\}^2] \;\; \equiv \; \bSigmaS^{-1} \bSigmaSY, \;\; \mbox{and}} \label{eq:targetdef}  \\  
& \tcr{\bbetahatS \; := \;  \underset{\bbeta \in \R^s}{\arg \min} \; \frac{1}{n} \sum_{i=1}^n\{ (Y_i - \Ybar) - (\bXSi - \bXbarS)^T \bbeta\}^2 \;\; \equiv \; \bSigmahatS^{-1} \bSigmahatSY.} \label{eq:olsdef} 
\end{align}
Since we are only interested in the 
\tcr{regression coefficients}, we circumvent the need for any nuisance intercept terms by appropriately centering $Y$ and $\bXS$ in both (\ref{eq:targetdef}) and (\ref{eq:olsdef}). The existence and uniqueness of $\bbetaS$ and $\bbetahatS$ 
in (\ref{eq:targetdef})\tcR{--}(\ref{eq:olsdef}) are both guaranteed for any $S \in \Ssc$ since $\bSigma \succ \bzero$, so that $\bSigmaS$ is invertible, and $\bSigmahatS$ is invertible almost surely (a.s.) as $|S| \equiv s \leq n$. Further, note that throughout the formulations in (\ref{eq:targetdef})--(\ref{eq:olsdef}), we make \emph{no} assumptions on the existence of a true linear model between $Y$ and $\bXS$ for \emph{any} $S$. The target parameter $\bbetaS$ is well-defined \emph{regardless} of any such model assumptions and simply denotes the coefficients in the best (in the $L_2$ sense) linear predictor of $Y$ given $\bXS$. \tcr{Our framework is 
thus completely \emph{model free} in this sense.} 

\paragraph*{Decomposition of $(\bbetahatS - \bbetaS)$\tcR{.}} For notational simplicity, \tcr{define: $\bXtilS := \bXS - \bmuS$ and $\Ytil := Y - \muY$, the centered versions of $\bX$ and $Y$, and let $\bpsiS(\bZ):= \bXtilS (\Ytil - \bXtilS^{T} \bbetaS)$, where we note that $\E\{\bpsiS(\bZ)\}  =  \bzero$ by the definition of $\beta_S$ in \eqref{eq:targetdef}. Using the estimating equations (\ref{eq:targetdef})--(\ref{eq:olsdef}), it is then straightforward to show that $\bbetahatS - \bbetaS$, for any $S \in \Ssc$, satisfies a \emph{deterministic} decomposition:}
\begin{eqnarray}
 &&  \qquad \quad \tcr{\bbetahatS - \bbetaS \; = \; \frac{1}{n} \sum_{i=1}^n \bPsiS(\bZ_i) +  \bTnS + \bRnS}, \quad \mbox{where}  \label{eq:maindecomp} \\
 && \bPsiS(\bZ) \; := \; \bSigmaS^{-1} \bpsiS(\bZ) \; \equiv \;  \bSigmaS^{-1} \bXtilS (\Ytil - \bXtilS^{T} \bbetaS) \;\; \mbox{with} \;\; \tcr{\E\{\bPsiS(\bZ)\}  \tcR{\;  = \;} \bzero},  \nonumber \\
 && \tcr{ \bTnS \tcR{\; := \;}  \frac{1}{n} (\bSigmahatS^{-1} - \bSigmaS^{-1}) \sum_{i=1}^n \bpsiS(\bZ_i)  
 \;\; \mbox{and} \;\;  \bRnS   \tcR{\; := \;} \bSigmahatS^{-1} (\bGammahatS \bbetaS -  \bGammahatSY).} \nonumber
\end{eqnarray}

For a \emph{single} $S$ and under classical asymptotics (i.e. $s$ is fixed and $n \rightarrow \infty$), \tcr{standard results from $M$-estimation theory \citep[e.g.]{VanderVaartWellner96:EmpiricalProcessesBook,VanderVaart98} 
imply that under mild 
conditions,} $\bbetahatS$ is a $\sqrt{n}$-consistent and asymptotically normal (CAN) estimator of $\bbetaS$ and \tcr{admits an \emph{asymptotically linear expansion} (ALE): $\bbetahatS - \bbetaS = n^{-1}   \sum_{i=1}^n \bPsiS(\bZ_i)$ $+ o_{\P}(n^{-1/2})$, with an \emph{influence function} (IF): $\bPsiS(\bZ)$}, so that $\sqrt{n}(\bbetahatS - \bbetaS)$ 
is asymptotically normal with mean $\bzero$ and \tcr{variance:} $\Cov\{\bPsiS(\bZ)\}$. Further, even when $s \equiv |S|$ is allowed to diverge, \tcr{it is also well known (e.g., see 
\citet{Portnoy84, Portnoy85, Portnoy86, Portnoy88}) that under suitable regularity conditions and if $s = o(n)$, $\Ltwonorm{\bbetahatS - \bbetaS} = O_{\P}(\sqrt{s/n})$ 
and $\Ltwonorm{\bbetahatS - \bbetaS - n^{-1}   \sum_{i=1}^n \bPsiS(\bZ_i)}$ $= O_{\P}(s/n)$, so that whenever $s = o(\sqrt{n})$}, 
$\bbetahatS$ is a CAN estimator of $\bbetaS$ and admits an ALE with IF $\bPsiS(\bZ)$. However, these results are all asymptotic \tcr{in nature,} and more importantly, apply only to a single \tcr{set} $S$.

\paragraph*{\tcr{Contributions}\tcR{.}} \tcr{Our main \emph{challenges} lie in the fact}
 that we have a \emph{family} of estimators based on a collection of subsets $\{\bXS\}_{S \in \Ssc}$ of $\bX$, where $|\Ssc|$ itself is possibly large and further, for each $S \in \Ssc$, $\bXS$ may be high-dimensional with $s \leq q_n$ allowed to diverge with $n$. \tcr{Under 
such a 
setting, we aim} 
to provide inferential tools for our family of estimators $\{\bbetahatS\}_{S\in \Ssc}$ and their derived functionals. 
We achieve this by providing (uniform) ALEs for $\{\bbetahatS\}_{S\in \Ssc}$ \tcR{in Theorem \ref{linreg:main_thm1}}, whereby we control the remainder terms $\bTnS$ and $\bRnS$ in (\ref{eq:maindecomp}) \emph{uniformly over $S \in \Ssc$} based on non-asymptotic bounds  \tcr{for $\sup _{S \in \Ssc} \Ltwonorm{\bTnS + \bRnS}$ that establishes} 
their uniform convergence rates \tcR{(Remark \ref{linreg:rem_thm1})}. Note that the potentially diverging sizes of $\Ssc$ and each $S \in \Ssc$  necessitate such non-asymptotic analyses. Lastly, apart from the (second order) error terms $\bTnS$ and $\bRnS$, we also provide uniform convergence rates of the first order term: $n^{-1} \sum_{i=1}^n \bPsiS(\bZ_i)$ in (\ref{eq:maindecomp}) under the $L_2$ norm, thereby establishing the rate of $\sup_{S \in \Ssc} \| \bbetahatS - \bbetaS \|_2$. Further, for \tcR{\emph{linear functionals}} of $\{\bbetaS\}_{S \in \Ssc}$, we also provide results on $\sqrt{n}$-consistency and asymptotic normality 
for the corresponding linear functionals of the estimators $\{\bbetahatS\}_{S\in \Ssc}$ \tcR{(Remark \ref{remark: linear functionals})}. Such results would be useful for establishing our results in Section \ref{section: asymptotic distribution} regarding asymptotic distribution of the IDA based estimators.


\subsection{Uniform ALEs for OLS: Non-Asymptotic Bounds and Uniform Convergence Rates for All Terms in (\ref{eq:maindecomp})}\label{subsec:linreg}

We first state our main assumptions and define a few related quantities that will appear in our results. \tcr{We present our main result in Theorem \ref{linreg:main_thm1} below. Its proof (given in \tcR{Appendix} \ref{proof:main:thm1}) 
also involves two useful supporting lemmas, Lemma \ref{linreg:main_lem1} and Lemma \ref{linreg:main_lem2}, which may be of independent interest. These lemmas are also given in  \tcR{Appendix} \ref{proof:main:thm1}. 
} 

\begin{assumption}[Main assumptions and some definitions]\label{linreg:main_assmpn}
(i) We assume that \tcr{$\Ytil \equiv Y - \muY$} is sub-Gaussian and \tcr{$\bXtilS \equiv \bXS - \bmuS$} is sub-Gaussian uniformly in $S \in \Ssc$, \tcr{and also that
$\bSigmaS$ is well conditioned 
uniformly in $S \in \Ssc$.}
Specifically, for some constants $\sigmaY, \sigmaXSsc \in [0, \infty)$ \tcr{and $\lambdainfSsc, \lambdasupSsc \in (0,\infty)$,}
\begin{eqnarray}
&& \psitwonorm{Y - \muY} \; \leq \; \sigmaY, \;\;\; \underset{S \in \Ssc}{\sup} \; \psitwonormstar{\bXS - \bmuS} \; \leq \; \sigmaXSsc, \quad \mbox{and} \nonumber \\
&&  0 \; < \; \lambdainfSsc \; \leq \; \underset{S \in \Ssc}{\inf} \; \lambdamin(\bSigmaS) \; \leq \; \underset{S \in \Ssc}{\sup} \; \lambdamax(\bSigmaS) \; \leq \; \lambdasupSsc \; < \; \infty, \label{linreg:main_assmpn:eq1}
\end{eqnarray}
where $\psitwonorm{\cdot}$ and $\psitwonormstar{\cdot}$ denote the sub-Gaussian norms \tcr{as in Definitions 
\ref{def:psialnorm}\tcR{--}\ref{def:subgaussvect}, respectively.}
Let us further define the constant $\KSsc := C_1 \sigmaXSsc^2 \frac{\lambdasupSsc}{\lambdainfSsc} > 0$,
where $C_1$ is the same absolute constant as in Lemma \ref{lem:7:matconc} (and also same as the constant given in Theorem 4.7.1 and Exercise 4.7.3 of \citet{Vershynin18}). 
\par\medskip
\noindent 
(ii) Let $\bZS := (Y, \bXS)$, $\bnuS := \E(\bZS)$ and $\bXiS := \Cov(\bZS)$. Then, we also assume that $\bXiS$ is well-conditioned uniformly in $S \in \Ssc$. Specifically, for some constants $\lambdatilinfSsc, \lambdatilsupSsc \in (0, \infty)$,
\begin{equation*}
0 \; < \; \lambdatilinfSsc \; \leq \; \underset{S \in \Ssc}{\inf} \; \lambdamin(\bXiS)  \; \leq \; \underset{S \in \Ssc}{\sup} \; \lambdamax(\bXiS)  \; \leq  \; \lambdatilsupSsc \; < \; \infty. \label{linreg:main_assmpn:eq2}
\end{equation*}
Further, let \tcr{$\sigmatilZSsc := \sigmaY + \sigmaXSsc$} and define the constant $\KtilSsc :=  C_1 \sigmatilZSsc^2 \frac{\lambdatilsupSsc}{\lambdatilinfSsc} > 0$,
where $C_1 > 0$ is the same absolute constant as in part (i) above. 
\end{assumption}


\begin{theorem}[Uniform bounds and convergence rates for all the terms in (\ref{eq:maindecomp})]\label{linreg:main_thm1}
Consider any $\Ssc \subseteq \Omega_{\Jsc}$ with $|\Ssc| := \Ln \equiv \LnS $ and $\sup_{S \in \Ssc} |S| \leq \qn \equiv \qnS \leq \min(n,p_n)$, and suppose Assumption \ref{linreg:main_assmpn} holds. Let $\rn := \qn + \log \Ln$, $\rntil := \rn + 1$ and $\CSsc := \sqrt{2} \sigmaY \lambdainfSsc^{-1/2}$. For any $c > 0$, let $\bar{c} := c+ 1$ and define:
\begin{align*}
& \epsilon_{n,1}(c,\rn) \; := \; \bar{c}\KSsc \left( \sqrt{\frac{\rn}{n}} + \frac{\rn}{n} \right), \;\; \epsilon_{n,2}(c,\rn) \; := \; \bar{c}\KtilSsc \left( \sqrt{\frac{\rntil}{n}} + \frac{\rntil}{n} \right); \\
& \eta_{n,1} (c, \rn) \; := \; 32 \bar{c} \KSsc \frac{\rn}{n} + \frac{\lambdasupSsc}{n}, \;\; \eta_{n,2}(c, \rn) \; := \;  32 \bar{c} \KtilSsc \frac{\rntil}{n} + \frac{\lambdatilsupSsc}{n}; \;\; \mbox{and} \\
& \delta_{n}(c,\rn) \;\; := \; \bar{c} \KSsc^* \left( \sqrt{\frac{\rn}{n}} + \frac{33 \rn}{n} \right) + \frac{2}{n}\frac{\lambdasupSsc}{\lambdainfSsc^2},
\end{align*}
where $(\KSsc, \KtilSsc, \lambdasupSsc, \lambdainfSsc, \lambdatilsupSsc, \lambdatilinfSsc, \sigmaY)$ are as in Assumption \ref{linreg:main_assmpn} and $\KSsc^* := 2 \lambdainfSsc^{-2} \KSsc$. Further, let $c^* > 0$ be any constant that satisfies: 
\begin{equation*}
(c^*+1)\KSsc \left( \sqrt{\frac{\rn}{n}} + \frac{33 \rn}{n} \right)  + \frac{\lambdasupSsc}{n} \; \leq \; \frac{1}{2} \lambdainfSsc.  
\end{equation*}
(i) Then, for any such constant $c^* > 0$, and for any $c > 0$, we have the following bounds. With probability at least $1 - 8 \exp( - c\rn) - 4 \exp(-c^* \rn)$,
{\color{black}
\begin{align*}
\tcr{\underset{S \in \Ssc}{\sup} \Ltwonorm{\bTnS}}  & \tcr{~\leq~ \delta_{n}(c,\rn) \left\{ \epsilon_{n,1}(c,\rn)\CSsc  + \epsilon_{n,2}(c,\rn) \right\} \; \lesssim \; \mathbb{C}_{\Ssc} \bar{c}^2 \frac{\rn}{n}, \;\; \mbox{and}}  \nonumber \\
 \tcr{\underset{S \in \Ssc}{\sup} \Ltwonorm{\bRnS}} & \tcr{~\leq~ \{\delta_{n}(c,\rn) + \lambdainfSsc^{-1} \} \{ \eta_{n,1}(c,\rn)\CSsc  + \eta_{n,2}(c,\rn) \} \; \lesssim \;  \mathbb{C}_{\Ssc} \bar{c} \frac{\rn}{n}.}  \nonumber 
\end{align*}
\tcr{(ii) Further, for any $c > 0$, the first order term $n^{-1} \sum_{i=1}^n \bPsiS(\bZ_i)$ in (\ref{eq:maindecomp}) satisfies the following bound. 
With probability at least $1 - 4 \exp( - c\rn)$,}
\begin{align*}
\tcr{\underset{S \in \Ssc}{\sup} \Ltwonorm{\frac{1}{n} \sum_{i=1}^n \bPsiS(\bZ_i)}}  & \tcr{~\leq~ \lambdainfSsc^{-1} \left\{ \epsilon_{n,1}(c,\rn)\CSsc  + \epsilon_{n,2}(c,\rn) \right\} \; \lesssim \; \mathbb{C}_{\Ssc} \bar{c} \sqrt{\frac{\rn}{n}}.} \nonumber 
\end{align*}
}
Here, $\mathbb{C}_{\Ssc}$ denotes a generic constant (possibly different in each bound) depending only on $\Ssc$, and $`\lesssim'$ denotes inequality upto multiplicative constants.
\end{theorem}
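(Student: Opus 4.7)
The plan is to derive the three bounds by combining sub-Gaussian covariance/mean concentration applied to $\bXS$ and to the augmented vector $\bZS = (Y,\bXS)$ with the algebraic decomposition (\ref{eq:maindecomp}), making every intermediate bound \emph{uniform} in $S \in \Ssc$ via a single union bound. The passage from per-$S$ to uniform estimates converts the per-$S$ complexity $\qn$ into $\rn = \qn + \log \Ln$.

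\textbf{Step 1 (uniform sub-Gaussian concentration).} First I would invoke the sub-Gaussian sample-covariance inequality of Vershynin (Theorem 4.7.1 / Exercise 4.7.3, an $\varepsilon$-net matrix concentration result) to obtain $\MLtwonorm{\bSigmatilS - \bSigmaS} \leq \KSsc (\sqrt{\qn/n} + \qn/n)$ with deviation probability at most $2 \exp(-c\qn)$ for each fixed $S$. A union bound over $\Ssc$, together with the same argument applied to $\bZS$, then yields
\[
\sup_{S \in \Ssc} \MLtwonorm{\bSigmatilS - \bSigmaS} \leq \epsilon_{n,1}(c,\rn), \qquad \sup_{S \in \Ssc} \MLtwonorm{\bXitilS - \bXiS} \leq \epsilon_{n,2}(c,\rn),
\]
the latter automatically controlling $\Ltwonorm{\bSigmatilSY - \bSigmaSY}$ since $\bSigmaSY$ sits as an off-diagonal block of $\bXiS$. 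A parallel sub-Gaussian mean-vector concentration gives $\sup_{S \in \Ssc} \Ltwonorm{\bXbarS - \bmuS}^2 \lesssim \KSsc \rn/n$ and $|\Ybar - \muY|^2 \lesssim \sigmaY^2/n$ on the same event, which translate (using $\bGammahatS = (\bXbarS - \bmuS)(\bXbarS - \bmuS)^T$ and $\bGammahatSY = (\Ybar - \muY)(\bXbarS - \bmuS)$) into the displayed $\eta_{n,1}$ and $\eta_{n,2}$ estimates.

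\textbf{Step 2 (invertibility and assembly).} The identity $\bSigmahatS = \bSigmatilS - \bGammahatS$ with Step 1 gives $\sup_S \MLtwonorm{\bSigmahatS - \bSigmaS} \leq \eta_{n,1}$; the defining condition on $c^*$ forces this quantity below $\lambdainfSsc/2$, so Weyl's inequality yields $\lambdamin(\bSigmahatS) \geq \lambdainfSsc/2$ and hence $\MLtwonorm{\bSigmahatS^{-1}} \leq 2/\lambdainfSsc$ uniformly, while the perturbation identity $\bSigmahatS^{-1} - \bSigmaS^{-1} = \bSigmahatS^{-1}(\bSigmaS - \bSigmahatS) \bSigmaS^{-1}$ then delivers $\sup_S \MLtwonorm{\bSigmahatS^{-1} - \bSigmaS^{-1}} \leq \delta_n(c,\rn)$. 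Using the normal equation $\bSigmaSY = \bSigmaS \bbetaS$ the centred score rewrites as
\[
\frac{1}{n} \sum_{i=1}^n \bpsiS(\bZ_i) = (\bSigmatilSY - \bSigmaSY) - (\bSigmatilS - \bSigmaS) \bbetaS,
\]
and the bound $\Ltwonorm{\bbetaS} \leq \CSsc$ (immediate from $\bbetaS^T \bSigmaS \bbetaS \leq \E \Ytil^2 \leq 2\sigmaY^2$ combined with $\lambdamin(\bSigmaS) \geq \lambdainfSsc$) yields the claimed first-order-term bound after multiplying through by $\MLtwonorm{\bSigmaS^{-1}} \leq \lambdainfSsc^{-1}$. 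The $\bTnS$ bound then follows by pairing $\delta_n$ with $(\epsilon_{n,2} + \epsilon_{n,1}\CSsc)$, and the $\bRnS$ bound by pairing $\MLtwonorm{\bSigmahatS^{-1}} \leq \delta_n + \lambdainfSsc^{-1}$ with $(\eta_{n,2} + \eta_{n,1}\CSsc)$.

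\textbf{Main obstacle.} The principal technical point is achieving uniformity over $\Ssc$ without worse than a $\log \Ln$ penalty, and this is where the uniform sub-Gaussian norm assumptions on $\bXS$ and $Y$ are essential: they permit the standard $\varepsilon$-net argument on the unit sphere of $\R^{|S|}$ within each $S$ (producing the $\qn$ contribution to $\rn$), after which a crude union bound across $\Ssc$ (producing the $\log \Ln$ contribution) suffices, with no chaining across $\Ssc$ required. Everything else is algebraic bookkeeping driven by the three identities $\bSigmahatS = \bSigmatilS - \bGammahatS$, $\bSigmaSY = \bSigmaS \bbetaS$, and the matrix-inverse perturbation formula.
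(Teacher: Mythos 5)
Your proposal is correct and follows essentially the same architecture as the paper's proof: per-$S$ sub-Gaussian covariance concentration (Vershynin) for $\bXS$ and for the augmented vector $\bZS=(Y,\bXS)$, a union bound over $\Ssc$ producing $\rn=\qn+\log\Ln$, an inverse-perturbation step activated by the $c^*$ condition, and then the assembly of $\bTnS$, $\bRnS$ and the first-order term exactly as in the paper (including the bound $\sup_S\Ltwonorm{\bbetaS}\le\CSsc$ via the normal equations). The only genuinely different sub-steps are minor: for the centering terms the paper re-applies the same matrix-concentration lemma to the single ``observation'' $\bXbarS-\bmuS$ (resp.\ $\bZbarS-\bnuS$), whose $\psitwonorm{\cdot}^*$-norm is $O(\sigmaXSsc/\sqrt n)$ and whose covariance is $n^{-1}\bSigmaS$ — this is where the explicit $\lambdasupSsc/n$ and $\lambdatilsupSsc/n$ terms in $\eta_{n,1},\eta_{n,2}$ come from — whereas you bound $\Ltwonorm{\bXbarS-\bmuS}$ and $|\Ybar-\muY|$ directly; that route is valid and gives the same order, but to hold on the event of probability $1-O(\exp(-c\rn))$ your intermediate claim should read $|\Ybar-\muY|^2\lesssim\sigmaY^2\rn/n$ rather than $\sigmaY^2/n$ (the final order of $\eta_{n,2}$ is unaffected). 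Likewise, your Weyl-plus-resolvent-identity derivation of $\MLtwonorm{\bSigmahatS^{-1}-\bSigmaS^{-1}}\le 2\lambdainfSsc^{-2}\MLtwonorm{\bSigmahatS-\bSigmaS}$ reproduces the paper's dedicated perturbation lemma with the same constant. One small slip to fix: $\bSigmahatS=\bSigmatilS-\bGammahatS$ gives $\sup_S\MLtwonorm{\bSigmahatS-\bSigmaS}$ bounded by the \emph{sum} of the $\epsilon_{n,1}$-type and $\bGammahatS$-type bounds (which is precisely what the $c^*$ condition is calibrated against), not by $\eta_{n,1}$ alone; with that correction, and with explicit bookkeeping of the union-bound probabilities to recover the stated $1-8\exp(-c\rn)-4\exp(-c^*\rn)$, your argument is complete.
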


\begin{remark}\label{linreg:rem_thm1}
\emph{
The two bounds in result (i) of Theorem \ref{linreg:main_thm1} also imply, in particular, that with high probability,
\begin{equation*}
\underset{S \in \Ssc}{\sup} \Ltwonorm{(\bbetahatS - \bbetaS) - \frac{1}{n} \sum_{i=1}^n \bPsiS(\bZ_i)} \; \equiv \; \underset{S \in \Ssc}{\sup} \Ltwonorm{\bTnS + \bRnS} \;\; \lesssim \; \frac{\rn}{n},
\end{equation*}
\tcr{thereby establishing (non-asymptotically) that  the uniform (in $S \in \Ssc$) convergence rate (under the $L_2$ norm) of the second order terms in the ALE (\ref{eq:maindecomp}) of $\bbetahatS - \bbetaS$ is $O_{\P}(\rn/n)$.}
}

\emph{
Further, the bound in result (ii) also establishes (non-asymptotically) the uniform (in $S \in \Ssc$) convergence rate (under the $L_2$ norm) of the first order term in the ALE (\ref{eq:maindecomp}) to be $O_{\P}(\sqrt{\rn/n})$. \tcR{Consequently, it} 
establishes that $\sup_{S \in \Ssc} \Ltwonorm{\bbetahatS - \bbetaS} = O_{\P}(\sqrt{\rn/n} + \rn/n).$
}
\end{remark}

\begin{remark}[ALEs and asymptotic normality for linear functionals of $\{ \bbetahatS\}_{S \in \Ssc}$]\label{remark: linear functionals}
\emph{
Let $\ASsc := \{ \baS \in \R^s : S \in \Ssc\}$ \tcr{denote} any collection of (known) vectors with $\sum_{S \in \Ssc}\Ltwonorm{\baS} = O(1)$. Consider the \emph{\tcR{linear} functional} \tcR{of $\{ \bbeta_S\}_{S \in \Ssc}$} given by\tcR{:} $\beta(\ASsc) := \sum_{S \in \Ssc} \baS^T \bbetaS$\tcR{,}
 and its \tcR{corresponding} estimator\tcR{:} $\betahat(\ASsc) := \sum_{S \in \Ssc} \baS^T \bbetahatS$. Then, as a direct consequence of Theorem \ref{linreg:main_thm1}, $\betahat(\ASsc) - \beta(\ASsc)$ satisfies the following ALE\tcR{:}
\begin{align}
& \sqrt{n} \{ \betahat(\ASsc) - \beta(\ASsc) \} \;\; = \; \frac{1}{\sqrt{n}} \sum_{i=1}^n \xi_{\ASsc}(\bZ_i) + \Rsc_{n,\ASsc} \;\; =: \; \sqrt{n} \S_{n,\ASsc} + \Rsc_{n,\ASsc}, \label{eq:linfunctional:ale} \nonumber \\
& \quad \mbox{where} \;\; \xi_{\ASsc}(\bZ) \; := \; \sum_{S \in \Ssc} \baS^T \bPsiS(\bZ) \;\; \mbox{and} \;\; \Rsc_{n,\ASsc} \; := \; \sum_{S \in \Ssc} \baS^T(\bTnS + \bRnS), \nonumber \\
& \quad \mbox{with} \;\; |\Rsc_{n,\ASsc}| \; \leq \; \supS \Ltwonorm{\bTnS + \bRnS} \left( \sum_{S \in \Ssc}\Ltwonorm{\baS} \right) \; = \; O_{\P}\left( \frac{\rn}{\sqrt{n}}\right). \nonumber \qed
\end{align}
Thus, $\sqrt{n} \{ \betahat(\ASsc) - \beta(\ASsc) \}  = \sqrt{n} \S_{n,\ASsc} + o_{\P}(1)$, as long as $\rn = o(\sqrt{n})$ and $\sum_{S \in \Ssc}\Ltwonorm{\baS} = O(1)$. Note that $\S_{n,\ASsc}$ is an average of the centered i.i.d. random variables $\{\xi_{\ASsc}(\bZ_i)\}_{i=1}^n$. Hence,  $\sqrt{n} \S_{n,\ASsc}$, when appropriately scaled to have unit variance, is expected to converge to a $\mathcal{N}(0,1)$ distribution under suitable Lyapunov-type moment conditions on $\xi_{\ASsc}(\bZ)$. We characterize this more explicitly through a stronger non-asymptotic statement \tcr{as follows.} 
}

\vspace{0.1in}
\emph{
Let $\sigmaxiAS^2 := \E[\{\xi_{\ASsc}(\bZ)\}^2] \equiv \Var\{ \xi_{\ASsc}(\bZ)\}$ and \tcr{$\rhoxiAS := [\E\{ | \xi_{\ASsc}(\bZ)|^3 \}]^{1/3}$, so that} $0 < \sigmaxiAS$ $\leq \rhoxiAS$, and assume that $\sigmaxiAS = \Omega(1)$ and $\rhoxiAS = O(1)$, so that $\rhoxiAS/\sigmaxiAS = O(1)$ \tcr{(verifications of these conditions are discussed in the \hyperref[supp_mat]{Supplement}; see \tcR{Appendix} 
\ref{sec:remarkverify}).} 
Finally, let $F_{\S_{n,\ASsc}}(x) := \P(\sqrt{n} \S_{n,\ASsc}/\sigmaxiAS \leq x) $ 
denote the cumulative distribution function (CDF) of $\sqrt{n} \S_{n,\ASsc}/\sigmaxiAS$, and let $\Phi(x) := \P(Z \leq x)$, 
where $Z \sim \mathcal{N}(0,1)$, denote the standard normal CDF, \tcr{for all $x \in \R$}. Then, the Berry-Esseen theorem \citep{Shevtsova11} implies:
\begin{equation*}
\underset{x \in \R}{\sup} \left| F_{\S_{n,\ASsc}}(x) - \Phi(x) \right| \; \leq \; \tcr{0.48\rhoxiAS^3/(\sqrt{n}\sigmaxiAS^3)} \; = \; O(n^{-1/2}). 
\vspace{-.05in}
\end{equation*}
\tcr{As a consequence, we also have:} $\tcr{\sqrt{n} \S_{n,\ASsc}/\sigmaxiAS}  \convd  \mathcal{N}(0,1)$ \tcr{as $n \rightarrow \infty$.} \qed
}

\vspace{0.05in}
\emph{
\tcr{Therefore,} as long as $\rn = o(\sqrt{n})$, $\sum_{S \in \Ssc}\Ltwonorm{\baS} = O(1)$, $\sigmaxiAS = \Omega(1)$ and $\rhoxiAS = O(1)$, we have:  $\sqrt{n} \{ \betahat(\ASsc) - \beta(\ASsc) \}/\sigmaxiAS  = \sqrt{n} \S_{n,\ASsc}/\sigmaxiAS + o_{\P}(1)$ and $\sqrt{n} \S_{n,\ASsc}/\sigmaxiAS \convd  \mathcal{N}(0,1)$. Invoking Slutsky's theorem, we \tcr{finally} have: $\sqrt{n} \{ \betahat(\ASsc) - \beta(\ASsc) \}/\sigmaxiAS \convd \mathcal{N}(0,1)$. Furthermore, for any consistent estimator $\widehat{\sigma}_{\xi_{\ASsc}}$ of $\sigmaxiAS$, it also holds, via another application of Slutsky's theorem, that $\sqrt{n} \{ \betahat(\ASsc) - \beta(\ASsc) \}/ \widehat{\sigma}_{\xi_{\ASsc}} \convd \mathcal{N}(0,1)$ \tcr{as $n \rightarrow \infty$.} \qed
}

\vspace{0.05in}
\emph{\tcr{Lastly, as mentioned above, verification of the moment conditions: $\rhoxiAS = O(1)$ and $\sigmaxiAS = \Omega(1)$ is discussed in \tcR{Appendix} 
\ref{sec:remarkverify} of the \hyperref[supp_mat]{Supplement}, where we provide fairly mild and general sufficient conditions for both to hold.}}

\end{remark}

\tcr{\tcm{As we conclude,} it is worth mentioning that some results on `uniform-in-model' bounds, similar in flavor to those presented in this section, 
were also obtained independently 
in the recent work of \citet{Kuchibhotla18} on post-selection inference in linear regression, although their results are not directly comparable to ours. 
\tcm{Their analysis is targeted towards more general settings, but their results are more involved and less tractable.} Our approach, on the other hand, is simpler and the results are more explicit and ready-to-use for application purposes. Further, our results automatically account for \tcm{any} data-dependent centering of $Y$ and $\bX$, unlike theirs, and the bounds \tcm{in} our main result (i.e. Theorem \ref{linreg:main_thm1}) are also more flexible in the sense that they directly involve the cardinality of $\Ssc$, as opposed to their bounds 
which generally aim at a worst-case analysis with $\Ssc$ assumed to include all subsets of $\bX$ having cardinality bounded by some $k \leq p$. Our bounds are therefore adaptive in $|\Ssc|$ and lead to sharper rates when $|\Ssc|$ is not too large (or at least not growing as fast as the worst case) which is often the case in practice.} 



\section{\tcr{Asymptotic Properties in High-Dimensional Settings: Inference for MIDA}}\label{section: asymptotic distribution}
Using the results from Section \ref{section: linear regression theory}, \tcr{we now develop inferential tools for MIDA (these also apply generally to any IDA-type estimator).} We \tcr{first} prove asymptotic linearity of MIDA 
under Assumptions \ref{assumption: LSEM}--\ref{assumption: eigenvalue condition} \tcR{in Section \ref{subsection:asymptotic-linearity} (Theorem \ref{theorem: asymptotic linearity})}, \tcr{and then} 
derive \tcr{its} asymptotic normality under some additional assumptions \tcR{in Section \ref{subsection:asymptotic-normality} (Corollaries \ref{corollary: asymptotic normality of IDA}--\ref{corollary: asymptotic distribution of MIDA} and Theorem \ref{theorem: asymptotic distribution})}, \tcr{followed by \tcR{construction of} confidence intervals, hypothesis tests \tcR{($p$-values)} etc. for the mediation effects \tcR{in Section \ref{subsection: inference}}.} 
We emphasize that the results derived below do \emph{not} depend on the nature of the CPDAG estimation \tcr{method} 
as long as it's consistent (Assumption \ref{assumption: structure learning consistency}).

\subsection{Asymptotic Linearity}\label{subsection:asymptotic-linearity}

%
%

For $j \in \{t+1,\ldots,p-1\}$, we define
\[
E_{n,jp}(\hat{\mathcal{C}}_{0}', \mathcal{C}_{0}') \; := \; \mathrm{aver}(\hat{\Theta}_{jp}(\hat{\mathcal{C}}_{0}')) - \mathrm{aver}(\hat{\Theta}_{jp}(\mathcal{C}_{0}')), \;\; \tcr{\mbox{where}}
\]
$\hat{\Theta}_{jp}(\mathcal{C}_{0}') := \{\hat{\beta}_{jp | \mathbf{Pa}_{\mathcal{G'}}(X_{j})\cup\mathbf{X}_{\{1,\ldots,t\}}} : \mathcal{G'} \in \MEC(\mathcal{C}_{0}')\}$. Since $E_{n,jp}(\hat{\mathcal{C}}_{0}', \mathcal{C}_{0}')  = 0$ whenever $\hat{\mathcal{C}}_{0}'$ and $\mathcal{C}_{0}'$ are identical, Assumption \ref{assumption: structure learning consistency} implies that for any $\epsilon > 0$ and for \emph{any} non-negative (\tcr{possibly} diverging) sequence $\{a_n\}$ (e.g.\tcR{,} $a_n = n$),
\[
\Prob\left(a_n \left|E_{n,jp}(\hat{\mathcal{C}}_{0}', \mathcal{C}_{0}') \right| > \epsilon \right) \; \leq \; \Prob(\hat{\mathcal{C}}_{0}' \neq \mathcal{C}_{0}') \; \longrightarrow \; 0 \quad \tcr{\mbox{as} \;\; n \rightarrow \infty,} 
\]
so that $E_{n,jp}(\hat{\mathcal{C}}_{0}', \mathcal{C}_{0}') = o_{\P}(a_n^{-1})$ \tcr{for any $t+1 \leq j \leq p-1$.}
This result allows us to use the uniform non-asymptotic theory developed in Section \ref{section: linear regression theory} for linear regression over \emph{non-random} subsets of high-dimensional covariates, since
\begin{align*}
& \mathrm{aver}(\hat{\Theta}_{jp}(\hat{\mathcal{C}}_{0}')) - \mathrm{aver}(\Theta_{jp}(\mathcal{C}_{0}')) \\
& \quad
= \;  E_{n,jp}(\hat{\mathcal{C}}_{0}', \mathcal{C}_{0}') ~+~ \left\{
\mathrm{aver}(\hat{\Theta}_{jp}(\mathcal{C}_{0}')) - \mathrm{aver}(\Theta_{jp}(\mathcal{C}_{0}'))\right\}.
\end{align*}

To present our results, we \tcr{first} define the residual of the linear regression of $X_{i}$ on $S \subseteq \{1,\ldots,p\} \setminus \{i\}$, \tcr{for any $1 \leq i \leq p$,} as
\[
R_{i|S} \; := \; X_{i} - \mu_i  - (\Sigma_{0})_{iS} \left[ (\Sigma_{0})_{SS} \right]^{-1} (\mathbf{X}_{S} - \boldsymbol\mu_S).
\]

Note that $(\Sigma_{0})_{iS} \left[ (\Sigma_{0})_{SS} \right]^{-1}$ is the vector of regression coefficients in the linear regression of $X_{i}$ on $\mathbf{X}_{S}$. For $j \in \{t+1,\ldots,p-1 \}$, we \tcr{now} define
\[
Z_{jp} \; := \; \frac{1}{L_{j}}\sum_{\ell=1}^{L_{j}} \mathbf{e}_{1,|S_{j\ell}|}^T \left((\Sigma_{0})_{S_{j\ell}S_{j\ell}} \right)^{-1}  (\mathbf{X}_{S_{j\ell}} - \boldsymbol\mu_{S_{j\ell}}) R_{p|S_{j\ell}}, \;\; \tcr{\mbox{where}}
\]
$\{\mathbf{X}_{S_{j1}},\ldots,\mathbf{X}_{S_{jL_{j}}}\}$ is the multi-set of vectors $\{ (X_{j}, \mathbf{X}_{\{1,\ldots,t\}}, \mathbf{Pa}_{\mathcal{G'}}(X_{j}))^T : \mathcal{G'} \in \MEC(\mathcal{C}_{0}') \}$ containing $L_{\mathrm{distinct}, j}$ distinct elements. Further, we define
\[
Z_{tj} ~:=~ \mathbf{e}_{1,t}^T \left((\Sigma_{0})_{\{1,\ldots,t\}{\{1,\ldots,t\}}} \right)^{-1}  (\mathbf{X}_{\{1,\ldots,t\}} - \boldsymbol\mu_{\{1,\ldots,t\}}) R_{j|\{1,\ldots,t\}}.
\]

For any random variable (or vector) $Z$, we will denote its $n$ i.i.d.\ copies by $Z^{(1)}, \ldots, Z^{(n)}$. Finally, we define \tcr{$q_{j} := |\mathbf{Adj}_{\mathcal{C}_{0}'}(X_{j})|$} \tcr{for $t+1 \leq j \leq p-1$}. Note that by Assumption \ref{assumption: sparsity}, we have $q_j \leq q = O(n^{1-b_1})~\text{for some $0 < b_1 \leq 1$}.$
\begin{theorem}[\tcR{Asymptotic linearity of $\mathrm{aver}(\hat{\Theta}_{jp}(\hat{\mathcal{C}}_{0}'))$ and $\hat{\eta}_j(\hat{\mathcal{C}}_0')$}]\label{theorem: asymptotic linearity}
Under Assumptions \ref{assumption: LSEM}\tcR{--}\ref{assumption: eigenvalue condition}, we have \tcr{for any $t < j < p$,}
\begin{align}\label{eq: asymptotic decomposition}
& \mathrm{aver}(\hat{\Theta}_{jp}(\hat{\mathcal{C}}_{0}')) - \mathrm{aver}(\Theta_{jp}(\mathcal{C}_{0}')) \nonumber \\
& \qquad \qquad = \; E_{n,jp}(\hat{\mathcal{C}}_{0}', \mathcal{C}_{0}') + \frac{1}{n}\sum_{r = 1}^{n} Z_{jp}^{(r)} + O_{\Prob}\left(\frac{q_{j} + \log(L_{\mathrm{distinct}, j})}{n}\right), \;\;\; \text{and} \nonumber \\
&\hat{\eta}_j(\hat{\mathcal{C}}_0') - \eta_j(\mathcal{C}_0') \;\; = \; \hat{\theta}_{tj}~E_{n,jp}(\hat{\mathcal{C}}_{0}', \mathcal{C}_{0}')  \nonumber \\
& \quad + \; \frac{1}{n}\sum_{r = 1}^{n} \left\{ \theta_{tj}Z_{jp}^{(r)} +  \mathrm{aver}(\Theta_{jp}(\mathcal{C}_{0}')) Z_{tj}^{(r)} \right\} \; + \; \left(\frac{1}{n} \sum_{r = 1}^{n}Z_{jp}^{(r)} \right) \left(\frac{1}{n} \sum_{r = 1}^{n}Z_{tj}^{(r)} \right) \nonumber \\
&  \quad + \; \hat{\theta}_{tj}~O_{\Prob}\left(\frac{q_{j} \; + \; \log(L_{\mathrm{distinct}, j})}{n}\right) \; + \; \mathrm{aver}(\hat{\Theta}_{jp}(\mathcal{C}_{0}'))~O_{\Prob}\left(\frac{1}{n}\right). \nonumber
\end{align}
\tcr{Note that all sums above are zero-mean i.i.d. sums as $\E(Z_{jp}) = \E(Z_{tj}) = 0$.}
\end{theorem}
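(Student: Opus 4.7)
The strategy is to first peel off the CPDAG-misestimation error via Assumption~\ref{assumption: structure learning consistency}, so that the remaining term depends only on the \emph{true} CPDAG $\mathcal{C}_0'$ and hence on a \emph{deterministic} collection of subsets; then invoke the uniform asymptotic linear expansion developed in Section~\ref{section: linear regression theory} to obtain the stated first decomposition; and finally, combine with an expansion of $\hat\theta_{1j}$ via a product-rule argument to get the second decomposition. The initial reduction is simply the identity
\begin{align*}
\mathrm{aver}(\hat\Theta_{jp}(\hat{\mathcal{C}}_0')) - \mathrm{aver}(\Theta_{jp}(\mathcal{C}_0')) \; = \; E_{n,jp}(\hat{\mathcal{C}}_0',\mathcal{C}_0') \; + \; \bigl\{\mathrm{aver}(\hat\Theta_{jp}(\mathcal{C}_0')) - \mathrm{aver}(\Theta_{jp}(\mathcal{C}_0'))\bigr\},
\end{align*}
where the bracketed term depends on fixed subsets only and is amenable to Section~\ref{section: linear regression theory}.

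Next, I will apply Remark~\ref{remark: linear functionals} (which builds on Theorem~\ref{linreg:main_thm1}) with response $Y = X_p$, the deterministic collection $\Ssc = \{S_{j1},\ldots,S_{jL_{\mathrm{distinct},j}}\}$ of distinct subsets, and weights $a_{S_{j\ell}} := (m_\ell/L_j)\, e_{1,|S_{j\ell}|}$, where $m_\ell$ denotes the multiplicity of $S_{j\ell}$ in the MEC multi-set. These weights satisfy $\sum_\ell a_{S_{j\ell}}^T \bbeta_{S_{j\ell}} = \mathrm{aver}(\Theta_{jp}(\mathcal{C}_0'))$ and $\sum_\ell \Ltwonorm{a_{S_{j\ell}}} = L_j^{-1} \sum_\ell m_\ell = 1 = O(1)$. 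The hypotheses of Theorem~\ref{linreg:main_thm1}, i.e. Assumption~\ref{linreg:main_assmpn}, follow from our set-up: Assumption~\ref{assumption: eigenvalue condition 2} gives uniform extremal eigenvalue control of the relevant $(\Sigma_0)_{S_{j\ell}}$ and their augmentations with $X_p$, since $|S_{j\ell}| \leq q_j + 2 \leq q + 2$; and Assumption~\ref{assumption: LSEM} together with the LSEM representation $\mathbf{X} - \boldsymbol\mu = (I - B_{\mathcal{G}_0})^{-T}\boldsymbol\epsilon$ yields uniformly bounded vector sub-Gaussian norms. Using the explicit formula $\bPsi_S(\mathbf{Z}) = \bSigma_S^{-1}(\mathbf{X}_S - \boldsymbol\mu_S)(Y - \mu_Y - (\mathbf{X}_S - \boldsymbol\mu_S)^T \bbeta_S)$ together with the residual identity $(X_p - \mu_p) - (\mathbf{X}_{S_{j\ell}} - \boldsymbol\mu_{S_{j\ell}})^T \bbeta_{S_{j\ell}} = R_{p|S_{j\ell}}$ (which follows directly from the definition of the population OLS coefficients), the influence function $\sum_\ell a_{S_{j\ell}}^T \bPsi_{S_{j\ell}}(\mathbf{Z})$ delivered by Remark~\ref{remark: linear functionals} collapses to exactly $Z_{jp}$. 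Since $r_n \lesssim q_j + \log L_{\mathrm{distinct},j}$ here, the Remark~\ref{remark: linear functionals} remainder is $O_{\Prob}((q_j + \log L_{\mathrm{distinct},j})/n)$, completing the first expansion.

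For the second expansion, I write
\begin{align*}
\hat\eta_j(\hat{\mathcal{C}}_0') - \eta_j(\mathcal{C}_0') \; = \; \hat\theta_{1j}\bigl\{\mathrm{aver}(\hat\Theta_{jp}(\hat{\mathcal{C}}_0')) - \mathrm{aver}(\Theta_{jp}(\mathcal{C}_0'))\bigr\} + (\hat\theta_{1j} - \theta_{1j})\, \mathrm{aver}(\Theta_{jp}(\mathcal{C}_0')),
\end{align*}
substitute the expansion just proved for the first curly brace, and expand $\hat\theta_{1j} - \theta_{1j} = n^{-1}\sum_r Z_{1j}^{(r)} + O_{\Prob}(1/n)$ by applying Theorem~\ref{linreg:main_thm1} with $S = \{1\}$ (so $r_n = 1$). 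Rewriting $\hat\theta_{1j}\cdot (n^{-1}\sum_r Z_{jp}^{(r)}) = \theta_{1j}(n^{-1}\sum_r Z_{jp}^{(r)}) + (n^{-1}\sum_r Z_{1j}^{(r)})(n^{-1}\sum_r Z_{jp}^{(r)}) + O_{\Prob}(1/n)\cdot(n^{-1}\sum_r Z_{jp}^{(r)})$ and then absorbing the lower-order cross terms into the two $O_{\Prob}$ remainders (using $n^{-1}\sum_r Z_{jp}^{(r)} = O_{\Prob}(n^{-1/2})$ and $\mathrm{aver}(\hat\Theta_{jp}(\mathcal{C}_0')) - \mathrm{aver}(\Theta_{jp}(\mathcal{C}_0')) = O_{\Prob}(n^{-1/2})$) produces precisely the stated form.

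The main obstacle I anticipate is the careful verification of Assumption~\ref{linreg:main_assmpn} in our context, specifically the uniform bound on the \emph{vector} sub-Gaussian norm $\psitwonormstar{\mathbf{X}_{S_{j\ell}} - \boldsymbol\mu_{S_{j\ell}}}$, which is not automatically inherited from coordinate-level sub-Gaussianity; the argument I have in mind exploits the LSEM representation $\mathbf{X} - \boldsymbol\mu = (I - B_{\mathcal{G}_0})^{-T} \boldsymbol\epsilon$, the independence and sub-Gaussianity of $\boldsymbol\epsilon$ (Assumption~\ref{assumption: LSEM}), and the spectral-norm control of $(I - B_{\mathcal{G}_0})^{-T}$ restricted to the relevant coordinates implied by Assumption~\ref{assumption: eigenvalue condition 2}. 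Beyond that, the remaining accounting — producing $\hat\theta_{1j}$ (rather than $\theta_{1j}$) as the lead factor of $E_{n,jp}$ and $\mathrm{aver}(\hat\Theta_{jp}(\mathcal{C}_0'))$ (rather than its population version) as the lead factor of the final $O_{\Prob}(1/n)$ remainder — is fiddly but routine.
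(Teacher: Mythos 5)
Your proposal is correct and follows essentially the same route as the paper: the same algebraic peeling-off of $E_{n,jp}(\hat{\mathcal{C}}_0',\mathcal{C}_0')$, the same application of Theorem \ref{linreg:main_thm1} and Remark \ref{remark: linear functionals} with $Y=X_p$, the distinct parent sets as the deterministic collection $\Ssc$, and weights $(m_\ell/L)\,e_{1,|S_{j\ell}|}$ (with the sub-Gaussian/eigenvalue verification being exactly Lemma \ref{lemma: bounded psi2 norm}), followed by the same product-rule expansion combined with $\hat\theta_{1j}-\theta_{1j}=n^{-1}\sum_r Z_{1j}^{(r)}+O_{\Prob}(1/n)$ for the second display. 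The only cosmetic differences are that your initial identity needs no appeal to Assumption \ref{assumption: structure learning consistency} (it is purely algebraic, and the theorem does not assume it) and that your $a_n(b_n-b)+b(a_n-a)$ bookkeeping is a trivially equivalent variant of the paper's $a(b_n-b)+b(a_n-a)+(a_n-a)(b_n-b)$ identity.
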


\subsection{Asymptotic Normality}\label{subsection:asymptotic-normality}
In order to establish the asymptotic normality of the estimator of the total causal effect and \tcr{that} 
of the individual mediation effects, we impose the following stronger sparsity condition.

\begin{assumption}[Sparsity condition]\label{assumption: stronger sparsity}
Let $q_j$ and $L_{\mathrm{distinct, j}}$ be as above, \tcr{for any $t+1 \leq j \leq p-1$.} Then, we assume \tcr{that}
$$
n^{-1/2}\{q_{j}+ \log(L_{\mathrm{distinct}, j})\} \; \longrightarrow \; 0, \;\; \tcr{\mbox{as} \;\; n \rightarrow \infty.}
$$
\end{assumption}

We note that such stronger sparsity assumptions of a similar flavor \tcr{are} 
 frequently adopted in the literature, albeit for different but related problems, for deriving asymptotic normality results and confidence intervals in high-dimensional settings 
 \citep{Portnoy88,VandeGeer14,Javanmard14,Zhang14}. 
 To compare Assumption \ref{assumption: stronger sparsity} with the sparsity condition given
in Assumption \ref{assumption: sparsity}, note that $q = O(n^{1 - b_1})$ for some $1/2 < b_1 \leq 1$ implies that the Assumption \ref{assumption: stronger sparsity} holds, since \tcr{$\forall \; j$,}
\[
q_{j}+ \log(L_{\mathrm{distinct}, j}) \; \leq \; q_j + \log(2^{q_j}) \; \leq \; (1 + \log2)\underset{\tcr{t+1 \leq j \leq p-1}}{\max} q_j \; = \; (1 + \log2)q.
\]
Additionally, we make the following assumption.
\begin{assumption}[\tcr{Non-degenerate} conditional distributions]\label{assumption: conditional variance} The conditional variances satisfy the following lower bounds: for \tcr{any $t < j < p$,} 
\begin{enumerate}
\item $\Var(X_p \mid \mathbf{Adj}_{\mathcal{C}_0'}(X_j) \cup \mathbf{X}_{\{1,\ldots,t\}} \cup \{X_j\}) > v,~ \text{and}$
\item $\Exp\left[\Var(X_j \mid \mathbf{Adj}_{\mathcal{C}_0'}(X_j) \cup \mathbf{X}_{\{1,\ldots,t\}})\right] > v,$ \;\; \tcr{for some constant $v > 0$.}
\end{enumerate}
\end{assumption}

We note that Assumption \ref{assumption: conditional variance} resembles Assumption (F) of \cite{MaathuisKalischBuehlmann09}. Further, note that Assumption \ref{assumption: conditional variance} follows from Assumption \ref{assumption: eigenvalue condition} when the error variables are normally distributed. This is because $\mathbf{X}$ is generated from an LSEM with normally distributed error variables implies that the joint distribution of $\mathbf{X}$ is multivariate Gaussian with covariance matrix $\Sigma_0$. Hence, for any $1 \leq i \leq p$ and $S \subseteq \{1,\ldots, p\} \setminus \{i\}$ such that $|S| \leq q + t + 2$,
\[
\Var(X_i\mid \mathbf{X}_S) \; = \; ( \Sigma_0)_{ii} - (\Sigma_0)_{iS}(\Sigma_0)_{SS}^{-1}(\Sigma_0)_{Si} \; \geq \; \lambda_{\min}((\Sigma_0)_{(S,i)(S,i)}) \; \geq \; C_2,
\]
where the first inequality follows from the interlacing property of eigenvalues of a Hermitian matrix $A$ and eigenvalues of the Schur complement of any principal submatrix of $A$ (see, for example, Corollary 2.3 of \cite{Zhang05}), and the last inequality follows from Assumption \ref{assumption: eigenvalue condition}.

\begin{corollary}[\tcr{Asymptotic normality of the estimator(s) of the total causal effect(s)}]\label{corollary: asymptotic normality of IDA}
Under Assumptions \ref{assumption: LSEM}\tcR{--}\ref{assumption: eigenvalue condition}, \ref{assumption: stronger sparsity} and \ref{assumption: conditional variance}, we have 
\[
\frac{\sqrt{n}\left\{ \mathrm{aver}(\hat{\Theta}_{jp}(\hat{\mathcal{C}}_{0}')) - \mathrm{aver}(\Theta_{jp}(\mathcal{C}_{0}')) \right\}}{\sqrt{\Exp[Z_{jp}^2]}}~ \; \overset{d}{\longrightarrow} ~\mathcal{N}(0,1),  \;\; \tcr{(t < j < p).}
\]
\end{corollary}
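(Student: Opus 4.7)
The plan is to combine the asymptotic linear expansion of Theorem \ref{theorem: asymptotic linearity} with a CLT applied to the leading stochastic term, after first ruling out the two remainder terms. Specifically, dividing the expansion
\[
\mathrm{aver}(\hat{\Theta}_{jp}(\hat{\mathcal{C}}_{0}')) - \mathrm{aver}(\Theta_{jp}(\mathcal{C}_{0}')) \;=\; E_{n,jp}(\hat{\mathcal{C}}_{0}',\mathcal{C}_{0}') + \frac{1}{n}\sum_{r=1}^{n}Z_{jp}^{(r)} + O_{\P}\!\left(\frac{q_j+\log L_{\mathrm{distinct},j}}{n}\right)
\]
by $\sqrt{\E[Z_{jp}^{2}]/n}$ and then showing that the first and third terms contribute $o_{\P}(1)$ while the middle term, appropriately standardized, converges to $\mathcal{N}(0,1)$ via Berry--Esseen (as in Remark \ref{remark: linear functionals}) will deliver the claim by Slutsky.

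For the remainder terms, first I would use Assumption \ref{assumption: stronger sparsity} to control the $O_{\P}((q_j+\log L_{\mathrm{distinct},j})/n)$ piece: after multiplication by $\sqrt{n}$ it becomes $O_{\P}((q_j+\log L_{\mathrm{distinct},j})/\sqrt{n})=o_{\P}(1)$. Next, I would handle $\sqrt{n}E_{n,jp}(\hat{\mathcal{C}}_{0}',\mathcal{C}_{0}')$ using the CPDAG consistency assumption $\Prob(\hat{\mathcal{C}}_{0}'\neq\mathcal{C}_{0}')\to 0$: since $E_{n,jp}(\hat{\mathcal{C}}_{0}',\mathcal{C}_{0}')$ vanishes identically on the event $\{\hat{\mathcal{C}}_{0}'=\mathcal{C}_{0}'\}$, the bound $\Prob(\sqrt{n}|E_{n,jp}|>\varepsilon)\leq \Prob(\hat{\mathcal{C}}_{0}'\neq\mathcal{C}_{0}')\to 0$ yields $\sqrt{n}E_{n,jp}=o_{\P}(1)$; dividing by $\sqrt{\E[Z_{jp}^{2}]}$ preserves this provided a positive lower bound on $\E[Z_{jp}^{2}]$ is established (see below).

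For the leading term, $Z_{jp}$ is a fixed linear functional of the OLS influence functions in Section \ref{section: linear regression theory} (averaged over the finitely many parent sets in $\MEC(\mathcal{C}_0')$), so $\{Z_{jp}^{(r)}\}_{r=1}^n$ are i.i.d. with mean zero. Assumption \ref{assumption: LSEM} combined with the uniform eigenvalue control in Assumption \ref{assumption: eigenvalue condition 2} implies, exactly as in the closing discussion of Remark \ref{remark: linear functionals}, that each summand in the definition of $Z_{jp}$ is sub-exponential with a bounded $\psi_{1}$-norm, so $\E|Z_{jp}|^{3}$ is bounded. Applying the Berry--Esseen theorem to $n^{-1/2}\sum_{r=1}^{n}Z_{jp}^{(r)}/\sqrt{\E[Z_{jp}^{2}]}$ then gives convergence to $\mathcal{N}(0,1)$ at the rate $O(n^{-1/2})$.

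The main obstacle—and the step where Assumption \ref{assumption: conditional variance} is genuinely used—is establishing that $\E[Z_{jp}^{2}]$ is bounded away from zero, which is needed both to apply Slutsky and to make the Berry--Esseen ratio meaningful. Following the variance decomposition argument at the end of Remark \ref{remark: linear functionals}, I would write
\[
\E[Z_{jp}^{2}] \;\geq\; \E\!\left[\Var\!\left(Z_{jp}\,\big|\,\mathbf{X}_{\mathbf{Adj}_{\mathcal{C}_0'}(X_j)\cup\{X_1,X_j\}}\right)\right],
\]
and exploit the fact that inside each summand of $Z_{jp}$ the factor $R_{p|S_{j\ell}}$ is the only one depending on $X_p$, whereas $S_{j\ell}\subseteq\mathbf{Adj}_{\mathcal{C}_0'}(X_j)\cup\{X_1,X_j\}$. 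Conditioning pulls out $\Var(X_p\mid\mathbf{Adj}_{\mathcal{C}_0'}(X_j)\cup\{X_1,X_j\})\geq v$ from part (1) of Assumption \ref{assumption: conditional variance}, leaving a quadratic form in the projected predictors whose expectation is lower-bounded via part (2) of Assumption \ref{assumption: conditional variance} together with the uniform eigenvalue bounds from Assumption \ref{assumption: eigenvalue condition 2}. Putting $\E[Z_{jp}^{2}]=\Omega(1)$ together with the three steps above then yields the stated convergence $\sqrt{n}\{\mathrm{aver}(\hat{\Theta}_{jp}(\hat{\mathcal{C}}_{0}'))-\mathrm{aver}(\Theta_{jp}(\mathcal{C}_{0}'))\}/\sqrt{\E[Z_{jp}^{2}]}\overset{d}{\to}\mathcal{N}(0,1)$.
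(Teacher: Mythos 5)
Your proposal is correct and takes essentially the same route as the paper's own proof: the expansion from Theorem \ref{theorem: asymptotic linearity} (equivalently the linear-functional framework of Remark \ref{remark: linear functionals}), elimination of $E_{n,jp}(\hat{\mathcal{C}}_{0}',\mathcal{C}_{0}')$ via CPDAG consistency and of the $O_{\Prob}$ remainder via Assumption \ref{assumption: stronger sparsity}, Berry--Esseen plus Slutsky for the i.i.d.\ leading term, and the crucial lower bound $\Exp[Z_{jp}^2]=\Omega(1)$ obtained by conditioning on $\mathbf{Adj}_{\mathcal{C}_0'}(X_j)\cup\{X_1,X_j\}$ to pull out $\Var(X_p\mid\cdot)\geq v$. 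The one step you leave at sketch level---showing that the remaining quadratic form $\Exp[(\sum_{\ell} L_j^{-1}\,\mathbf{e}_{1}^T((\Sigma_0)_{S_{j\ell}})^{-1}(\mathbf{X}_{S_{j\ell}}-\boldsymbol\mu_{S_{j\ell}}))^2]$ is bounded away from zero---is exactly where the paper does its detailed work, via the partitioned-inverse formula and a second application of the law of total variance conditioning on $\mathbf{Adj}_{\mathcal{C}_0'}(X_j)\cup\{X_1\}$, invoking Assumption \ref{assumption: conditional variance}(2) and the bound $\sigma^2_{j|(1,S_{jr}')}\leq C_3$ from Assumption \ref{assumption: eigenvalue condition 2}, consistent with what you indicate.
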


\begin{remark}
\emph{Although we state Corollary \ref{corollary: asymptotic normality of IDA} for our particular LSEM setting where $B_{jk} = B_{pj} = 0$ for all $k \in \{1,\ldots,t\}$ and $j \in \{t+1,\ldots,p-1\}$, \tcr{we emphasize that} the same result \tcr{also continues to hold} for the original IDA estimator $\hat{\Theta}_{ik}(\hat{\mathcal{C}}_{0})  = \{\hat{\beta}_{ik | \mathbf{Pa}_{\mathcal{G}}(X_k)} : \mathcal{G} \in \MEC(\mathcal{C}_0) \},$ \tcr{corresponding to the full CPDAG $\mathcal{C}_0$, for any $1 \leq i,k \leq p$,} under the assumptions of Corollary \ref{corollary: asymptotic normality of IDA}.
}
\end{remark}

\begin{corollary}[\tcr{Asymptotic normality of the estimator(s) of the mediation effect(s)}]\label{corollary: asymptotic distribution of MIDA}
Under Assumptions \ref{assumption: LSEM}\tcR{--}\ref{assumption: eigenvalue condition}, \ref{assumption: stronger sparsity} and \ref{assumption: conditional variance}, we have 
\begin{align*}
 T_{n,jp} \; \tcr{:=} \; \frac{\sqrt{n}\left( \hat{\eta}_{j}(\hat{\mathcal{C}}_0') - \eta_{j}(\mathcal{C}_0') \right)}{\sqrt{\Exp[(\theta_{tj}Z_{jp} +  \mathrm{aver}(\Theta_{jp}(\mathcal{C}_{0}'))Z_{tj})^2]}}  ~~ \overset{d}{\longrightarrow} ~\mathcal{N}(0,1),    \;\; \tcr{(t < j < p),}
\end{align*}
provided at least one of $\theta_{tj}$ and $\mathrm{aver}(\Theta_{jp}(\mathcal{C}_{0}'))$ is \tcr{non-zero.}
\end{corollary}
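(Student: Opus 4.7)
Combine the asymptotic expansion of Theorem 7.1 with the Berry--Esseen CLT argument spelled out in Remark 5.2. Setting $W^{(r)} := \theta_{1j}Z_{jp}^{(r)} + \mathrm{aver}(\Theta_{jp}(\mathcal{C}_0'))\,Z_{1j}^{(r)}$ (which are i.i.d.\ with mean zero) and multiplying the second display of Theorem 7.1 by $\sqrt n$, one obtains
\begin{equation*}
\sqrt n\bigl\{\hat\eta_{jp}(\hat{\mathcal C}_0') - \eta_{jp}(\mathcal C_0')\bigr\} \;=\; \frac{1}{\sqrt n}\sum_{r=1}^n W^{(r)} \;+\; \mathcal{E}_n,
\end{equation*}
where $\mathcal{E}_n$ collects four error contributions: $\sqrt n\,\hat\theta_{1j}\,E_{n,jp}(\hat{\mathcal C}_0',\mathcal C_0')$, the cross-product $\sqrt n\bigl(n^{-1}\sum_r Z_{jp}^{(r)}\bigr)\bigl(n^{-1}\sum_r Z_{1j}^{(r)}\bigr)$, and the two $O_{\P}(\cdot/n)$ remainders prefactored by $\hat\theta_{1j}$ and $\mathrm{aver}(\hat\Theta_{jp}(\mathcal C_0'))$ respectively.

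The first step is to show $\mathcal E_n = o_\P(1)$. The display immediately preceding Theorem 7.1 gives $E_{n,jp} = o_\P(a_n^{-1})$ for \emph{every} sequence $\{a_n\}$ (as a consequence of the assumed CPDAG consistency); choosing $a_n = \sqrt n$ and combining with $\hat\theta_{1j} = O_\P(1)$ kills the first remainder. The ordinary CLT applied to each of the i.i.d., mean-zero sequences $\{Z_{jp}^{(r)}\}$ and $\{Z_{1j}^{(r)}\}$ (whose variances are finite under Assumptions 4.1 and 7.2) makes each average $O_\P(n^{-1/2})$, so the cross-product is $O_\P(n^{-1/2})$. The third remainder is $o_\P(1)$ by the strong sparsity bound $(q_j + \log L_{\mathrm{distinct},j})/\sqrt n \to 0$ in Assumption 7.3, and the fourth is $O_\P(n^{-1/2})$.

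It therefore suffices to establish $n^{-1/2}\sum_r W^{(r)}/\sqrt{\mathrm{Var}(W)} \convd \mathcal N(0,1)$, which follows from Remark 5.2's Berry--Esseen argument provided (a) $E|W|^3 = O(1)$ and (b) $\mathrm{Var}(W) = \Omega(1)$. For (a), each summand of $Z_{jp}$ and $Z_{1j}$ is a product of two sub-Gaussian random variables (Assumption 4.1), with prefactors $\mathbf e_1^T\Sigma_{S_{j\ell}}^{-1}$ of bounded norm by Assumption 7.2, so $W$ is sub-exponential with $O(1)$ $\psi_1$-norm and hence bounded third moment. Part (b) is the anticipated main obstacle and is where the disjunctive hypothesis ``at least one of $\theta_{1j},\mathrm{aver}(\Theta_{jp}(\mathcal C_0'))$ is nonzero'' enters. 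The natural route is the law of total variance: conditioning on all coordinates of $\bX$ except $X_p$, the quantity $Z_{jp}$ depends linearly on $X_p$ (because each residual $R_{p|S_{j\ell}}$ does), and Assumption 7.4(1) lower-bounds the conditional variance of $X_p$, yielding $\mathrm{Var}(W) \gtrsim \theta_{1j}^2$ when $\theta_{1j}\ne 0$; the symmetric conditioning on $\bX\setminus\{X_j\}$, combined with Assumption 7.4(2), handles the complementary case. The delicate point is to verify that the squared ``coefficient of $X_p$'' appearing in this conditional linear representation has nondegenerate expectation despite the multi-set averaging built into $Z_{jp}$; this is where the eigenvalue bounds of Assumption 7.2 will be used to preclude degeneracy. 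A final invocation of Slutsky's theorem then produces $T_{jp}\convd \mathcal N(0,1)$.
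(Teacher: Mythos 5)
Your proposal is correct and follows essentially the same route as the paper: the paper likewise combines the expansion of Theorem \ref{theorem: asymptotic linearity} with CPDAG consistency, boundedness of $\hat{\theta}_{1j}$ and $\mathrm{aver}(\hat{\Theta}_{jp}(\mathcal{C}_0'))$, and Assumption \ref{assumption: stronger sparsity} to reduce $\sqrt{n}\{\hat{\eta}_{jp}(\hat{\mathcal{C}}_0')-\eta_{jp}(\mathcal{C}_0')\}$ to the i.i.d.\ sum of $\theta_{1j}Z_{jp}^{(r)}+\mathrm{aver}(\Theta_{jp}(\mathcal{C}_{0}'))Z_{1j}^{(r)}$ up to $o_{\P}(1)$, and then invokes the Berry--Esseen argument of Remark \ref{remark: linear functionals} exactly as in Corollary \ref{corollary: asymptotic normality of IDA}. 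The ``delicate point'' you flag (nondegeneracy of the coefficient multiplying $X_p$) is precisely the partitioned-inverse computation already carried out in the paper's proof of Corollary \ref{corollary: asymptotic normality of IDA} via Assumptions \ref{assumption: eigenvalue condition 2} and \ref{assumption: conditional variance}(2), with the law-of-total-variance conditioning taken on $\mathbf{Adj}_{\mathcal{C}_0'}(X_j)\cup\{X_1,X_j\}$ --- the set Assumption \ref{assumption: conditional variance}(1) actually controls --- rather than on all of $\mathbf{X}\setminus\{X_p\}$ as you wrote.
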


To derive an asymptotic distribution of \tcr{$\hat{\eta}_{j}$} \tcr{even allowing for} 
both $\theta_{tj}$ and $\mathrm{aver}(\Theta_{jp}(\mathcal{C}_{0}'))$ \tcr{to be} 
zero, we consider 
\tcr{a \emph{modification}, $\tilde{T}_{n,jp}$,} of $T_{n,jp}$ \tcr{as follows}.
\begin{align}\label{eq: modified z-score}
\tilde{T}_{n,jp}  \; := \frac{\sqrt{n}\left( \hat{\eta}_{j}(\hat{\mathcal{C}}_0') - \eta_{j}(\mathcal{C}_0') \right)}{\sqrt{\hat{\theta}_{tj}^2\Exp[Z_{jp}^2] + \mathrm{aver}(\hat{\Theta}_{jp}(\mathcal{C}_{0}'))^2 \Exp[Z_{tj}^2] + 2\hat{\theta}_{tj} \mathrm{aver}(\hat{\Theta}_{jp}(\mathcal{C}_{0}')) \Exp[Z_{jp}Z_{tj}]}}.
\end{align}
\tcr{In contrast to $T_{n,jp}$, the denominator of the modified version $\tilde{T}_{n,jp}$ involves only the estimators $\hat{\theta}_{tj}$ and $\mathrm{aver}(\hat{\Theta}_{jp}(\mathcal{C}_{0}'))$ which are expected to be non-zero (a.s.), thereby ensuring that $\tilde{T}_{n,jp}$ is well-defined regardless of whether or not the corresponding true parameters $\theta_{tj}$ and $\mathrm{aver}(\Theta_{jp}(\mathcal{C}_{0}'))$ are both zero.} \tcR{The result below gives its full asymptotic characterization.}

\begin{theorem}[\tcR{Asymptotic limit of $\tilde{T}_{n,jp}$}]\label{theorem: asymptotic distribution}
Let $\tilde{T}_{n,jp}$ be as in \eqref{eq: modified z-score} and let $\rho = \frac{\Exp[Z_{tj}Z_{jp}]}{\sqrt{\Exp[Z_{tj}^2]\Exp[Z_{jp}^2]}}$ be the correlation coefficient between $Z_{tj}$ and $Z_{jp}$. Further, let $ (W_1, W_2)^T$ be a random vector that has a zero-mean bivariate Gaussian distribution with the covariance matrix $ \bigg(\begin{array}{cc} 1 & \rho\\ \rho & 1 \end{array} \bigg)$. Then, under Assumptions \ref{assumption: LSEM} - \ref{assumption: eigenvalue condition}, \ref{assumption: stronger sparsity} and \ref{assumption: conditional variance}, we have \tcr{for any $t < j < p$,}
\begin{align*}
\tilde{T}_{n,jp} ~~ \overset{d}{\longrightarrow} \;
\left\{
\begin{array}{cc}
W(\rho) \; := \; \frac{W_1W_2}{\sqrt{W_1^2 + W_2^2 + 2 \rho W_1 W_2}} & \text{if \; $\theta_{tj}  = \mathrm{aver}(\Theta_{jp}(\mathcal{C}_{0}')) =  0$,} \\
\mathcal{N}(0,1) & \text{otherwise.}
\end{array}
\right.
\end{align*}
\end{theorem}
\tcr{Theorem \ref{theorem: asymptotic distribution} therefore provides a unified result on the asymptotic distribution of $\tilde{T}_{n,jp}$, accounting for all possible cases regarding the true values of $\theta_{tj}$ and $\mathrm{aver}(\Theta_{jp}(\mathcal{C}_{0}'))$. To our knowledge, results of this flavor, that also allow for both the parameters to be zero, are generally rare in the relevant literature on inference for product-type mediation effect parameters. We provide further discussions on Theorem \ref{theorem: asymptotic distribution}, and its implications \tcR{and uses} in inference, in Section \ref{subsection: inference} \tcR{next} (see, in particular, the last two paragraphs and the discussion involving Figure \ref{fig: asymp dist} therein), where we also discuss how it can be used to obtain valid confidence intervals and $p$-values for testing the mediation effects $\eta_j(\mathcal{C}_0')$ via a unified approach,
\emph{without} requiring any knowledge of which case we are under, 
i.e. whether or not $\theta_{tj}$ and $\mathrm{aver}(\Theta_{jp}(\mathcal{C}_{0}'))$ are both truly zero.
}


\subsection{Confidence Interval and Hypothesis Testing}\label{subsection: inference}
To construct a confidence interval \tcr{(CI) for $\eta_j(\mathcal{C}_0')$} based on $\tilde{T}_{n,jp}$, we need to estimate the denominator in the right hand side of \eqref{eq: modified z-score}. To this end, we define \tcr{for $t < j < p$,}
\begin{align*}
\hat{R}_{i|S} &\; := \; X_{i} - \hat{\mu}_i  - (\hat{\Sigma}_{0})_{iS} \left[ (\hat{\Sigma}_{0})_{SS} \right]^{-1} (\mathbf{X}_{S} - \hat{\boldsymbol\mu}_S) \quad\quad \tcr{(1 \leq i \leq p)}, \\
\hat{Z}_{tj} &\; := \; \mathbf{e}_{1,t}^T \left[(\hat{\Sigma}_{0})_{\{1,\ldots,t\}{\{1,\ldots,t\}}} \right]^{-1}  (\mathbf{X}_{\{1,\ldots,t\}} - \hat{\boldsymbol\mu}_{\{1,\ldots,t\}}) \hat{R}_{j|\{1,\ldots,t\}}, \;\;\; \text{and} \\
\hat{Z}_{jp} &\; := \; \frac{1}{\hat{L}_{j}}\sum_{\ell=1}^{\hat{L}_{j}} \mathbf{e}_{1,|\hat{S}_{j\ell}|}^T \left[(\Sigma_{0})_{\hat{S}_{j\ell}\hat{S}_{j\ell}} \right]^{-1}  (\mathbf{X}_{\hat{S}_{j\ell}} - \hat{\boldsymbol\mu}_{\hat{S}_{j\ell}}) \hat{R}_{p|\hat{S}_{j\ell}}, \;\;\; \tcr{\mbox{where}}
\end{align*}
$\hat{\boldsymbol\mu}$ is the sample mean \tcr{and} $\hat{\Sigma}_{0}$ is the sample covariance matrix \tcr{of $\mathbf{X}$}, and \tcr{$\{\mathbf{X}_{\hat{S}_{1}},$ $\ldots,\mathbf{X}_{\hat{S}_{\hat{L}_{j}}}\hspace{-0.05in}\}$} is the multi-set of vectors $\{ (X_{j}, X_{1}, \mathbf{Pa}_{\mathcal{G'}}(X_{j}))^T : \mathcal{G'} \in \MEC(\hat{\mathcal{C}}_{0}')\}$.

\par\smallskip
Finally, we define \tcr{for any $t+1 \leq j \leq p-1$,}
\begin{align}\label{eq: test statistic}
\hat{T}_{n,jp}(\eta_j(\mathcal{C}_{0}')) \; := \; \frac{\sqrt{n}\left( \hat{\eta}_{j}(\hat{\mathcal{C}}_0') - \eta_{j}(\mathcal{C}_0') \right)}{\sqrt{\frac{1}{n} \sum_{r = 1}^{n} \left\{ \mathrm{aver}(\hat{\Theta}_{jp}(\hat{\mathcal{C}_0'}))~\hat{Z}_{tj}^{(r)} + \hat{\theta}_{tj}~ \hat{Z}_{jp}^{(r)} \right\}^2}}.
\end{align}

\tcr{Owing to} the consistency of $\hat{\mathcal{C}}_0'$, \tcr{and that of any} $q \times q$ sub-matrix of $\hat{\Sigma}_{0}$ (for $q$ as in Assumption \ref{assumption: sparsity}), \tcr{it follows that} $\hat{T}_{n,jp}$ and $T_{n,jp}$ have the same asymptotic distribution. 
\tcR{Therefore,} \tcR{using Corollary \ref{corollary: asymptotic distribution of MIDA} (or Theorem \ref{theorem: asymptotic distribution})}, when at least one of $\theta_{tj}$ and $\mathrm{aver}(\Theta_{jp}(\mathcal{C}_{0}'))$ is \tcr{non-zero,} an asymptotically correct $100 \times(1-\alpha)\%$ \tcr{CI} 
for $\eta_j(\mathcal{C}_0')$ is given by
\begin{align}\label{eq: confidence interval}
\hat{\eta}_j(\hat{\mathcal{C}}_0') \pm \frac{\Phi^{-1}(1 - \alpha / 2)}{\sqrt{n}} \sqrt{\frac{1}{n} \sum_{r = 1}^{n} \left\{ \mathrm{aver}(\hat{\Theta}_{jp}(\hat{\mathcal{C}_0'}))~\hat{Z}_{tj}^{(r)} + \hat{\theta}_{tj}~ \hat{Z}_{jp}^{(r)} \right\}^2},
\end{align}
where $\Phi(\cdot)$ \tcr{denotes} the CDF of a standard \tcr{normal} 
distribution.

\tcR{When} both $\theta_{tj}$ and $\mathrm{aver}(\Theta_{jp}(\mathcal{C}_{0}'))$ are zero, 
\tcr{the CI} \tcR{ideally} needs to be computed using the probability distribution of $W(\rho)$ (\tcr{as} defined in Theorem \ref{theorem: asymptotic distribution}). \tcr{However,} Figure \ref{fig: asymp dist} shows that the distribution of $W(\rho)$ is much more concentrated around zero compared to the standard normal distribution. This suggests that a \emph{unified} approach for constructing \tcr{the CI} 
is to use \eqref{eq: confidence interval} for \emph{all} values of $\mathrm{aver}(\Theta_{jp}(\mathcal{C}_{0}'))$ and $\theta_{tj}$. The resulting \tcr{CI} 
\tcr{would be conservative} \tcr{(but nonetheless \emph{still} valid)} when both $\theta_{tj}$ and $\mathrm{aver}(\Theta_{jp}(\mathcal{C}_{0}'))$ are zero, 
\tcr{or are} very close to zero.


\par\smallskip
Again, when at least one of $\theta_{tj}$ and $\mathrm{aver}(\Theta_{jp}(\mathcal{C}_{0}'))$ is \tcr{non-zero}, then \tcr{owing to Theorem \ref{theorem: asymptotic distribution},} the \tcr{$p$-value} for testing the null hypothesis, $H_0: \; \eta_j(\mathcal{C}_{0}') = 0$, against the \tcr{two-sided} alternative, $H_1: \; \eta_j(\mathcal{C}_{0}') \neq 0$, can be \tcr{computed} as:
\[
2\left[ 1 - \Phi\left( |\hat{T}_{n,jp}(\eta_j(\mathcal{C}_{0}'))|\right) \right], \quad \tcr{(t+1 ~\leq~ j ~\leq~ p-1).}
\]
When both $\theta_{tj}$ and $\mathrm{aver}(\Theta_{jp}(\mathcal{C}_{0}'))$ are zero, the \tcr{$p$-value} corresponding to the two-sided test \tcR{ideally} should be $\Prob(|W(\rho)| >  |\hat{T}_{n,jp}(\eta_j(\mathcal{C}_{0}'))|)$. In the \tcr{next} 
section \tcr{(on simulation studies)}, we will use the \tcr{$p$-values} to rank the mediator\tcr{s} according to their individual mediation effect\tcr{s}. For this purpose, \tcR{however,} we propose to use the Gaussian \tcr{$p$-values} in all cases. We justify this choice in the second part of Figure \ref{fig: asymp dist} which shows that the CDF of $|W(\rho)|$ 
\tcr{dominates} the CDF of the absolute value of a standard Gaussian random variable (\tcr{i.e.} 
$2\Phi(x) - 1$, for \tcr{any} $x > 0$). This implies conservative (that is higher than expected), \tcR{but nonetheless \emph{still} valid,} \tcr{$p$-values} when both $\theta_{tj}$ and $\mathrm{aver}(\Theta_{jp}(\mathcal{C}_{0}'))$ are zero, 
\tcr{or are} very close to zero.

\begin{figure}[!ht]
\centering
\includegraphics[width = \textwidth]{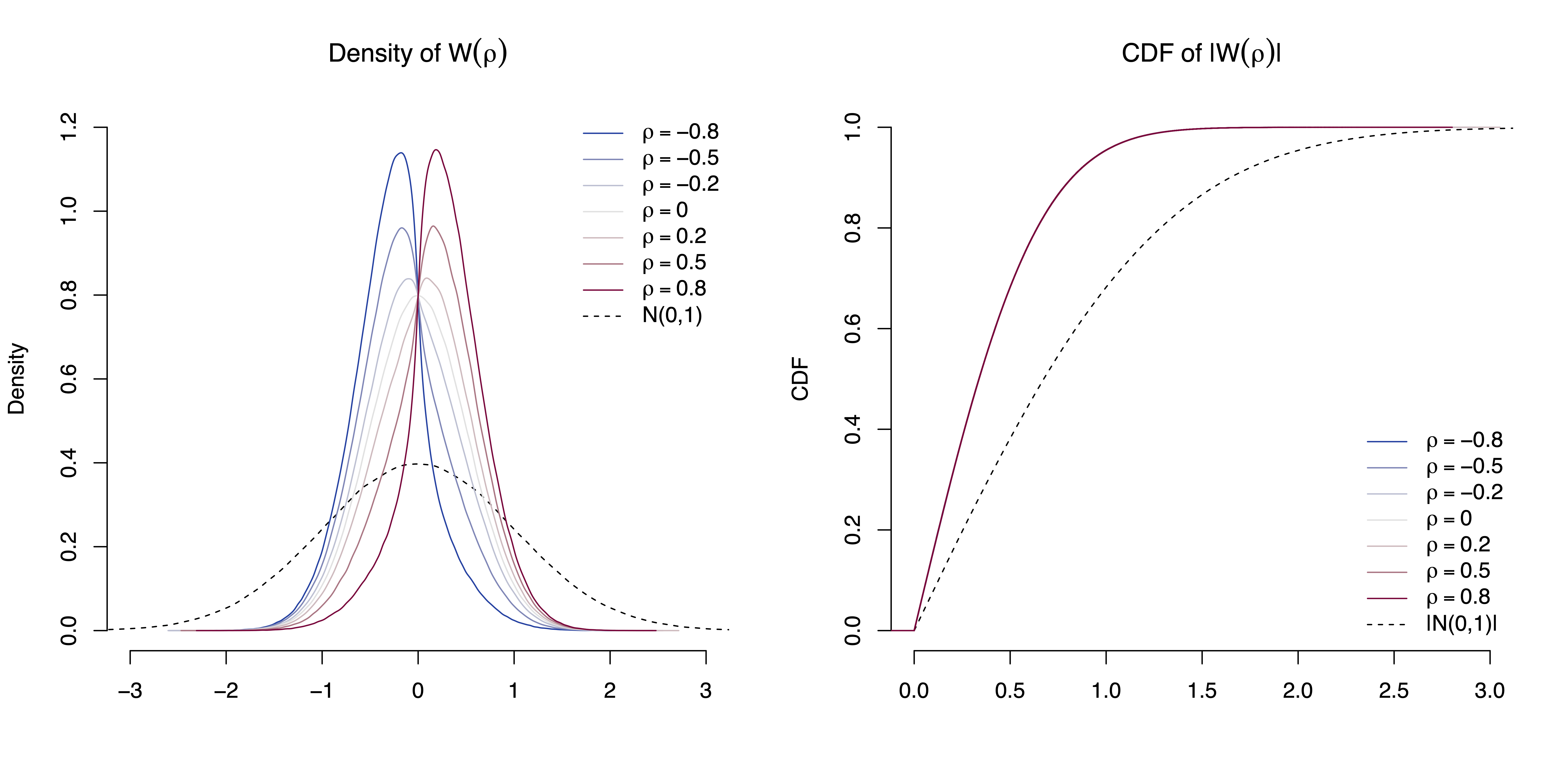}
\caption{The left subfigure compares the probability density function of \tcR{the random variable} $W(\rho)$ \tcR{(as in Theorem \ref{theorem: asymptotic distribution})} with \tcR{that of} the standard Gaussian density (dashed line). The right subfigure compares 
\tcR{the cumulative} distribution function (CDF) of $|W(\rho)|$ with the CDF of the absolute value of a standard Gaussian random variable (dashed line).}
\label{fig: asymp dist}
\vspace{-0.12in}  
\end{figure}


\section{Simulations}\label{section: simulations}
\tcr{We conducted extensive simulation studies to examine the performance of our proposed MIDA estimator and associated inferential tools under various settings, which we describe below \tcR{in Section \ref{subsection: simulation settings}}, followed by \tcR{presenting} the results \tcR{in Section \ref{results}}.}

\subsection{Simulation Settings}\label{subsection: simulation settings}



For each of the three \tcR{basic} settings given in Table \ref{table: simulation settings}, we use the \texttt{R}-package \textbf{pcalg} \citep{KalischEtAl12} to simulate $m$ random weighted DAGs $\{\mathcal{G}^{\prime(1)},\ldots,\mathcal{G}^{\prime(m)}\}$ with $p-2$ vertices $\{X_2, \ldots, X_{p-1}\}$ and $(p-2)d/2$ edges on average, where each pair of nodes in a randomly generated DAG has the probability $d/\binom{p-2}{2}$ of being adjacent (implying that the expected degree of each node is $d$). From each DAG $\mathcal{G}^{\prime(r)}$, we obtain the DAG $\mathcal{G}^{(r)},$ with $p$ vertices $\{X_1, X_2 \ldots, X_{p-1}, X_{p}\}$ by randomly adding directed edges from $X_1$ to $X_j$ with probability $0.2$ for $j = 2,\ldots, p$ and from $X_j$ to $X_{p}$ with probability $0.1$ for $j = 1,\ldots, p-1$. The edge weights are drawn independently from a uniform distribution on $[-1, -0.5]\cup[0.5, 1]$. Here we assume the absence of \tcr{any} pre-treatment covariates, 
\tcr{i.e. we set} $t=1$.


\begin{table}[ht]
\centering
\caption{Simulation settings\tcr{. Here,} \tcr{$p$} denotes the number of potential mediators \tcr{(plus two more \tcR{to account} for \tcR{the} treatment and \tcR{the} response)}, $d$ denotes the average degree in the causal DAG on the potential mediators, and $m$ denotes the number of randomly generated DAGs for a pair $(p,d)$.}
\label{table: simulation settings}
\begin{tabular}{c|rlr}
Setting &  $p$ & $d$ & $m$ \\ \hline
1 &  252 & 3 & 20 \\
2 &  502 & 3.5 & 10 \\
3 &   1002 & 4& 5\\
\hline
  \end{tabular}
\end{table}

Let $B_{\mathcal{G}^{(r)}}$ denote the weight matrix of the weighted DAG $\mathcal{G}^{(r)}$, i.e. $(B_{\mathcal{G}^{(r)}})_{ij}$ $\neq 0$ if and only if the edge $X_i \to X_j$ is present in $\mathcal{G}^{(r)}$ and it then equals the corresponding edge weight. For \tcr{$r = 1,\ldots, m$}, the weight matrix $B_{\mathcal{G}^{(r)}}$ and a random vector ${\boldsymbol\epsilon}^{(r)} = (\epsilon_{1}^{(r)},\ldots,\epsilon_{p}^{(r)})^T$ define a distribution on $\mathbf{X}^{(r)} = (X_{1}^{(r)},\ldots,X_{p}^{(r)})^T$ via the linear structural equation model
\[
\mathbf{X}^{(r)} = B_{\mathcal{G}^{(r)}}^T \mathbf{X}^{(r)} + {\boldsymbol\epsilon}^{(r)}.
\vspace{-0.07in}
\]
We choose $\epsilon_{1}^{(r)},\ldots,\epsilon_{p}^{(r)}$ to be \tcr{zero-mean} Gaussian random variables with variances independently drawn from a $\mathrm{Uniform}[0.5, 1]$ distribution. Finally, we standardize \tcr{all} the variables to have: $\Var(X_{i}^{(r)}) = 1$, for all $i = 1,\ldots,p$.

\hspace{-0.07in} For each setting, we generate $200$ random samples $\{\mathcal{D}_{n1}^{(r)}, \hdots, \mathcal{D}_{n200}^{(r)}\}$ of size $n \in \{500, 1000$, \tcR{$5000\}$} from the joint distribution of $\mathbf{X}^{(r)}$ for $r = 1,\ldots,m$.  For each $r \in \{1,\ldots, m\}$, we compute estimates and the corresponding \tcr{$p$-values} for $\eta_j^{(r)}$ based on the data $\mathcal{D}_{nk}^{(r)}$ using Algorithm \ref{algorithm: CPDAG estimation} when the graph used in line \ref{line: estimator} is \tcr{either:} (i) an estimated CPDAG obtained by applying the ARGES algorithm \citep{NandyHauserMaathuis16}, \tcr{or} (ii) the true CPDAG $C_{0}^{\prime (r)}$, \tcr{or} (iii) the true DAG $\mathcal{G}^{\prime (r)}$, or (iv) the empty graph. Note that the last case \tcr{(empty graph)} corresponds to a \emph{naive method} that assumes that the potential mediators are conditionally independent given the treatment variable.


\subsection{Results}\label{results}
As a finite sample validation of the asymptotic results obtained in Section \ref{section: asymptotic distribution}, we record (from each replication) whether the true parameter value $\eta_j^{(r)}$ lies within the $95\%$ standard Gaussian confidence interval given by \eqref{eq: confidence interval} with $\alpha = 0.05$. To present the result\tcR{s}, we split $\{\eta_j^{(r)} : r \in \{1,\ldots, m\},~ j \in \{2,\ldots, \tcr{p-1}\} \}$ 
into three equally sized groups according to the quantiles of $|\max(\theta_{tj}^{(r)}, \mathrm{aver}(\Theta_{jp}^{(r)}))|$'s distribution, and report the \tcR{median} 
empirical coverage probabilities in each group. 
Table \ref{table: coverage} shows that the\tcR{se} $95\%$ asymptotic confidence intervals exhibit an extremely high coverage in the first group 
where most of $|\max(\theta_{tj}^{(r)}, \mathrm{aver}(\Theta_{jp}^{(r)}))|$ equal zero, as well as in the second group 
where most of $|\max(\theta_{tj}^{(r)}, \mathrm{aver}(\Theta_{jp}^{(r)}))|$ are very close to zero. This is due to the fact that the correct asymptotic distribution of $\hat{T}_{n, jp}^{(r)}$ (see \tcR{Theorem \ref{theorem: asymptotic distribution} and relevant discussions in} Section \ref{subsection: inference}) in these cases is much more concentrated around zero than the distribution of a standard Gaussian random variable. The third group 
with reasonably high values of $|\max(\theta_{tj}^{(r)}, \mathrm{aver}(\Theta_{jp}^{(r)}))|$ exhibits the correct coverage when the CPDAG is known, but we do see some \tcR{minor} loss of coverage \tcR{(and/or little higher standard errors)}  when the CPDAG is estimated due to finite sample graph estimation errors.

\begin{table}[!ht]
\footnotesize
\centering
\caption{Median empirical coverage probabilities \tcR{(coverage)} and average lengths \tcr{(size)} of the 95\% confidence intervals when the CPDAG is known and when the CPDAG is estimated. For each \tcr{of the} simulation settings, the results are divided into three groups (L $\leftarrow$ Low, M $\leftarrow$ Medium and H $\leftarrow$ High) of roughly equal sizes based on $|\max(\theta_{tj}^{(r)}, \mathrm{aver}(\Theta_{jp}^{(r)}))|$\tcr{'s values}. The numbers in the brackets \tcr{denote} the corresponding standard deviations \tcR{of the coverages}.}
\label{table: coverage}
\begin{tabular}{ c @{\hspace{0.1cm}}| r @{\hspace{0.2cm}} r  @{\hspace{0.3cm}} c @{\hspace{0.3cm}}  r @{\hspace{0.3cm}} c @{\hspace{0.3cm}} l}
\toprule
&\multirow{3}{*}{$p~$} & \multirow{3}{*}{$n$}  & \multicolumn{2}{c}{Known CPDAG} &  \multicolumn{2}{c}{Estimated CPDAG}\\ 
\cmidrule(lr){4-5} \cmidrule(lr){6-7}
& &  &\hspace{0.1in}Coverage &Size    &\hspace{0.1in}Coverage  &Size\\
  \hline
&     & 500   & \hspace{0.1in}100 (0.00)  & 0.01   & \hspace{0.1in}100 (0.00)  & 0.01 \\
&252  & 1000  & \hspace{0.1in}100 (0.00)  & 0.01   & \hspace{0.1in}100 (0.00)  & 0.01 \\
&     & 5000  & \hspace{0.1in}100 (0.00)  & 0.00   & \hspace{0.1in}100 (0.00)  & 0.00 \\  \cmidrule(lr){2-7}
&     & 500   & \hspace{0.1in}100 (0.00)  & 0.01   & \hspace{0.1in}100 (0.00)  & 0.01 \\
&502  & 1000  & \hspace{0.1in}100 (0.00)  & 0.01   & \hspace{0.1in}100 (0.00)  & 0.01 \\
\rotatebox{90}{\rlap{Group L}}
&     & 5000  & \hspace{0.1in}100 (0.00)  & 0.00   & \hspace{0.1in}100 (0.00) & 0.00 \\ \cmidrule(lr){2-7}
&     & 500   & \hspace{0.1in}100 (0.00)  & 0.01   & \hspace{0.1in}100 (0.00) & 0.01 \\
&1002 & 1000  & \hspace{0.1in}100 (0.00)  & 0.01   & \hspace{0.1in}100 (0.00) & 0.01 \\
&     & 5000  & \hspace{0.1in}100 (0.00)  & 0.00   & \hspace{0.1in}100 (0.00) & 0.00 \\  \bottomrule
&     & 500   & \hspace{0.1in}99.0 (1.48) & 0.04   & \hspace{0.1in}99.0 (1.48) & 0.04 \\
&252  & 1000  & \hspace{0.1in}98.0 (2.22) & 0.03   & \hspace{0.1in}98.0 (2.97) & 0.03 \\
&     & 5000  & \hspace{0.1in}96.0 (2.22) & 0.01   & \hspace{0.1in}96.0 (2.97) & 0.01 \\ \cmidrule(lr){2-7}
&     & 500   & \hspace{0.1in}99.5 (0.74) & 0.04   & \hspace{0.1in}99.5 (0.74) & 0.04 \\
&502  & 1000  & \hspace{0.1in}98.5 (2.22) & 0.03   & \hspace{0.1in}98.5 (2.22) & 0.03 \\
\rotatebox{90}{\rlap{Group M}}
&     & 5000  & \hspace{0.1in}96.5 (2.22) & 0.01   & \hspace{0.1in}96.5 (2.97) & 0.01 \\ \cmidrule(lr){2-7}
&     & 500   & \hspace{0.1in}98.5 (2.22) & 0.05   & \hspace{0.1in}98.5 (2.22) & 0.05 \\
&1002 & 1000  & \hspace{0.1in}97.5 (2.97) & 0.04   & \hspace{0.1in}97.5 (2.97) & 0.04 \\
&     & 5000  & \hspace{0.1in}96.0 (2.22) & 0.02   & \hspace{0.1in}96.0 (2.22) & 0.02 \\   \bottomrule
&     & 500   & \hspace{0.1in}95.0 (1.48) & 0.14   & \hspace{0.1in}94.0 (2.97) & 0.14 \\
&252  & 1000  & \hspace{0.1in}95.0 (1.48) & 0.10   & \hspace{0.1in}94.0 (2.97) & 0.10 \\
&     & 5000  & \hspace{0.1in}95.0 (1.48) & 0.04   & \hspace{0.1in}94.0 (2.97) & 0.04 \\ \cmidrule(lr){2-7}
&     & 500   & \hspace{0.1in}95.5 (1.48) & 0.14   & \hspace{0.1in}94.5 (2.22) & 0.14 \\
&502  & 1000  & \hspace{0.1in}95.0 (1.48) & 0.10   & \hspace{0.1in}94.5 (2.22) & 0.10 \\
\rotatebox{90}{\rlap{Group H}}
&     & 5000  & \hspace{0.1in}95.0 (1.48) & 0.04   & \hspace{0.1in}94.5 (2.22) & 0.04 \\ \cmidrule(lr){2-7}
&     & 500   & \hspace{0.1in}95.0 (1.48) & 0.15   & \hspace{0.1in}94.5 (2.22) & 0.15 \\
&1002 & 1000  & \hspace{0.1in}95.0 (1.48) & 0.11   & \hspace{0.1in}94.5 (2.22) & 0.11 \\
&     & 5000  & \hspace{0.1in}95.0 (1.48) & 0.05   & \hspace{0.1in}94.5 (2.22) & 0.05 \\
\bottomrule
\end{tabular}

\end{table}

Next, we investigate the effect of graph estimation error in identifying the set of true mediators: $S^{(r)} := \{ X_j^{(r)} : \eta_j^{(r)} \neq 0,~ j = 2, \ldots, p-1 \}$. Figure \ref{fig: PR curves} shows the averaged (over 200 iterations) Precision-Recall curves for estimating the target set $\cup_{r=1}^{m} S^{(r)}$ based on (i) the ranking of the absolute values of estimates of $\cup_{r=1}^{m} \{\eta_1^{(r)}, \ldots,\eta_p^{(r)}\}$ and (ii) the ranking of the corresponding \tcr{$p$-values} (in the reverse order). As we would expect, the methods based on the true DAG performs the best. Although it is unrealistic to assume that the true graph is known, we include it in our results to gain insight into the loss due to estimating the true CPDAG instead of the true DAG. We note that the methods based on the estimated CPDAG and based on the true CPDAG perform equally well, and they outperform the naive method based on the empty graph. Finally, Figure \ref{fig: PR curves} \tcR{also} demonstrates that we can achieve substantial performance gain by using \tcr{$p$-values} instead of the raw estimates.
\tcR{This further exemplifies the importance and benefit of our inferential tools developed in Section \ref{section: asymptotic distribution} for IDA-type estimators that enables one to obtain these $p$-values in the first place.} 


\begin{figure}[!ht]
\centering
\includegraphics[width = \textwidth]{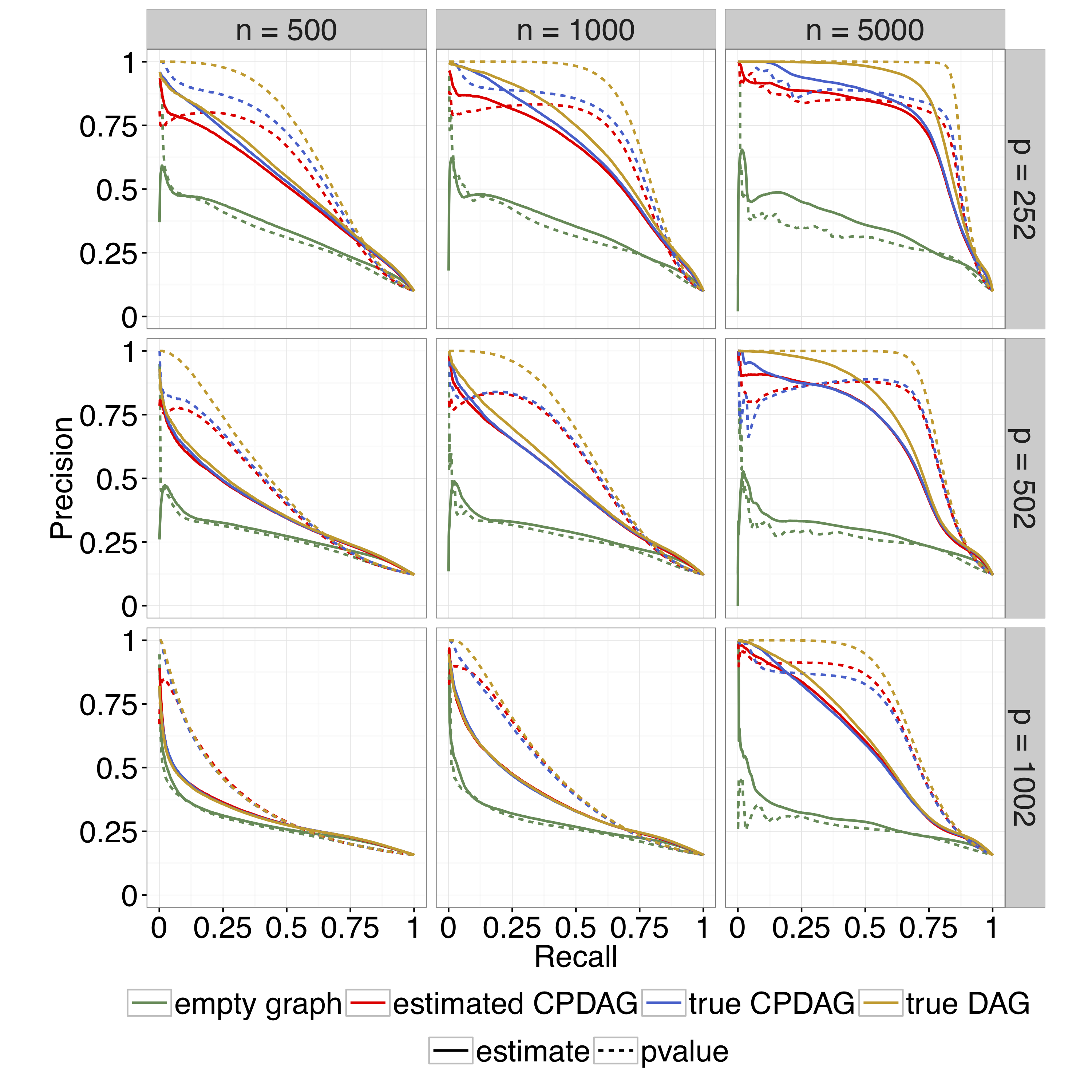}
\caption{Precision-Recall curves for the target set with top $k$ potential mediators according to: (i) large $|\hat{\eta}_j^{(r)}|$ values (estimate) (solid line), and (ii) small corresponding \tcr{$p$-values} (dashed line), for $k = 1,\hdots, \tcr{(p-2) \times m}$. Note that $\tcr{(p -2) \times m} = 5000$ for each simulation setting.}
\label{fig: PR curves}
\vspace{-0.15in}
\end{figure}


The harmonic mean of precision and recall is known as F-score, and it is a popular way \tcR{of} combining precision and recall into a single performance measure that ranges between 0 and 1. By adopting this notion of a performance measure, we aim to choose a set of top mediators that maximizes \tcr{the} F-score.
\tcR{We achieve this in practice through a heuristic \emph{$p$-value based thresholding} approach; the results are presented in Table \ref{table: F-score}.} 
Table \ref{table: F-score} demonstrates that we can achieve a nearly optimal F-score (i.e. the best achievable F-score \tcr{for} our method on a given dataset) by \tcr{a} thresholding of \tcr{$p$-values} at \tcr{a} level \tcR{of} $0.01$ for $n = 5000$, and at \tcr{a} level \tcR{of} $0.1$ for $n \in \{ 500, 1000\}$. We acknowledge that this heuristic $p$-value thresholding technique does not possess any theoretical justification, but it seems to work surprisingly well for estimating the target set $\cup_{r=1}^{m} S^{(r)}$ in our simulation settings. We also provide discussions on some alternative approaches \tcR{(including a Benjamini-Hochberg false discovery rate control procedure and its modifications)} in Section \ref{section: discussion}
and in \tcr{ \tcR{Appendix} 
\ref{sec:FDR_Control} of the \hyperref[supp_mat]{Supplement}.} 

\begin{table}[!ht]
\centering
\caption{Averaged (over 200 iterations) recall, precision and F-score for estimating the target set $\cup_{r=1}^{m}S^{(r)}$ based on \tcr{$p$-value} thresholding, where we used estimated CPDAGs for computing the value of our IDA-based estimators and the corresponding \tcr{$p$-values} \tcr{for thresholding}. The numbers in the brackets \tcr{denote} the corresponding standard deviations.}
\label{table: F-score}
\resizebox{\textwidth}{!}{
\begin{tabular}{r | r | ll | ll | ll}
  \hline
  $p$ &  $n$ & Target & Estimated & Recall & Precision & Achieved & Optimal \\
&   & size & size &  &  & F-score & F-score \\
  \hline
 & 500 &  & 568.0 (18.1) & 0.61 (0.02) & 0.54 (0.02) & 0.57 (0.01) & 0.59 (0.01)   \\
252   & 1000 & 503  & 673.6 (19.3) & 0.74 (0.02) & 0.55 (0.01) & 0.63 (0.01) & 0.69 (0.01)   \\
   & 5000 &  & 560.7 (11.4) & 0.82 (0.01) & 0.73 (0.01) & 0.77 (0.01) & 0.79 (0.01)   \\ \hline
 & 500 &  & 445.8 (17.2) & 0.38 (0.02) & 0.52 (0.02) & 0.44 (0.02) & 0.46 (0.01)   \\
  502   & 1000 & 613  & 569.6 (17.8) & 0.54 (0.02) & 0.58 (0.02) & 0.56 (0.01) & 0.56 (0.01)   \\
   & 5000 &  & 505.8 (10.5) & 0.68 (0.01) & 0.83 (0.01) & 0.75 (0.01) & 0.75 (0.01)   \\ \hline
 & 500 &  & 390.7 (17.5) & 0.24 (0.01) & 0.48 (0.02) & 0.32 (0.01) & 0.38 (0.01)   \\
  1002   & 1000 & 788  & 502.6 (18.1) & 0.35 (0.01) & 0.55 (0.02) & 0.43 (0.01) & 0.45 (0.01)   \\
   & 5000 &  & 455.0 (14.2) & 0.5 (0.01) & 0.87 (0.02) & 0.63 (0.01) & 0.67 (0.01)   \\
   \hline
\end{tabular}
}

\end{table}

\ignore{

An alternative approach for estimating the target set $\cup_{r=1}^{m} S^{(r)}$ could be to apply the Benjamini-Hochberg (BH) false discovery rate (FDR) control procedure at level $\alpha$ \citep{BenjaminiHochberg}. We note that the theoretical guarantee of the BH procedure does not apply in our case due to the fact that the true DAG is not identifiable. Furthermore, Table \ref{table: F-score} suggests that it might be unreasonable to enforce a high precision level such as 0.9 (equivalent to have FDR control at level $\alpha = 0.1$) in the challenging problem of estimating the set of true mediators in high-dimensional settings. For these reasons, we recommend the estimation of the target set by maximizing the F-score because of its adaptive capability of adjusting the precision level based on the difficulty of the problem at hand.

\textcolor{red}{ToDo: move the following discussion to supplement}
\subsection{False discovery rate control}
As an additional validation to our asymptotic results, we discuss some FDR control results for estimating $\cup_{r=1}^{m}S^{(r)} = \cup_{r=1}^{m}\{ X_j^{(r)} : \eta^{(r)} \neq 0,~ j = 2, \ldots, p-1\}$ (\texttt{Target}) and $\cup_{r=1}^{m}S^{*(r)} = \cup_{r=1}^{m}\{ X_j^{(r)} : \theta_{tj}^{(r)}\mathrm{aver}(\Theta_{jp}^{(r)}) \neq 0,~ j = 2, \ldots, p-1\}$ (\texttt{Target\_CPDAG}) when the true CPDAG is known as well as when the CPDAG is estimated. The BH procedure at level $\alpha$ (asymptotically) guarantees to control FDR at level $\alpha m_0/m$ for estimating \texttt{Target\_CPDAG}, where for each simulation setting $m_0$ denotes the total number of true hypotheses $|\cup_{r=1}^{m}S^{*(r)}|$ among the $m = r * (p-2) = 5000$ hypotheses. Since $\cup_{r=1}^{m}S^{*(r)} \subseteq \cup_{r=1}^{m}S^{(r)}$, it is expected that the empirical FDR level would be higher when it is measured with respect to \texttt{Target}.

Figure \ref{fig: FDR curves_noscreen} shows that the BH procedure becomes very conservative for estimating $\cup_{r=1}^{m}S^{*(r)}$ even though we ignore the additional adjustment suggested by \cite{BenjaminiYekutieli01} in order to correct for possible dependencies among hypotheses. The conservativeness of BH procedure can be attributed to fact that the
\tct{$p$-value} corresponding to the test $\theta_{tj}^{(r)} \mathrm{aver}(\Theta_{jp}^{(r)}) = 0$ has a stochastically larger distribution than $\mathrm{Uniform}[0,1]$ when both $\theta_{tj}^{(r)}$ and $\mathrm{aver}(\Theta_{jp}^{(r)})$ are zero. In order to mitigate this issue, we apply a heuristic screening where we select the potential mediators for which the total effect of the treatment $X_1$ on the mediator is non-zero, by testing $\theta_{tj}^{(r)} = 0$ at the significance level $0.01$. Then we apply the BH procedure on the selected set. Figure \ref{fig: FDR curves} shows that the heuristic screening method is indeed effective in reducing the conservativeness of BH procedure the empirical FDR of the estimated sets based on the BH procedure.


\begin{figure}[!ht]
\centering
\includegraphics[width = \textwidth]{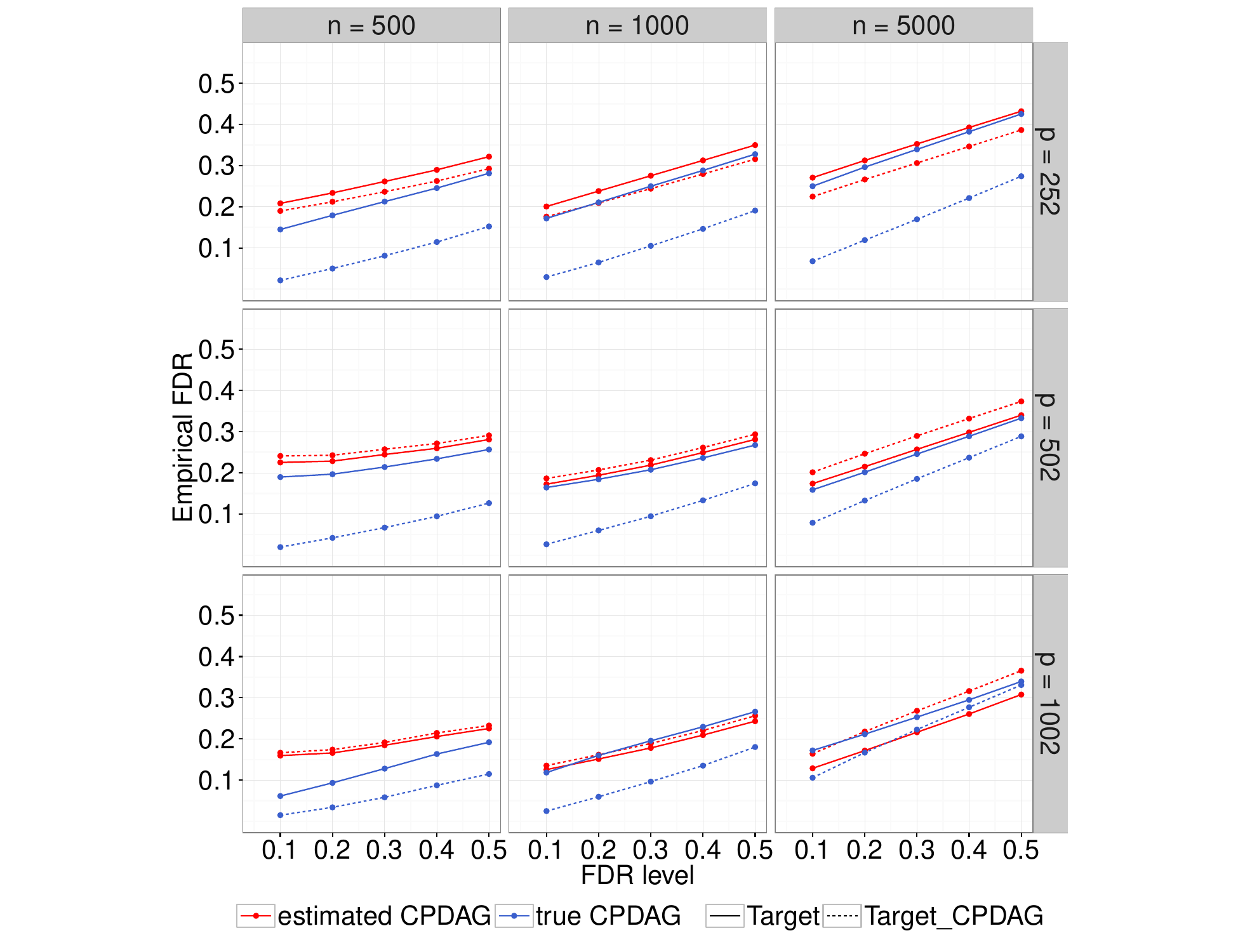}
\caption{Empirical FDR of the estimated sets based on the BH procedure with \tcr{$p$-value} screening for estimating \texttt{Target} and \texttt{Target\_CPDAG} when the true CPDAG is known as well as when the CPDAG is estimated}
\label{fig: FDR curves_noscreen}
\end{figure}

\begin{figure}[!ht]
\centering
\includegraphics[width = \textwidth]{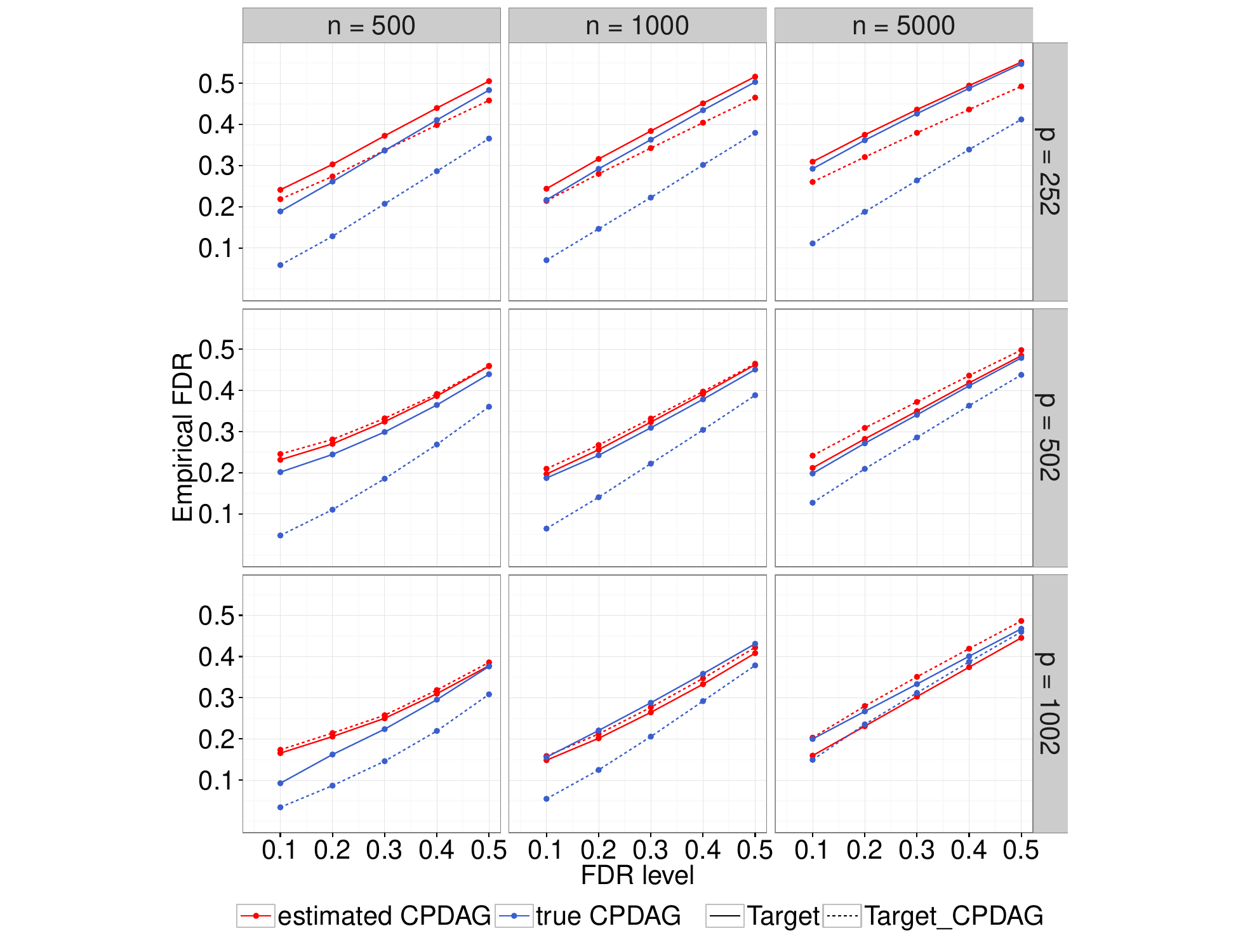}
\caption{Empirical FDR of the estimated sets based on the BH procedure with \tcr{$p$-value} screening for estimating \texttt{Target} and \texttt{Target\_CPDAG} when the true CPDAG is known as well as when the CPDAG is estimated.}
\label{fig: FDR curves}
\end{figure}

}


\section{Application}\label{section: application}
\hspace{-0.02in}We demonstrate the applicability of \tcr{our} MIDA estimator in real data using a data set collected on 104 yeast segregants created by crossing of two genetically diverse strains, BY and RM \citep{yeast2}. The data set includes the growth yields of each segregant grown in the presence of different chemicals or small molecule drugs \citep{yeastdrug}.
These segregants have different genotypes that contribute to rich phenotypic diversity.
One key question is to understand how genetic variants contribute to the phenotypic variability. One possible path is through regulation of gene expression variations. Besides genotype data,  6189 yeast genes are profiled in rich media and in the absence of any chemical or drug using expression arrays  \citep{yeast2}.  We use the same data preprocessing steps as \cite{yeast1} to create a list of candidate gene expression features based on their potential regulatory effects, including transcription factors, signalling molecules, chromatin factors and RNA factors and  genes involved in vacuolar transport, endosome, endosome transport and vesicle-mediated transport.  We further filter out genes with standard deviation (s.d.) $ \le 0.2$ in expression level, resulting in a total of  813 genes in our analysis.

We are interested in identifying \tcr{the} genes whose expression levels  mediate the effect of genetic variants on  yeast growth yield after being treated with hydrogen peroxide. In particular, the genetic variant M2$\_$477206$\_$486640 is highly associated with the yeast growth yield  (\tcr{$p$-value =} 0.00032). Our goal is to identify the gene expressions that mediate the effect of this genetic variant. At a nominal $p$-value of 0.05,  MIDA identified six genes that may mediate the effects of the genetic variant M2$\_$477206$\_$486640 on yeast growth (see Table \ref{yeast.tbl}). Due to relatively small sample sizes, these genes are not significant after we adjust for multiple comparisons. \tcr{\tcm{However, a}lthough we cannot claim any statistically significant result \tcm{here,} we demonstrated how MIDA can be used to prioritize future biological experiments by
identifying a set of candidate genes. The candidate selection is performed by thresholding the \tcr{$p$-values,} as this turned out to be most effective method in our simulation study.}

Interestingly, the estimated signs of the mediation effects in Table \ref{yeast.tbl} agree with known biology. Among these genes, over-expression of \tcr{DBP8} \tcr{\citep{yeast-dbp1}} leads to vegetative and decreased rate of growth. In contrast, lower expression of the GPA1 gene \tcr{typically} leads to decreased resistance to chemicals and decreased sporulation efficiency \tcr{\citep{yeast-gpa2}}. 

\begin{table}[ht]
\centering
\caption{Analysis of yeast growth yield in the presence of hydrogen peroxide. The gene expression mediators \tcr{for the} genetic variant M2$\_$477206$\_$48664 \tcr{that were} identified  by MIDA with a nominal \tcr{$p$-value $<$ 0.05}. \tcr{The gene} names, their estimated mediation effects and \tcr{the corresponding} 95\% confidence intervals are presented below.}\label{yeast.tbl}

\begin{tabular}{llrrr}
  \hline
Gene ID & Gene name & Estimated effect & 95\% confidence interval & $p$-value \\
  \hline
YNR047W & YNR047W & -0.0532 (0.0233) & (-0.0988, -0.0077) & 0.022 \\
  YHR136C & SPL2 & 0.0257 (0.0116) & (0.0029, 0.0485) & 0.027 \\
  YHR184W & SSP1 & 0.0468 (0.0229) & (0.0019, 0.0916) & 0.041 \\
  YAL035W & FUN12 & 0.0757 (0.0375) & (0.0022, 0.1491) & 0.043 \\
  YHR005C & GPA1 & 0.0766 (0.0381) & (0.0020, 0.1512) & 0.044 \\
  YHR169W & DBP8 & -0.1035 (0.0526) & (-0.2067, -0.0003) & 0.049 \\
   \hline
\end{tabular}

\end{table}

\section{Discussion}\label{section: discussion}
In this paper, we have considered the problem of mediation analysis in the setting where we have high-dimensional and possibly interacting mediators. \tcr{DAGs} 
are used to characterize the possible interaction effects among the high dimensional mediators and to define the individual mediation effects based on linear structural equation models.
We have developed an IDA-based procedure, MIDA, to estimate the individual mediation effects, which takes into account the uncertainty of the estimated DAGs. We have also derived the asymptotic distributions of the estimates of both interventional effects as well as individual mediation effects, under the assumption of sub-Gaussian errors for the LSEMs, which facilitates inference based on these estimators, and to the best of our knowledge, are the first such results available in the literature. We have illustrated the methods in simulation studies with promising performance, as well as using a real data set on yeast in order to identify the possible gene expression mediators of a genetic variant for yeast growth in the presence of drugs. The methods can also be applied to the problem of identifying the important gene expression or methylation mediators that mediate the effects of genetic variants identified through genome-wide association studies on disease phenotypes.
%

\tcr{Another crucial contribution of this work lies in the results of Section \ref{section: linear regression theory} on uniform ALEs and non-asymptotic control of error terms for linear regression estimators based on varying subsets of high dimensional covariates that serve as the backbone of our results on the asymptotic distribution of the MIDA estimators. These results are applicable quite generally in several other problems, and should therefore be of independent interest.} 

\vspace{0.01in}
\tcR{\emph{FDR control for MIDA.}} Our simulations have shown that the confidence intervals based on MIDA provide correct coverage when the mediation effects are not too small. In high dimensional settings with thousands of possible mediators, large sample sizes are needed in order to accurately estimate the individual mediation effects. In order to estimate the set of true mediators $\{X_j : \eta_j \neq 0\}$, as an alternative to our heuristic \tcr{$p$-value} based thresholding approach proposed in Section \ref{results}, one can also apply the Benjamini-Hochberg (BH) false discovery rate \tcR{(FDR)} control procedure at some desired level $\alpha$ (e.g.\tcR{,} $\alpha = 0.1$). However, the theoretical guarantee of the BH procedure does \emph{not} apply in our case due to the fact that the true DAG is not identifiable. Instead, the BH procedure (asymptotically) guarantees to control FDR for estimating 
$\{ X_j : \theta_{tj}\mathrm{aver}(\Theta_{jp}) \neq 0\}$, \tcr{and only} 
when the true CPDAG is known. Furthermore, Table \ref{table: F-score} suggests that it might be unreasonable and too optimistic to enforce a high precision level such as 0.9 (equivalently, FDR control at level $\alpha = 0.1$) in the challenging problem of estimating the set of true mediators in high-dimensional settings. For these reasons, we recommend the estimation of the target set by maximizing the F-score because of its adaptive capability of automatically adjusting to the best achievable precision level (or, least FDR level) for the problem at hand. 

We also empirically observed \tcr{(see Figure \ref{fig: FDR curves_noscreen} in \tcR{Appendix \ref{sec:FDR_Control} of} the \hyperref[supp_mat]{Supplement})} that the BH procedure 
\tcr{becomes} very \tcr{\emph{conservative}} for estimating $\{ X_j : \theta_{tj}\mathrm{aver}(\Theta_{jp}) \neq 0\}$, mainly because the $p$-value corresponding to the test \tcr{$\theta_{tj}\mathrm{aver}(\Theta_{jp}) = 0$} has a non-uniform and left-skewed distribution
when both $\theta_{tj}$ and $\mathrm{aver}(\Theta_{jp})$ are 0 
 (see also the last paragraph of \tcr{Section \ref{subsection: inference}}). One way to mitigate this issue is to apply a heuristic \emph{screening procedure}. We explored this approach in our simulation studies to \tcr{first} obtain a potential set of mediators for which the total effect of the treatment on the mediator is non-zero, and then apply the BH procedure on \tcr{this} selected set. We provide further discussions on this approach in \tcR{Appendix \ref{sec:FDR_Control} of the \tcr{\hyperref[supp_mat]{Supplement}}.} 
 The simulation results presented therein show that this strategy can significantly improve the FDR controlling compared to a direct application of the BH procedure based on the $p$-values. 
%
%





\section*{\tcR{Supplement: Appendices \ref{section: proofs}--\ref{sec:FDR_Control}}}\label{supp_mat}
In the supplement, we provide: (i) proofs of all theoretical results in the main paper \tcr{(\tcR{Appendix} 
\ref{section: proofs});} (ii) additional technical tools, \tcr{including definitions and key} supporting lemmas \tcr{that are required in all the proofs (\tcR{Appendix} 
 \ref{sec:technicaltools});} (iii) \tcr{discussions on verifying the moment conditions required in the last part of Remark \ref{remark: linear functionals} (\tcR{Appendix} 
 \ref{sec:remarkverify});} and (iv) additional numerical results regarding FDR controlling \tcr{for MIDA (\tcR{Appendix} 
 \ref{sec:FDR_Control}).}



\begin{appendix}
\end{appendix}










%

\numberwithin{equation}{section}
\setcounter{equation}{0}         

\setcounter{section}{0}

\renewcommand{\thefigure}{\Alph{figure}}  
\setcounter{figure}{0}
\renewcommand{\thetable}{\Alph{table}}  
\setcounter{table}{0}

\section{Proofs of All Results}\label{section: proofs}

\subsection{Proof of Proposition \ref{proposition: product formula}}
By Theorem 3.1 of \citet{NandyMaathuisRichardson17}, we have,
\[
\theta_{tp}^{(t,j)} \;=\; \theta_{tp} - \theta_{tj}\theta_{jp}.
\]
Hence, $\eta_j = \theta_{tp} - \theta_{tp}^{(t,j)} = \theta_{tj}\theta_{jp}.$ \qed

\subsection{Proof of Lemma \ref{lemma: covariate adjustment}}
The linearity assumption implies that the total causal effect of $X_i$ on $X_k$ can be expressed as the coefficient of $X_i$ in the linear regression of $X_k$ on $X_i \cup X_S$ \citep{MaathuisKalischBuehlmann09, NandyMaathuisRichardson17} for a set of covariates $\mathbf{X}_S \subseteq \mathbf{X} \setminus \{X_i, X_k\}$ that satisfies Pearl's back-door criterion (see Definition 3.3.1 of \cite{Pearl00}). Using this, we obtain
\begin{align*}
\theta_{tj} & \;=\; \beta_{tj | \mathbf{X}_{\{1,\ldots, t-1\}}} ~~\text{and}~~ \theta_{jp} \; =\;  \beta_{jp | \mathbf{Pa}_{\mathcal{G}_0}(X_j) \cup \mathbf{X}_{\{1,\ldots,t\}}}.
\end{align*}
Recall that $\mathcal{G}_0'$ denotes the DAG on the set of potential mediators. Then the result follows from the fact that $\mathbf{Pa}_{\mathcal{G}_0}(X_j)\cup\{X_1,\ldots,X_t\} = \mathbf{Pa}_{\mathcal{G}_0'}(X_j)\cup\{X_1,\ldots,X_t\} $, \tcr{followed by an application of Proposition \ref{proposition: product formula}}. \qed

\subsection{Proof of Lemma \ref{lemma: sub-LSEM}}
The conditions $(B_{\mathcal{G}_0})_{ji} = 0$ for all $i\leq t$ imply that $\mathbf{X}_{\{1,\ldots,t\}} - \boldsymbol{\mu}_{\{1,\ldots,t\}} = (\mathrm{I} - (B_{\mathcal{G}_0})_{\{1,\ldots,t\}\{1,\ldots,t\}}^T)^{-1}  \epsilon_{\{1,\ldots,t\}}$. Therefore, it follows from the independence of the error variables that
\begin{align*}
\Exp[\boldsymbol{\epsilon} \mid \mathbf{X}_{\{1,\ldots,t\}}] & \; = \; \tcr{( \epsilon_{\{1,\ldots,t\}}, 0, \ldots, 0)^T} \\
& \; = \; \mathbf{e}_{\{1,\ldots,t\},p}^T(\mathrm{I} - (B_{\mathcal{G}_0})_{\{1,\ldots,t\}\{1,\ldots,t\}}^T)~(\mathbf{X}_{\{1,\ldots,t\}} - \boldsymbol{\mu}_{\{1,\ldots,t\}}),
\end{align*}
where $\mathbf{e}_{\{1,\ldots,t\},p}^T$ denotes the first $t$ columns of a $p \times p$ identity matrix.

\par\smallskip
Let $A$ denote the $(p-t-1) \times p$ matrix such that $\mathbf{X}' - \boldsymbol{\mu}'= A(\mathbf{X} - \boldsymbol\mu)$, \tcr{where $\boldsymbol{\mu}' := \E(\mathbf{X}')$.} Then, we have,
\begin{align*}
&\Exp[(\boldsymbol{\mathbf{X}'}- \boldsymbol{\mu}') \mid \mathbf{X}_{\{1,\ldots,t\}}] \\
& \; = \; A\Exp[(\mathbf{X} - \boldsymbol\mu) \mid \mathbf{X}_{\{1,\ldots,t\}}] \\
& \; = \; A(\mathrm{I} - B_{\mathcal{G}_0}^T)^{-1} \Exp[\boldsymbol{\epsilon} \mid \mathbf{X}_{\{1,\ldots,t\}}] \\
& \; = \; A(\mathrm{I} - B_{\mathcal{G}_0}^T)^{-1} \mathbf{e}_{\{1,\ldots,t\},p}^T(\mathrm{I} - (B_{\mathcal{G}_0})_{\{1,\ldots,t\}\{1,\ldots,t\}}^T)~(\mathbf{X}_{\{1,\ldots,t\}} - \boldsymbol{\mu}_{\{1,\ldots,t\}}).
\end{align*}
This completes the proof of the linearity property of the conditional expectation $\Exp[\boldsymbol{\mathbf{X}'} \mid \mathbf{X}_{\{1,\ldots,t\}}]$. \qed 

\par\smallskip
\noindent Next, note that for each $j \in \{t+1,\ldots,p-1\}$,
\begin{align*}
\Exp[X_j \mid \mathbf{X}_{\{1,\ldots,t\}}] \; &= \; \mu_j ~+~ \sum_{k = 1}^{p} (B_{\mathcal{G}_0})_{kj} \Exp[X_k \mid \mathbf{X}_{\{1,\ldots,t\}}] ~+~ \Exp[\epsilon_j \mid \mathbf{X}_{\{1,\ldots,t\}}] \\
&= \; \mu_j ~+~ \sum_{k = 1}^{t} (B_{\mathcal{G}_0})_{kj}X_k ~+~ \sum_{k = t+1}^{p-1} (B_{\mathcal{G}_0})_{kj} \Exp[X_k \mid \mathbf{X}_{\{1,\ldots,t\}}].
\end{align*}
The last equality follows from the fact that $(B_{\mathcal{G}_0}){pj} = 0$ and $\Exp[\epsilon_j \mid \mathbf{X}_{\{1,\ldots,t\}}] = \Exp[\epsilon_j]  = 0$. This implies
\[
X_j - \Exp[X_j \mid \mathbf{X}_{\{1,\ldots,t\}}] \; = \; \sum_{k = t+1}^{p-1} (B_{\mathcal{G}_0})_{kj} \left(X_k ~-~ \Exp[X_k \mid \mathbf{X}_{\{1,\ldots,t\}}]\right) + \epsilon_j,
\]
for all $j \in \{t+1,\ldots,p-1\}$. This completes the proof of $\mathbf{X}^{\dagger}= B_{\mathcal{G}_0'}^T \mathbf{X}^{\dagger} + \boldsymbol{\epsilon}'$. \qed

\par\medskip
\noindent Finally, we show that the faithfulness of the distribution of $\mathbf{X}^{\dagger}$ to $\mathcal{G}_0'$ follows from the faithfulness of the distribution of $\mathbf{X}$ to $\mathcal{G}_0$. Suppose $X_{i}^{\dagger}$ and $X_{k}^{\dagger}$ are conditionally independent given $\mathbf{X}_S^{\dagger}$ for some set $S \subseteq \{t+1,\ldots, p-1\} \setminus \{i,k\}$. In order to establish faithfulness of $\mathbf{X}^{\dagger}$ to $\mathcal{G}_0'$, we need to show that $X_i$ and $X_k$ are d-separated by $\mathbf{X}_S$ in $\mathcal{G}_0'$, that is $\mathbf{X}_S$ \emph{blocks} every \emph{path} between $X_i$ and $X_k$ in $\mathcal{G}_0'$. A path in a graph is a sequence of distinct nodes such that all pairs of successive nodes in the sequence are adjacent in the graph, and $S$ blocks a path in $\mathcal{G}_0'$ if the path contains a non-collider that is in $\mathbf{X}_S$, or the path contains a collider that has no descendant in $\mathbf{X}_S$, where $(X_{r},X_{s},X_{t})$ a collider in a graph $\mathcal{G}$ if $\{X_r, X_t\} \subseteq \mathbf{Pa}_{\mathcal{G}}(X_s)$.

Since there is no directed path from $X_i$ or $X_k$ to a node in $\mathbf{X}_{\{1,\ldots,t\}}$, no node in $\mathbf{X}_{\{1,\ldots,t\}}$ can be a collider on a path between $X_i$ and $X_k$. Further, since $(B_{\mathcal{G}_0})_{pj} = 0$ for all $j$, $X_p$ cannot be non-collider on any path in $\mathcal{G}_0$. These imply $X_i$ and $X_k$ are d-separated by $\mathbf{X}_S$ in $\mathcal{G}_0'$ if and only if $X_i$ and $X_k$ are d-separated by $\mathbf{X}_S\cup \mathbf{X}_{\{1,\ldots,t\}}$ in $\mathcal{G}_0$, since all paths between $X_i$ and $X_k$ in $\mathcal{G}_0$ that are not present in $\mathcal{G}_0'$ must go through $\mathbf{X}_{\{1,\ldots,t\}}$ or $X_p$. Therefore, it is sufficient to show that $X_i$ and $X_k$ are d-separated by $\mathbf{X}_S\cup \mathbf{X}_{\{1,\ldots,t\}}$ in $\mathcal{G}_0$. This is equivalent to show that the partial correlation between $X_i$ and $X_k$ given $\mathbf{X}_S\cup \mathbf{X}_{\{1,\ldots,t\}}$, denoted by $\rho_{ik|S\cup\{1,\ldots, t\}}$, is zero, as the distribution of $\mathbf{X}$ is generated from a LSEM and faithful to $\mathcal{G}_0$ (see \cite{SpirtesEtAl98, NandyHauserMaathuis16}).

Note that \tcr{the fact that} $X_{i}^{\dagger}$ and $X_{k}^{\dagger}$ are conditionally independent given $\mathbf{X}_S^{\dagger}$ implies that the partial correlation between $X_i^{\dagger}$ and $X_k^{\dagger}$ given $\mathbf{X}_S^{\dagger}$, denoted by $\rho_{ik|S}^{\dagger}$, is zero. This completes the proof, since Lemma \ref{lemma: partial correlation} 
\tcr{ensures that} $\rho_{ik|S\cup\{1,\ldots,t\}}^{2} = \rho_{ik|S}^{\dagger}$. \qed

\subsection{Proof of Lemma \ref{lemma: partial correlation}}


\tcr{Let $\Sigma^{\dagger}:=\Cov(\mathbf{X}^\dagger)$. Then,} note that the linearity of conditional expectation $\Exp[\mathbf{X}' \mid X_1]$ (Lemma \ref{lemma: sub-LSEM}) implies
$$
\Sigma^{\dagger} = \Sigma_{(t+1,\ldots,p-1)(t+1,\ldots,p-1)} - \Sigma_{(t+1,\ldots,p-1)(1,\ldots,t)}\Sigma_{(1,\ldots,t)(1,\ldots,t)}^{-1}\Sigma_{(1,\ldots,t)(t+1,\ldots,p-1)}.
$$
Recall that if $\sigma_{i|S}^{\dagger} := \Sigma^{\dagger}_{ii} - \Sigma^{\dagger}_{iS}(\Sigma^{\dagger}_{SS})^{-1}\Sigma^{\dagger}_{Si}$ is the variance of the residuals in the linear regression (based on $\Sigma^{\dagger} = \Cov(\mathbf{X}^\dagger)$) of $X_i^{\dagger}$ on $\mathbf{X}_S^{\dagger}$ and $\sigma_{i|S\cup\{k\}}^{\dagger2}$ is the variance of the residuals in the linear regression of $X_i^{\dagger}$ on $\mathbf{X}_{S\cup\{k\}}^{\dagger}$, \tcr{then it holds that} \citep{Yule1907}
\begin{align}\label{eq: yule1907}
\sigma_{i|S\cup\{k\}}^{\dagger2} \; = \; (1 - \rho_{ik|S}^{\dagger2})\sigma_{i|S}^{\dagger2}.
\end{align}

By applying the identity for expressing the Schur complement of a $(r-1) \times (r-1)$ principal submatrix of a $r \times r$ matrix as the ratio of determinants, we obtain
\begin{eqnarray*}
\sigma_{i|S}^{\dagger2} &=& \frac{|\Sigma_{(i,S)(i,S)}^{\dagger}|}{ |\Sigma_{SS}^\dagger|} \;\; = \; \frac{|\Sigma_{(i,S)(i,S)} - \Sigma_{(i,S)(1,\ldots,t)}\Sigma_{(1,\ldots,t)(1,\ldots,t)}^{-1}\Sigma_{(1,\ldots,t)(i,S)}|}{|\Sigma_{SS} - \Sigma_{S(1,\ldots,t)}\Sigma_{(1,\ldots,t)(1,\ldots,t)}^{-1}\Sigma_{(1,\ldots,t)S}|} \\
&=& \frac{|\Sigma_{(i,S,1,\ldots,t)(i,S,1,\ldots,t)}|\cdot|\Sigma_{(1,\ldots,t)(1,\ldots,t)}|}{|\Sigma_{(1,\ldots,t)(1,\ldots,t)}|\cdot|\Sigma_{(S,1,\ldots,t)(S,1,\ldots,t)}|} \;\; = \; \sigma_{i|S\cup\{1,\ldots,t\}}^2,
\end{eqnarray*}
where $\sigma_{i|S\cup\{1,\ldots,t\}}^2$ is the variance of the residuals in the linear regression (based on $\Sigma = \Cov(\mathbf{X})$) of $X_i$ on $\{X_{r} : r \in S\cup\{1,\ldots,t\}\}$. Similarly, we have $\sigma_{i|S\cup\{k\}}^{\dagger2} = \sigma_{i|S\cup\{1,\ldots,t\}\cup\{k\}}$. Hence, from \ref{eq: yule1907}, we have
\[
1 - \rho_{ik|S}^{\dagger2} \; = \; \frac{\sigma_{i|S\cup\{k\}}^{\dagger2}}{\sigma_{i|S}^{\dagger2}} \; = \; \frac{\sigma_{i|S\cup\{1,\ldots,t\}\cup\{k\}}^{2}}{\sigma_{i|S\cup\{1,\ldots,t\}}^{2}} \; = \; 1 - \rho_{ik|S\cup\{1,\ldots,t\}}^{2}. \quad \qed
\]


\subsection{Proof of Theorem \ref{theorem: high-dimensional consistency}}
\tcr{We first state and prove a lemma that will be useful in the main proof. The proof is then presented in several parts.}
\begin{lemma}\label{lemma: bounded psi2 norm}
Let $\mathbf{X}_{}$ be generated from a LSEM characterized by $(B_{\mathcal{G}_{0}},\boldsymbol{\epsilon}_{})$. Then Assumptions \ref{assumption: LSEM}, \ref{assumption: sparsity} and \ref{assumption: eigenvalue condition} imply that for any $S \subseteq \{1,\ldots, p\}$ such that $|S| \leq q + t + 2$,
$$
||\mathbf{X}_S - \boldsymbol{\mu}_S||_{\psi_2}^* \; \leq \; C_4,
$$ where $C_4 > 0$ is an absolute constant depending on $C_1$, $C_2$ and $C_3$ given by Assumptions \ref{assumption: LSEM} and \ref{assumption: eigenvalue condition}, and  $||\cdot||_{\psi_2}^*$ denotes the vector sub-Gaussian norm given by Definition \ref{def:subgaussvect}. 
\end{lemma}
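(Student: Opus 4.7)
The plan is to reduce the uniform sub-Gaussian bound to a spectral estimate via the LSEM representation. Writing $\mathbf{X} - \boldsymbol{\mu} = A\,\boldsymbol{\epsilon}$ with $A := (I - B_{\mathcal{G}_0}^T)^{-1}$, and letting $A_{S,\cdot}$ denote the $|S| \times p$ submatrix of rows of $A$ indexed by $S$, for any $v \in \R^{|S|}$ with $\|v\|_2 \leq 1$ I have $v^T(\mathbf{X}_S - \boldsymbol{\mu}_S) = c^T \boldsymbol{\epsilon}$ where $c := A_{S,\cdot}^T v \in \R^p$. Since $\boldsymbol{\epsilon}$ has \emph{independent} coordinates that are uniformly sub-Gaussian by Assumption \ref{assumption: LSEM}, the standard bound for sums of independent sub-Gaussians (viz.\ $\|\sum_j c_j \epsilon_j\|_{\psi_2}^2 \lesssim \sum_j c_j^2 \|\epsilon_j\|_{\psi_2}^2$) gives
\[
\|v^T(\mathbf{X}_S - \boldsymbol{\mu}_S)\|_{\psi_2}^2 \;\lesssim\; C_1^2\, \|c\|_2^2 \;=\; C_1^2\, v^T (A_{S,\cdot} A_{S,\cdot}^T) v.
\]
It therefore suffices to bound $\lambda_{\max}(A_{S,\cdot} A_{S,\cdot}^T)$ by an absolute constant, uniformly in $S$ with $|S| \leq \tilde{q}+3$.

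The bridge to Assumption \ref{assumption: eigenvalue condition} is the identity $(\Sigma_0)_S = A_{S,\cdot}\,\Omega\, A_{S,\cdot}^T$, where $\Omega := \mathrm{diag}(\sigma_1^2, \ldots, \sigma_p^2)$ and $\sigma_j^2 := \Var(\epsilon_j)$. If I can exhibit a uniform lower bound $\sigma_{\min}^2 := \min_j \sigma_j^2 > 0$, then $A_{S,\cdot} A_{S,\cdot}^T \preceq \sigma_{\min}^{-2} (\Sigma_0)_S$ in the positive-definite ordering, so that Assumption \ref{assumption: eigenvalue condition} yields $\lambda_{\max}(A_{S,\cdot} A_{S,\cdot}^T) \leq C_3/\sigma_{\min}^2$. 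Chaining with the previous display gives the claim with $C_4$ depending only on $C_1, C_2, C_3$.

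The main obstacle is therefore establishing the uniform lower bound on $\sigma_j^2$. My plan here is to invoke the Schur complement identity
\[
\sigma_j^2 \;=\; \Var\bigl(X_j \mid \mathbf{X}_{\mathbf{Pa}_{\mathcal{G}_0}(X_j)}\bigr) \;=\; \Bigl[\bigl((\Sigma_0)_{\{j\}\cup \mathbf{Pa}_{\mathcal{G}_0}(X_j)}\bigr)^{-1}\Bigr]_{jj}^{-1} \;\geq\; \lambda_{\min}\bigl((\Sigma_0)_{\{j\}\cup \mathbf{Pa}_{\mathcal{G}_0}(X_j)}\bigr),
\]
and then apply Assumption \ref{assumption: eigenvalue condition} after verifying that $|\{j\}\cup \mathbf{Pa}_{\mathcal{G}_0}(X_j)| \leq \tilde{q}+3$ for every $j$. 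This size check is driven by Assumption \ref{assumption: sparsity} together with the structural restrictions on $\mathcal{G}_0$: the treatment $X_1$ has no parents (so trivially $\sigma_1^2 = \Var(X_1) \geq C_2$ via the $1 \times 1$ submatrix); each potential mediator $X_j$ has parents contained in $\{X_1\}\cup \mathbf{Pa}_{\mathcal{G}_0'}(X_j)$, whose cardinality is at most $1 + q \leq 1 + \tilde{q}$; and the response $X_p$'s parent set admits the analogous degree control. This delivers $\sigma_{\min}^2 \geq C_2$, and plugging back yields the constant $C_4$ depending only on $C_1, C_2, C_3$, uniformly in all admissible $S$.
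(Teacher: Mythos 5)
Your proposal is correct and follows essentially the same route as the paper: write $\mathbf{X}_S - \boldsymbol{\mu}_S$ as a linear image of the independent sub-Gaussian error vector, transfer the sub-Gaussian bound through the factorization $(\Sigma_0)_S = A_{S,\cdot}\,\Omega\, A_{S,\cdot}^T$ so that Assumption \ref{assumption: eigenvalue condition} controls the relevant spectral norm, and lower-bound the error variances via the Schur-complement/eigenvalue-interlacing argument combined with the sparsity bound on parent-set sizes. The only cosmetic difference is that you bound $\psi_2$-norms of the scalars $v^T(\mathbf{X}_S-\boldsymbol{\mu}_S)$ directly by a Hoeffding-type inequality, whereas the paper first bounds $\|\boldsymbol{\epsilon}\|_{\psi_2}^*$ (Lemma 5.24 of Vershynin) and then multiplies by the spectral norm of the mixing matrix; the content is the same.
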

\begin{proof}
Fix $S \subseteq \{1,\ldots, p\}$ such that $|S| \leq q + t + 2$. Let $A$ be the $|S|\times p$ matrix such that $\mathbf{X}_S - \boldsymbol{\mu}_S= A(\mathbf{X} - \boldsymbol{\mu})$. Therefore, we have, $\mathbf{X}_S - \boldsymbol{\mu}_S = A (I - B_{G_0}^T)^{-1}\boldsymbol{\epsilon}$. Hence,
\begin{align*}
||\mathbf{X}_S - \boldsymbol{\mu}_S||_{\psi_2}^* &\; =~ \sup_{||\mathbf{v}||_2 = 1} ||\mathbf{v}^TA (I - B_{G_0}^T)^{-1} \boldsymbol{\epsilon}||_{\psi_2} \\
&\; =~ \sup_{||\mathbf{v}||_2 = 1} ||\mathbf{v}^TA (I - B_{G_0}^T)^{-1}||_2 ~ \left|\left|\frac{\mathbf{v}^TA (I - B_{G_0}^T)^{-1}\boldsymbol{\epsilon}}{||\mathbf{v}^TA (I - B_{G_0}^T)^{-1}||_2}\right|\right|_{\psi_2} \\
&\; \leq ~ ||A (I - B_{G_0}^T)^{-1}||_2~ ||\boldsymbol{\epsilon}||_{\psi_2}^*,
\end{align*}
where the last inequality follows from the definitions of spectral norm and $||\cdot||_{\psi_2}^*$ norm and the fact that $\underset{\mathbf{x}}{\sup}~ f(\mathbf{x})~ g(\mathbf{x}) ~\leq~ \underset{\mathbf{x}}{\sup}~ f(\mathbf{x}) ~\underset{\mathbf{x}}{\sup} ~ g(\mathbf{x})$.

Since $\epsilon_1, \epsilon_2,\ldots,\epsilon_p$ are independent zero-mean sub-Gaussian random variables satisfying $\underset{1 \leq i \leq p}{\max}||\epsilon_i||_{\psi_2} \leq C_1$, it follows from Lemma 5.24 of \cite{Vershynin12} that $||\boldsymbol{\epsilon}||_{\psi_2}^* \leq C_0C_1$, for some absolute constant $C_0$. Furthermore, since $(\Sigma_0)_S = A (I - B_{G_0}^T)^{-1} D (I - B_{G_0}^T)^{-T} A^T$ for $D := \Cov(\boldsymbol{\epsilon})$, it follows from the sub-multiplicity property of the spectral norm that
\[
||A (I - B_{G_0}^T)^{-1}||_2 \; \leq \; ||(\Sigma_0)_S^{1/2}||_2 ~||D^{-1/2}||_2 \; \leq \; \sqrt{C_3} ~||D^{-1/2}||_2,
\]
where the last inequality follows from Assumption \ref{assumption: eigenvalue condition}.

Thus it remains to show that $||D^{-1/2}||_2$ is bounded. To this end, note that $$||D^{-1/2}||_2 \; = \; \frac{1}{\sqrt{\underset{1 \leq i \leq p}{\min} \Var(\epsilon_i)}}.$$
Finally, from the interlacing property of
eigenvalues of a Hermitian matrix A and the eigenvalues of the Schur complement
of any principal submatrix of A (see, e.g., Corollary 2.3 of \cite{Zhang05}), it follows that
\begin{align*}
\Var(\epsilon_i) &\; =~ (\Sigma_0)_i - (\Sigma_0)_{i\mathbf{Pa}_{\mathcal{G}_0}(i)}(\Sigma_0)_{\mathbf{Pa}_{\mathcal{G}_0}(i)\mathbf{Pa}_{\mathcal{G}_0}(i)}^{-1}(\Sigma_0)_{\mathbf{Pa}_{\mathcal{G}_0(i)} i} \\
& \; \geq~  \lambda_{\min}\left((\Sigma_0)_{(i, \mathbf{Pa}_{\mathcal{G}_0}(i))(i, \mathbf{Pa}_{\mathcal{G}_0}(i))}\right) \\
& \; \geq~ C_2,
\end{align*}
where $\mathbf{Pa}_{\mathcal{G}_0}(i) = \{r : X_r \in \mathbf{Pa}_{\mathcal{G}_0}(X_i)$. The last inequality follows from Assumption \ref{assumption: eigenvalue condition}, since from Assumption \ref{assumption: sparsity}, we have $|\mathbf{Pa}_{\mathcal{G}_0}(i)| \leq q$.

\tcr{Combining the bounds for $\Var(\epsilon_i), ||D^{-1/2}||_2$ and $||A (I - B_{G_0}^T)^{-1}||_2$ above and applying them in the original bound for $||\mathbf{X}_S - \boldsymbol{\mu}_S||_{\psi_2}^*$ yields the desired result. This completes the proof of Lemma \ref{lemma: bounded psi2 norm}.}
\end{proof}

\begin{proofof}[the first part of Theorem \ref{theorem: high-dimensional consistency}]

Let $A_n := \{\hat{\mathcal{C}}'(\lambda_n) = \mathcal{C}_{0}\}$. 
Since $\Prob(A_n) \longrightarrow 0$ (by Assumption \ref{assumption: structure learning consistency}) and $\mathrm{aver}(\hat{\Theta}_{jp}(\hat{\mathcal{C}}'))$ $= \mathrm{aver}(\hat{\Theta}_{jp}(\mathrm{C}_{0}'))$ on the set $A_n$, it is sufficient to show that for any $\delta>0$,
\begin{align*}
\Prob\left(\underset{t< j < p}{\max}| \mathrm{aver}(\hat{\Theta}_{jp}(\mathcal{C}_{0}')) - \mathrm{aver}(\Theta_{jp})|>\delta,~A_n\right) \; \longrightarrow \; 0.
\end{align*}
For $j \in \{t+1,\ldots,p-1\}$, we denote the distinct elements in the multi-set $\{\mathbf{Pa}_{\mathcal{G}}(X_j)\cup\{\mathbf{X}_{\{1,\ldots,t\}}\} : \mathcal{G} \in \MEC(\mathcal{C}_0')\}$ by $\{\mathbf{X}_{S_{j1}},\ldots,\mathbf{X}_{S_{jm_j}}\}$. By Assumption \ref{assumption: sparsity}, we have $m_{j} \leq 2^{q}$ for all $j \in \{t+1,\ldots,p-1\}$. Therefore,
\begin{align*}
&~~\Prob\left(\underset{t< j < p}{\max}| \mathrm{aver}(\hat{\Theta}_{jp}(\mathcal{C}_{0}')) - \mathrm{aver}(\Theta_{jp})|>\delta,~A_n\right) \\
\leq&~~\Prob\left(\underset{t< j < p}{\max}~\underset{1\leq r \leq m_{j}}{\max} | \hat{\beta}_{jp|\mathbf{X}_{S_{jr}}} - \beta_{jp|\mathbf{X}_{S_{jr}}} | ~>\delta,~A_n\right) \\
\leq& ~~(p-2)2^{q}\underset{t< j < p}{\max}~\underset{1\leq r \leq m_{j}}{\max} \Prob\left( | \hat{\beta}_{jp|\mathbf{X}_{S_{jr}}} - \beta_{jp|\mathbf{X}_{S_{jr}}} |~>\delta \right).
\end{align*}

We complete the proof by showing that for all $j \in \{t+1,\ldots,p-1\}$ and $r \in \{1,\ldots,m_{j}\}$, $\Prob(| \hat{\beta}_{jp|\mathbf{X}_{S_{jr}}} - \beta_{jp|\mathbf{X}_{S_{jr}}} |>\delta) \leq 2\exp(-C_7n\delta^2)$ for some absolute constant $C_7>0$. Note that this implies
\begin{eqnarray*}
&& (p-2)2^{q}\underset{t< j < p}{\max}~\underset{1\leq r \leq m_{nj}}{\max} \Prob(| \hat{\beta}_{jp|\mathbf{X}_{S_{jr}}} - \beta_{jp|\mathbf{X}_{S_{jr}}} |>\delta) \\
&& \qquad \leq \; O(\exp((\log(2)+\log(p))q - C_7 n\delta^2)) \; \longrightarrow \; 0,
\end{eqnarray*}
since from Assumptions \ref{assumption: high-dim} and \ref{assumption: sparsity}, we have $p = O(n^a)$ and $q = O(n^{1-b_1})$ for some $0\leq a < \infty$ and $0 < b_1 \leq 1$.

Fix $j \in \{t+1,\ldots,p-1\}$ and $r \in \{1,\ldots,m_{j}\}$. Let $\Sigma$ and $\Sigma'$ denote the submatrices of $\Sigma_{0} = \Cov(\mathbf{X})$ that corresponds to $(X_{j},\mathbf{X}_{S_{jr}},X_{p})$ and $(X_{nj},\mathbf{X}_{S_{jr}})$ respectively. Then $\beta_{jp|\mathbf{X}_{S_{jr}}} = \mathbf{e}_1^{T}\Sigma^{\prime-1}\boldsymbol\sigma_{p}$, where $\mathbf{e}_1$ denote the first column of an identity matrix of appropriate order and $\boldsymbol\sigma_{p}$ denote the last column of $\Sigma$. Similarly, we define the corresponding sample covariance matrices $\hat{\Sigma}$ and $\hat{\Sigma}'$ to obtain
$\hat{\beta}_{jp|\mathbf{X}_{S_{jr}}} = \mathbf{e}_1^{T}\hat{\Sigma}^{\prime-1}\hat{\boldsymbol\sigma}_{p}$, where $\hat{\boldsymbol\sigma}_{p}$ denote the last column of $\hat{\Sigma}$. We show below that
\begin{align}\label{eq: trivial inequality}
|  \mathbf{e}_1^{T}\hat{\Sigma}^{\prime-1}\hat{\boldsymbol\sigma}_{p} - \mathbf{e}_1^{T}\Sigma^{\prime-1}\boldsymbol\sigma_{p} | \;\; & \leq \; \frac{1}{C_2} ||\hat{\Sigma} - \Sigma||_2 + C_3 ||\hat{\Sigma}^{\prime-1} - \Sigma^{\prime-1} ||_2 \nonumber \\
& \;\;\; + \; ||\hat{\Sigma} - \Sigma||_2 ||\hat{\Sigma}^{\prime-1} - \Sigma^{\prime-1} ||_2,
\end{align}
where for a matrix $A$, $||A||_2$ denote its spectral norm, and $C_3$ and $C_2$ are given by Assumption \ref{assumption: eigenvalue condition}. To this end, we first apply the inequality
$$
|\mathbf{a}_1^T\mathbf{a}_2 - \mathbf{b}_1\mathbf{b}_2| ~\leq~ ||\mathbf{b}_1||_2 ||\mathbf{a}_2 - \mathbf{b}_2 ||_2 \hspace{0.03in}+\hspace{0.03in} ||\mathbf{b}_2||_2 ||\mathbf{a}_1 - \mathbf{b}_1 ||_2 \hspace{0.03in}+\hspace{0.03in} ||\mathbf{a}_1 - \mathbf{b}_1 ||_2 ||\mathbf{a}_2 - \mathbf{b}_2 ||_2,
$$
with $\mathbf{a}_1= \hat{\Sigma}_n^{\prime-1}\mathbf{e}_1$, $\mathbf{a}_2 = \hat{\boldsymbol\sigma}_{p}$, $\mathbf{b}_1= \Sigma^{\prime-1}\mathbf{e}_1$ and $\mathbf{b}_2 = \boldsymbol\sigma_{p}$, where for a vector $\mathbf{a}$, $||\mathbf{a}||_2$ denote its $\ell_2$ norm. Next, note that $||\Sigma^{\prime-1}\mathbf{e}_1||_2 \leq ||\Sigma^{\prime-1}||_2 \leq 1/C_2$, where the last inequality follows from Assumption \ref{assumption: eigenvalue condition} and Cauchy's interlacing theorem for eigenvalues of positive definite matrices, since $|S_{jr}| \leq q +t+2$. Similarly, we have $|| \boldsymbol\sigma_{p}||_2 \leq C_3$. This completes the proof of \eqref{eq: trivial inequality}.

From Lemma \ref{lemma: bounded psi2 norm}, we have $||(X_{j}, \mathbf{X}_{S_{jr}}^T,X_{p})^T||_{\psi_2}$ $< C_4$ for some constant $C_4>0$. Therefore, for any $\delta \in (0,1)$ and sufficiently large $n$, we have
\vspace{-0.05in}
\begin{align}\label{eq: concentration 1}
\Prob(||\hat{\Sigma} - \Sigma||_2 > \delta) \;\leq\; 2\exp(-C_5n\delta^2),
\end{align}
for some absolute constant $C_5>0$ depending on $C_4$ (see Corollary 5.50 of \cite{Vershynin12}). Similarly, for any $\delta \in (0,1)$ and sufficiently large $n$, we have
\vspace{-0.05in}
\begin{align}\label{eq: concentration 2}
\Prob(||\hat{\Sigma}' - \Sigma'||_2 > \delta) \;\leq\; 2\exp(-C_5n\delta^2).
\end{align}

We show below that a similar result holds for $||\hat{\Sigma}^{\prime-1} - \Sigma^{\prime-1}||_2$. To this end, we consider $\delta \leq C_2/2$ and $||\hat{\Sigma} - \Sigma||_2 \leq \delta$. Using the sub-multiplicity property of the spectral norm, we obtain
$$
||(\hat{\Sigma}' - \Sigma')\Sigma^{\prime-1}||_2 \;\leq\; ||\hat{\Sigma}' - \Sigma'||_2 ||\Sigma^{\prime-1}||_2  \;\leq\; \frac{\delta}{C_2} \; \leq \; 1/2  < 1.
$$  This implies $(\hat{\Sigma}' - \Sigma')\Sigma^{\prime-1} + \mathrm{I}$ is invertible and the following inequality holds (see, for example, Section 5.8 of \citet{HornJohnson90}):
\begin{eqnarray}
 ||((\hat{\Sigma}' - \Sigma')\Sigma^{\prime-1} + \mathrm{I})^{-1} - \mathrm{I}||_2 &\leq& \frac{||(\hat{\Sigma}' - \Sigma')\Sigma^{\prime-1}||_2}{1-||(\hat{\Sigma}' - \Sigma')\Sigma^{\prime-1}||_2} \label{eq: matrix inversion} \\
&\leq & 2~||(\hat{\Sigma}' - \Sigma')\Sigma^{\prime-1}||_2 \quad \leq \; \frac{2\delta}{C_2}, \nonumber
\end{eqnarray}
where the second inequality follows from $||(\hat{\Sigma}' - \Sigma')\Sigma^{\prime-1}||_2 \leq 1/2$ and the third inequality follows from the sub-multiplicity property of the spectral norm, the assumption that $||\hat{\Sigma}' - \Sigma'||_2 \leq \delta$ and Assumption \ref{assumption: eigenvalue condition}.

Therefore, $||\hat{\Sigma}' - \Sigma'||_2 \leq \delta < C_2/2$ implies
\begin{align}\label{eq: matrix inversion 2}
||\hat{\Sigma}^{\prime-1} - \Sigma^{\prime-1}||_2 & \; = \; ||\Sigma^{\prime-1}\{((\hat{\Sigma}' - \Sigma')\Sigma^{\prime-1} + \mathrm{I})^{-1} - \mathrm{I}\}||_2 \nonumber \\
& \; \leq \; ||\Sigma^{\prime-1}||_2 \frac{2\delta}{C_2} \; \leq \; \frac{2\delta}{C_2^2},
\end{align}
where the second inequality follows from the sub-multiplicity property of the spectral norm and \eqref{eq: matrix inversion}, and the third inequality follows from Assumption \ref{assumption: eigenvalue condition}. By combining \eqref{eq: concentration 2} and \eqref{eq: matrix inversion 2}, we have, for any $\delta \in (0, C_2 / 2)$,
\vspace{-0.05in}
\begin{align}\label{eq: concentration 3}
\Prob(||\hat{\Sigma}_n^{\prime-1} - \Sigma_n^{\prime-1}||_2 > \delta) \; \leq \; 2\exp(-C_6n\delta^2),
\end{align}
where $C_6>0$ is an absolute constant depending on $C_2$ and $C_5$.

Finally, by combining, \eqref{eq: trivial inequality}, \eqref{eq: concentration 1} and \eqref{eq: concentration 2}, we obtain
\begin{align}\label{eq: concentration 4}
\Prob\left(| \hat{\beta}_{jp|\mathbf{X}_{S_{jr}}} - \beta_{jp|\mathbf{X}_{S_{jr}}} |>\delta \right) \; \leq \; 2\exp(-C_7n\delta^2),
\end{align}
for some absolute constant $C_7>0$ depending on $C_3$, $C_2$ and $C_4$. This completes the proof of the first part of Theorem \ref{theorem: high-dimensional consistency}. \qed
\end{proofof}

\begin{proofof}[the second part of Theorem \ref{theorem: high-dimensional consistency}]
\tcr{First we recall} that \tcr{$\hat{\eta}_{j}(\lambda_n)$} $= \hat{\beta}_{tj|\mathbf{X}_{\{1,\ldots,t\}}} \tcr{\times} \mathrm{aver}(\hat{\Theta}_{jp}(\lambda_n))$, and $\theta_{tj} = \beta_{tj | \mathbf{X}_{\{1,\ldots,t\}}}$. Therefore, we have
\begin{align}\label{eq: decomposition}
 | \hat{\eta}_{j}(\lambda_n) - \theta_{tj}\cdot\mathrm{aver}(\Theta_{jp}) | \;\; & \leq \; |\theta_{tj}|\cdot |\mathrm{aver}(\hat{\Theta}_{jp}(\lambda_n) - \mathrm{aver}(\Theta_{jp})| \nonumber \\
& \quad + \; |\mathrm{aver}(\Theta_{jp})| \cdot |\hat{\beta}_{tj | \mathbf{X}_{\{1,\ldots,t\}}} -  \beta_{tj | \mathbf{X}_{\{1,\ldots,t\}}}|.
\end{align}
From the \tcr{first} 
part of Theorem \ref{theorem: high-dimensional consistency}, we have
\begin{align}\label{eq: concentration 5}
\underset{t< j < p}{\max}\left| \mathrm{aver}(\hat{\Theta}_{jp}(\lambda_n)) - \mathrm{aver}(\Theta_{jp})\right| \; \overset{\Prob}{\longrightarrow} \; 0.
\end{align}
Further, by similar argument as given in the proof of the second part of Theorem \ref{theorem: high-dimensional consistency}, we can show that
\begin{align}\label{eq: concentration 6}
\Prob\left( \underset{t< j < p}{\max}|\hat{\beta}_{tj | \mathbf{X}_{\{1,\ldots,t\}}} -  \beta_{tj | \mathbf{X}_{\{1,\ldots,t\}}}|>\delta \right) \; \leq \; 2(p-2)\exp(-C_7n\delta^2) \; \longrightarrow \; 0,
\end{align}
where $C_7$ is as in \eqref{eq: concentration 4}.

Finally, note that for any $i,k \in \{t+1,\ldots,p\}$ and $S \subseteq \{1,\ldots,p\}\setminus\{i,k\}$ such that $|S| \leq q+t+2$, we have
\begin{align}\label{eq: upper bound 1}
|\beta_{ik|\mathbf{X}_S}| \; = \; |\mathbf{e}_1^{T}\Sigma^{\prime-1}\boldsymbol\sigma_{k}| \; \leq \; ||\Sigma^{\prime-1}\mathbf{e}_1||_2 ||\boldsymbol\sigma_{k}||_2 \; \leq \; \frac{C_3}{C_2},
\end{align}
where $\Sigma$ and $\Sigma'$ denote the submatrices of $\Sigma_{0} = \Cov(\mathbf{X})$ corresponding to $(X_{i},\mathbf{X}_S,X_{k})$ and $(X_{i},\mathbf{X}_S)$ respectively, $\boldsymbol\sigma_{k}$ is the last column of $\Sigma$ Note that the first inequality in \eqref{eq: upper bound 1} is Cauchy-Schwarz and the second inequality in \eqref{eq: upper bound 1} follows from Assumption \ref{assumption: eigenvalue condition} and Cauchy's interlacing theorem for eigenvalues of positive definite matrices (we used similar arguments in the proof of the second part of Theorem \ref{theorem: high-dimensional consistency}).

Since \eqref{eq: upper bound 1} implies that $|\theta_{tj}|$ and $|\mathrm{aver}(\Theta_{njp})|$ are bounded above by $C_3/C_2$ for all $j \in \{t+1,\ldots,p\}$, \tcr{the second} 
 part of Theorem \ref{theorem: high-dimensional consistency} \tcr{now} follows from \eqref{eq: decomposition}, \eqref{eq: concentration 5} and \eqref{eq: concentration 6}. \tcr{This completes the proof of Theorem \ref{theorem: high-dimensional consistency}.} \qed
\end{proofof}


\subsection{Proof of Theorem \ref{linreg:main_thm1}}\label{proof:main:thm1}
We first state below two supporting lemmas that serve as essential ingredients in our proof of Theorem \ref{linreg:main_thm1} and may also be of independent interest. Their proofs are given in \tcR{Appendices} 
\ref{proof:main:lem1} and \ref{proof:main:lem2}.

\par\medskip
\tcr{Recall the notations $\bSigmahatS, \bGammahatS, \bSigmahatSY,\bGammahatSY$ from Section \ref{section: linear regression theory} and further define:}
\begin{align*}
 & \tcr{\;\;\bSigmatilS  \;\; := \; \frac{1}{n} \sum_{i=1}^n (\bXSi - \bmuS)(\bXSi - \bmuS)^T \; \equiv \; \bSigmahatS + \bGammahatS, \quad \mbox{and}} \nonumber \\
 & \tcr{\bSigmatilSY  \; := \; \frac{1}{n} \sum_{i=1}^n (Y_i - \muY)(\bXSi - \bmuS) \;\; \equiv \; \bSigmahatSY + \bGammahatSY.} \nonumber 
\end{align*}

\begin{lemma}\label{linreg:main_lem1}
Suppose Assumption \ref{linreg:main_assmpn}(i) holds for a given $\Ssc \subseteq \Omega_{\Jsc}$ with $|\Ssc| := \Ln \equiv \LnS $ and $\sup_{S \in \Ssc} |S| \leq \qn \equiv \qnS \leq \min(n,p_n)$, and let $\rn := (\qn + \log \Ln)$. Then, for any $c > 0$, the following bounds hold:
\begin{align}
&(i) \;\; \P\left\{ \underset{S \in \Ssc}{\sup} \MLtwonorm{\bSigmatilS - \bSigmaS} > (c+1)\KSsc \left( \sqrt{\frac{\rn}{n}} + \frac{\rn}{n} \right) \right\} \leq \; 2 \exp\left( -c\rn \right), \label{eq:main_lem1:res1} \\ 
&(ii) \; \P\left\{ \underset{S \in \Ssc}{\sup} \MLtwonorm{\bGammahatS} > 16 (c+1)\KSsc \left(\frac{\sqrt{\rn}}{n} + \frac{\rn}{n} \right) + \frac{\lambdasupSsc}{n} \right\} \leq \; 2 \exp\left( -c\rn \right), \nonumber \\ 
&(iii) \; \P\left\{ \underset{S \in \Ssc}{\sup} \MLtwonorm{\bSigmahatS - \bSigmaS} > (c+1)\KSsc \left( \sqrt{\frac{\rn}{n}} + \frac{33 \rn}{n} \right) 
+ \frac{\lambdasupSsc}{n} \right\} \nonumber \\ 
&\qquad\qquad \leq \; 4 \exp\left( -c\rn \right). \nonumber 
\end{align}
Further, let $\rntil := (\rn +1)$, and suppose Assumption \ref{linreg:main_assmpn}(ii) also holds. Then, for any $c > 0$, the following bounds hold:
\begin{align}
& (i) \;\; \P\left\{ \underset{S \in \Ssc}{\sup} \Ltwonorm{\bSigmatilSY - \bSigmaSY} > (c+1)\KtilSsc \left( \sqrt{\frac{\rntil}{n}} + \frac{\rntil}{n} \right) \right\}  \leq \; 2 \exp\left(-c\rntil \right), \label{eq:main_lem1:res2} \\ 
& (ii) \; \P\left\{ \underset{S \in \Ssc}{\sup} \Ltwonorm{\bGammahatSY} > 16 (c+1)\KtilSsc \left(\frac{\sqrt{\rntil}}{n} + \frac{\rntil}{n} \right) + \frac{\lambdatilsupSsc}{n} \right\} \nonumber \\
& \qquad\qquad \leq \; 2 \exp(-c\rntil ). \nonumber
\end{align}
Lastly, the constants $\lambdasupSsc$ and $\lambdatilsupSsc$ may be chosen such that $\lambdasupSsc \leq 2 \sigmaXSsc^2$ and $\lambdatilsupSsc \leq 2 \sigmatilZSsc^2  \equiv 2 (\sigmaY + \sigmaXSsc)^2$. Moreover, $\sup_{s\in \Ssc} \Ltwonorm{\bbetaS}^2 \; \leq \; \Var(Y) \lambdainfSsc^{-1} \; \leq \; 2 \sigmaY^2 \lambdainfSsc^{-1} $.
\end{lemma}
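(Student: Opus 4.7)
The plan is to reduce each uniform bound in the lemma to a pointwise concentration inequality applied at each fixed $S \in \Ssc$ followed by a union bound over $\Ssc$. The key ingredient is the sub-Gaussian sample covariance concentration cited as Lemma \ref{lem:7:matconc}. Applied to the isotropic sub-Gaussian vector $\bSigmaS^{-1/2}(\bXS - \bmuS)$, whose sub-Gaussian norm is bounded by $\sigmaXSsc \lambdainfSsc^{-1/2}$, and then rescaled back to the original coordinates, this yields for each $S \in \Ssc$ a tail bound of the form $\P\{\MLtwonorm{\bSigmatilS - \bSigmaS} > \KSsc(\sqrt{(s+u)/n} + (s+u)/n)\} \leq 2 e^{-u}$ for every $u > 0$, where the prefactor $\KSsc = C_1 \sigmaXSsc^2 \lambdasupSsc / \lambdainfSsc$ arises naturally from the isotropic transformation. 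Taking $u = \log\Ln + c\rn$ and union-bounding over $\Ssc$ absorbs $\log \Ln$ into the exponent, while $s + u \leq (c+1)\rn$ combined with the elementary inequality $\sqrt{(c+1)\rn} + (c+1)\rn \leq (c+1)(\sqrt{\rn} + \rn)$ delivers claim (\ref{eq:main_lem1:res1})(i) in the stated form.

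Claim (\ref{eq:main_lem1:res2})(i) follows by the same route applied to the augmented sub-Gaussian vector $\bZS := (Y, \bXS)^T \in \R^{s+1}$, which under Assumption \ref{linreg:main_assmpn}(ii) satisfies $\psitwonormstar{\bZS} \leq \sigmatilZSsc$ (by standard subadditivity of the $\psitwonormstar{\cdot}$ norm) and has covariance $\bXiS$ with eigenvalues in $[\lambdatilinfSsc, \lambdatilsupSsc]$. Since $\bSigmaSY$ appears as an off-diagonal block of $\bXiS$, the Euclidean norm $\Ltwonorm{\bSigmatilSY - \bSigmaSY}$ is dominated by $\MLtwonorm{\bXitilS - \bXiS}$, so the covariance bound transfers directly; the extra dimension is responsible for the replacement $\rn \to \rntil = \rn + 1$ and $\KSsc \to \KtilSsc$.

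For claims (\ref{eq:main_lem1:res1})(ii) and (\ref{eq:main_lem1:res2})(ii), I would exploit the rank-one structure $\MLtwonorm{\bGammahatS} = \Ltwonorm{\bXbarS - \bmuS}^2$. Since $\bXbarS - \bmuS = n^{-1}\sum_i(\bXSi - \bmuS)$ is itself sub-Gaussian with $\psitwonormstar{\cdot} \lesssim \sigmaXSsc/\sqrt{n}$ and covariance $\bSigmaS/n$, a Hanson--Wright bound on the quadratic form $\Ltwonorm{\bXbarS - \bmuS}^2$ (or equivalently, an $\varepsilon$-net argument on the unit sphere of $\R^s$ combined with the scalar sub-Gaussian tails of $\bu^T(\bXbarS - \bmuS)$ and a net of size $\leq 9^s$) controls the deviation from its mean $\trace(\bSigmaS)/n \leq s\lambdasupSsc/n$. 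Union-bounding over $\Ssc$ with $u = \log\Ln + c\rn$ then produces the stated form $16(c+1)\KSsc(\sqrt{\rn}/n + \rn/n) + \lambdasupSsc/n$, where the additive $\lambdasupSsc/n$ term absorbs the expected-value contribution after reallocating the $s \leq \qn \leq \rn$ factor. The analogous argument on $\bZS$ yields (\ref{eq:main_lem1:res2})(ii).

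Claim (\ref{eq:main_lem1:res1})(iii) follows immediately from the identity $\bSigmahatS = \bSigmatilS - \bGammahatS$ (obtained by expanding the sample covariance and regrouping) and the triangle inequality applied to (i) and (ii); the failure probabilities add to $4 e^{-c\rn}$ and the constants combine to produce the stated $33$. The side claims are routine: $\lambdasupSsc \leq 2\sigmaXSsc^2$ follows since $\bu^T \bSigmaS \bu = \Var(\bu^T \bXS) \leq 2 \psitwonorm{\bu^T(\bXS - \bmuS)}^2 \leq 2 \sigmaXSsc^2$ for every unit $\bu$ (the factor $2$ being the standard sub-Gaussian second-moment constant), and $\lambdatilsupSsc \leq 2\sigmatilZSsc^2$ follows analogously using $\psitwonormstar{\bZS} \leq \sigmatilZSsc$; the bound on $\bbetaS$ follows from the optimality of $\bbetaS$ in (\ref{eq:targetdef}) --- the value at $\bbeta = \bzero$ equals $\Var(Y)$, so $\bbetaS^T \bSigmaS \bbetaS \leq \Var(Y) \leq 2\sigmaY^2$, and combining with $\lambdainfSsc \Ltwonorm{\bbetaS}^2 \leq \bbetaS^T \bSigmaS \bbetaS$ gives the claim. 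The main obstacle is not conceptual but rather careful bookkeeping: tracking absolute constants through the isotropic rescaling so that the stated prefactors ($16$, $33$, the $2$ in the side claim) match the statement, and in particular selecting the right version of Hanson--Wright in (ii) so as to recover the precise $\sqrt{\rn}/n + \rn/n$ rate rather than the weaker $\sqrt{\rn/n}$.
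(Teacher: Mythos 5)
Your proposal is correct in substance and, for most of the lemma, follows the same route as the paper: a pointwise Vershynin-type sub-Gaussian covariance deviation bound (the paper's Lemma \ref{lem:7:matconc}) applied at each fixed $S$, a union bound over $\Ssc$ with the $\rn = \qn + \log\Ln$ bookkeeping and the substitution $\epsilon \asymp (c+1)\rn/n$, the augmentation to $\bZS=(Y,\bXS)$ with $\psitwonormstar{\bZS-\bnuS}\le \sigmaY+\sigmaXSsc$ and submatrix domination $\Ltwonorm{\bSigmatilSY-\bSigmaSY}\le\MLtwonorm{\bXitilS-\bXiS}$ for the $Y$-parts, the triangle inequality $\bSigmahatS=\bSigmatilS-\bGammahatS$ for (iii), and the sub-Gaussian second-moment bound plus the variance decomposition for the side claims. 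The one place you genuinely diverge is claim (ii): the paper does not use Hanson--Wright or a fresh net argument, but simply re-applies Lemma \ref{lem:7:matconc} to a \emph{single observation} of the vector $\bXbarS-\bmuS$, whose $\psitwonormstar{\cdot}$ norm is $\le 4\sigmaXSsc/\sqrt{n}$ (Lemma \ref{lem:3:sgconc}) and whose covariance is $n^{-1}\bSigmaS$; this measures the deviation of $\bGammahatS$ from its \emph{matrix} mean $n^{-1}\bSigmaS$, so the additive term is $\MLtwonorm{n^{-1}\bSigmaS}\le\lambdasupSsc/n$ and the factor $16=4^2$ falls out automatically. Your quadratic-form route centers $\Ltwonorm{\bXbarS-\bmuS}^2$ at its scalar mean $\trace(\bSigmaS)/n$, which can be as large as $\qn\lambdasupSsc/n$; your proposed reabsorption of the extra factor $s$ into the $\KSsc\rn/n$ term does go through (since $\lambdasupSsc\le 2\sigmaXSsc^2$ and $\lambdasupSsc/\lambdainfSsc\ge 1$ give $\lambdasupSsc \le 2\KSsc/C_1$), but it delivers the bound only up to an absolute constant depending on the Hanson--Wright constant, not necessarily the stated $16$ with the clean $\lambdasupSsc/n$ remainder --- so the paper's one-observation trick is the cheaper way to the exact form you are asked to prove. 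One small fix in the last side claim: ``optimality of $\bbetaS$ versus $\bbeta=\bzero$'' by itself only yields a trivial inequality; the argument you need is the normal equation $\bSigmaSY=\bSigmaS\bbetaS$, which gives the orthogonal decomposition $\Var(Y)=\E\{(\Ytil-\bXtilS^T\bbetaS)^2\}+\bbetaS^T\bSigmaS\bbetaS\ge\lambdainfSsc\Ltwonorm{\bbetaS}^2$, exactly as in the paper.
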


\begin{lemma}\label{linreg:main_lem2}
Let $\Ssc$ and $\rn$ be as in Lemma \ref{linreg:main_lem1}, and suppose Assumption \ref{linreg:main_assmpn}(i) holds. Let $c^* > 0$ be any constant satisfying:
\begin{equation}
(c^*+1)\KSsc \left( \sqrt{\frac{\rn}{n}} + \frac{33 \rn}{n} \right)  + \frac{\lambdasupSsc}{n} \; \leq \; \frac{1}{2} \lambdainfSsc,  \label{eq:cstarcond}
\end{equation}
and let $\KSsc^* := 2 \lambdainfSsc^{-2} \KSsc$. Then, for any $c^* > 0$ as in (\ref{eq:cstarcond}) and for any $c> 0$, and defining $a_n(c,c^*,\rn) := \exp\left( -c\rn \right) + \exp\left( -c^*\rn \right)$, we have
\begin{eqnarray*}
&& \P\left\{ \underset{S \in \Ssc}{\sup} \MLtwonorm{\bSigmahatS^{-1} - \bSigmaS^{-1}} > (c+1)\KSsc^* \left( \sqrt{\frac{\rn}{n}} + \frac{33 \rn}{n} \right) + \frac{2}{n}\frac{\lambdasupSsc}{\lambdainfSsc^2} \right\} \nonumber \\
&& \qquad\;\; \leq \; 4 a_n(c,c^*,\rn) \; \equiv \; 4\exp\left( -c\rn \right) + 4\exp\left( -c^*\rn \right). \label{eq:main_lem2:res}
\end{eqnarray*}
\end{lemma}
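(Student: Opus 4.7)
The plan is to reduce the problem to bounding $\sup_{S \in \Ssc} \MLtwonorm{\bSigmahatS - \bSigmaS}$, which is already controlled by Lemma \ref{linreg:main_lem1}(iii), via the standard matrix inversion identity
\[
\bSigmahatS^{-1} - \bSigmaS^{-1} \;=\; \bSigmahatS^{-1}\,(\bSigmaS - \bSigmahatS)\,\bSigmaS^{-1}.
\]
Taking spectral norms and using sub-multiplicativity gives
\[
\MLtwonorm{\bSigmahatS^{-1} - \bSigmaS^{-1}} \;\leq\; \MLtwonorm{\bSigmahatS^{-1}}\cdot\MLtwonorm{\bSigmahatS - \bSigmaS}\cdot\MLtwonorm{\bSigmaS^{-1}}.
\]
The factor $\MLtwonorm{\bSigmaS^{-1}} \leq \lambdainfSsc^{-1}$ uniformly in $S$ comes for free from Assumption \ref{linreg:main_assmpn}(i), so the remaining task is to obtain a uniform (in $S \in \Ssc$) bound on $\MLtwonorm{\bSigmahatS^{-1}}$, which requires first establishing that $\bSigmahatS$ is well-conditioned with high probability across all $S \in \Ssc$ simultaneously.

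For this, I would invoke Lemma \ref{linreg:main_lem1}(iii) with the parameter $c^*$: on the event $\mathcal{E}^*$ of probability at least $1 - 4\exp(-c^*\rn)$, we have
\[
\underset{S \in \Ssc}{\sup}\, \MLtwonorm{\bSigmahatS - \bSigmaS} \;\leq\; (c^*+1)\KSsc\left(\sqrt{\tfrac{\rn}{n}} + \tfrac{33\rn}{n}\right) + \tfrac{\lambdasupSsc}{n},
\]
and by the hypothesis (\ref{eq:cstarcond}) on $c^*$, the right-hand side is at most $\tfrac{1}{2}\lambdainfSsc$. Applying Weyl's inequality to each $\bSigmahatS$ then yields $\lambdamin(\bSigmahatS) \geq \lambdamin(\bSigmaS) - \MLtwonorm{\bSigmahatS - \bSigmaS} \geq \lambdainfSsc/2$ uniformly over $S \in \Ssc$, so on $\mathcal{E}^*$ every $\bSigmahatS$ is invertible and $\sup_{S\in\Ssc}\MLtwonorm{\bSigmahatS^{-1}} \leq 2/\lambdainfSsc$. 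This step is really the crux, since it is the mechanism by which the condition (\ref{eq:cstarcond}) on $c^*$ enters the final bound.

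To finish, I would apply Lemma \ref{linreg:main_lem1}(iii) a \emph{second} time, now with the free parameter $c$, giving an event $\mathcal{E}_c$ of probability at least $1 - 4\exp(-c\rn)$ on which $\sup_{S\in\Ssc}\MLtwonorm{\bSigmahatS - \bSigmaS} \leq (c+1)\KSsc(\sqrt{\rn/n} + 33\rn/n) + \lambdasupSsc/n$. On the intersection $\mathcal{E}^* \cap \mathcal{E}_c$, combining the three factor bounds produces
\[
\underset{S \in \Ssc}{\sup}\,\MLtwonorm{\bSigmahatS^{-1} - \bSigmaS^{-1}} \;\leq\; \frac{2}{\lambdainfSsc^2}\left\{(c+1)\KSsc\!\left(\sqrt{\tfrac{\rn}{n}} + \tfrac{33\rn}{n}\right) + \tfrac{\lambdasupSsc}{n}\right\},
\]
which, upon identifying $\KSsc^* = 2\lambdainfSsc^{-2}\KSsc$, matches the claimed bound exactly, while the failure probability is controlled by a union bound as $\P(\mathcal{E}^{*c}) + \P(\mathcal{E}_c^c) \leq 4\exp(-c^*\rn) + 4\exp(-c\rn) = 4 a_n(c,c^*,\rn)$. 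The main subtlety is simply that the two applications of Lemma \ref{linreg:main_lem1}(iii) play different roles: the $c^*$-application is needed qualitatively to ensure $\bSigmahatS$ is invertible uniformly in $S$, while the $c$-application drives the quantitative size of the bound — and one must use a union bound over the two independent threshold events rather than a single application.
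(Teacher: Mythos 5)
Your proof is correct and follows essentially the same route as the paper: two applications of Lemma \ref{linreg:main_lem1}(iii) at levels $c^*$ and $c$, with the condition (\ref{eq:cstarcond}) used exactly as in the paper to guarantee $\sup_{S\in\Ssc}\MLtwonorm{\bSigmahatS-\bSigmaS}\leq\lambdainfSsc/2$, followed by a union bound giving $4\exp(-c\rn)+4\exp(-c^*\rn)$. The only cosmetic difference is that where the paper invokes its prepackaged perturbation bound (Lemma \ref{lem:6:matinv}) to convert the inverse deviation into $2\lambdainfSsc^{-2}$ times the matrix deviation, you rederive the same factor from the identity $\bSigmahatS^{-1}-\bSigmaS^{-1}=\bSigmahatS^{-1}(\bSigmaS-\bSigmahatS)\bSigmaS^{-1}$ together with Weyl's inequality, which is an equivalent (and perfectly valid) way of carrying out that deterministic step; note also that the union bound does not need your two threshold events to be ``independent,'' only measurable.
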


The proof of Theorem \ref{linreg:main_thm1} essentially follows from carefully combining all the results established in Lemmas \ref{linreg:main_lem1} and \ref{linreg:main_lem2}. To this end, first note that under Assumption \ref{linreg:main_assmpn} and using Lemmas \ref{linreg:main_lem1} and \ref{linreg:main_lem2}, we have: for any $c > 0$ and any $c^* > 0$ satisfying condition (\ref{eq:cstarcond}),
\begin{equation}
\P\left\{ \underset{S \in \Ssc}{\sup} \MLtwonorm{\bSigmatilS - \bSigmaS}> \epsilon_{n,1}(c,\rn) \right\}  \leq  2 \exp(-c\rn), \;\; \underset{S \in \Ssc}{\sup} \Ltwonorm{\bbetaS}  \leq  \CSsc, \label{eq1:pf:mainthm1}
\end{equation}
\begin{equation}
 \P\left\{ \underset{S \in \Ssc}{\sup} \Ltwonorm{\bSigmatilSY - \bSigmaSY}> \epsilon_{n,2}(c,\rn) \right\}  \leq  2 \exp(-c\rntil)  \leq  2 \exp(-c \rn), \label{eq2:pf:mainthm1}
\end{equation}
\begin{equation}
\P\left\{ \underset{S \in \Ssc}{\sup} \MLtwonorm{\bSigmahatS^{-1} - \bSigmaS^{-1}} > \delta_{n}(c,\rn) \right\} \; \leq \; 4 \exp(-c \rn) + 4 \exp(-c^* \rn),\label{eq3:pf:mainthm1}
\end{equation}
\begin{equation}
\P\left\{ \underset{S \in \Ssc}{\sup} \Ltwonorm{\bGammahatSY}> \eta_{n,2}(c,\rn) \right\} \; \leq \; 2 \exp(-c\rntil) \; \leq \; 2 \exp(-c \rn), \;\; \mbox{and} \label{eq4:pf:mainthm1}
\end{equation}
\begin{equation}
\P\left\{ \underset{S \in \Ssc}{\sup} \MLtwonorm{\bGammahatS}> \eta_{n,1}(c,\rn) \right\}  \; \leq \;  2 \exp(-c\rn),  \;\; \underset{S \in \Ssc}{\sup} \MLtwonorm{\bSigmaS^{-1}}  \leq  \lambdainfSsc^{-1}, \label{eq5:pf:mainthm1}
\end{equation}
where $\{\epsilon_{n,j}(c,\rn), \eta_{n,j}(c,\rn)\}_{j=1}^2$, $\delta_n(c,\rn)$ and $\CSsc$ are all as defined in Theorem \ref{linreg:main_thm1}. Note that for (\ref{eq4:pf:mainthm1}) and (\ref{eq5:pf:mainthm1}), we also used $\sqrt{\rn} \leq \rn$, and for (\ref{eq2:pf:mainthm1}) and (\ref{eq4:pf:mainthm1}), we used $\rn \leq \rntil$.
\par\medskip
Next, noting that $(\bSigmaSY - \bSigmaS\bbetaS) = \bzero$ for any $S \in \Ssc$, due to (\ref{eq:targetdef}), we have:
\begin{eqnarray}
\underset{S \in \Ssc}{\sup} \Ltwonorm{\bTnS} &\equiv& \underset{S \in \Ssc}{\sup} \Ltwonorm{(\bSigmahatS^{-1} - \bSigmaS^{-1}) \{(\bSigmatilSY - \bSigmaSY) - (\bSigmatilS - \bSigmaS)\bbetaS\} } \nonumber \\
&\leq & \underset{S \in \Ssc}{\sup} \MLtwonorm{\bSigmahatS^{-1} - \bSigmaS^{-1}}  \underset{S \in \Ssc}{\sup}  \Ltwonorm{\bSigmatilSY - \bSigmaSY} \label{eq6:pf:mainthm1} \\
&& + \;\; \underset{S \in \Ssc}{\sup} \MLtwonorm{\bSigmahatS^{-1} - \bSigmaS^{-1}} \underset{S \in \Ssc}{\sup}  \MLtwonorm{\bSigmatilS - \bSigmaS} \underset{S \in \Ssc}{\sup} \Ltwonorm{\bbetaS}, \nonumber
\end{eqnarray}
where the inequality in (\ref{eq6:pf:mainthm1}) follows from multiple applications of Lemma \ref{lem:5:matbasic} (i). Using (\ref{eq1:pf:mainthm1}), (\ref{eq2:pf:mainthm1}) and (\ref{eq3:pf:mainthm1}) in (\ref{eq6:pf:mainthm1}), along with the union bound, we have: for any $c> 0$,
\begin{align*}
& \P \left[ \underset{S \in \Ssc}{\sup} \Ltwonorm{\bTnS} >  \delta_{n}(c,\rn) \{\epsilon_{n,1}(c,\rn) \CSsc + \epsilon_{n,2}(c,\rn) \} \right]  \\
& \qquad\qquad \leq \; 8 \exp(-c \rn) + 4 \exp(-c^* \rn). 
\end{align*}
This establishes the first of the two claims in result (i) of Theorem \ref{linreg:main_thm1}. \qed
\par\smallskip
%
Next, recall \tcr{$\bRnS \equiv \bSigmahatS^{-1} (\bGammahatS \bbetaS - \bGammahatSY)$,} 
and hence using Lemma \ref{lem:5:matbasic} (i),
\begin{eqnarray}
&& \;\; \supS \Ltwonorm{\bRnS}  \leq    \supS \Ltwonorm{\bSigmahatS^{-1}} \left( \supS  \MLtwonorm{\bGammahatS} \supS  \Ltwonorm{\bbetaS} + \supS \Ltwonorm{\bGammahatSY} \right). \label{eq7:pf:mainthm1}
\end{eqnarray}
Consequently, using (\ref{eq3:pf:mainthm1}), (\ref{eq4:pf:mainthm1}) and (\ref{eq5:pf:mainthm1}) in (\ref{eq7:pf:mainthm1}), along with the union bound, we have:
\begin{align*}
& \P \left[ \underset{S \in \Ssc}{\sup} \Ltwonorm{\bRnS} >  \{\delta_{n}(c,\rn) + \lambdainfSsc^{-1}\}\{\epsilon_{n,1}(c,\rn) \CSsc + \epsilon_{n,2}(c,\rn) \} \right] \\
& \qquad\qquad \leq \; 8 \exp(-c \rn) + 4 \exp(-c^* \rn) \quad \forall \; c > 0.
\end{align*}
This establishes the second and final claim in result (i) of Theorem \ref{linreg:main_thm1}. \qed
\par\smallskip
Finally, recall that $\bPsiS(\bZ) = \bSigmaS^{-1} \bpsiS(\bZ)$ and $n^{-1} \sum_{i=1}^n \bpsiS(\bZ_i) \equiv \bSigmatilSY - \bSigmatilS \bbetaS$. Further $(\bSigmaSY - \bSigmaS\bbetaS) = \bzero$ $\forall \; S \in \Ssc$, due to (\ref{eq:targetdef}). Hence, using  Lemma \ref{lem:5:matbasic} (i), we have:
\begin{align}
& \supS \Ltwonorm{ \frac{1}{n} \sum_{i=1}^n \bPsiS(\bZ_i)} \; \equiv \; \supS \Ltwonorm{\bSigmaS^{-1}\{(\bSigmatilSY - \bSigmaSY) - (\bSigmatilS - \bSigmaS) \bbetaS\}} \nonumber \\
&  \quad \leq \; \supS\MLtwonorm{\bSigmaS^{-1}} \left(\supS\Ltwonorm{\bSigmatilSY - \bSigmaSY} + \supS \MLtwonorm{\bSigmatilS - \bSigmaS} \supS \Ltwonorm{ \bbetaS}\right).\nonumber 
\end{align}
Hence, we have: for any $c > 0$,
\begin{align}
& \P \left[ \supS \Ltwonorm{ \frac{1}{n} \sum_{i=1}^n \bPsiS(\bZ_i)} >   \lambdainfSsc^{-1}\{\epsilon_{n,1}(c,\rn) \CSsc + \epsilon_{n,2}(c,\rn) \} \right] \nonumber \\
& \qquad \qquad \leq \; 4 \exp(-c\rn) \quad \forall \; c > 0, \nonumber
\end{align}
where the final probability bound follows from applying (\ref{eq1:pf:mainthm1}), (\ref{eq2:pf:mainthm1}) and (\ref{eq5:pf:mainthm1}), along with the union bound, to the preceding bound. This establishes the result (ii) of Theorem \ref{linreg:main_thm1}. \qed
\par\smallskip
Finally, all the `$\lesssim$' type bounds claimed in results (i) and (ii) are quite straightforward and follow trivially from the definitions of $\{\epsilon_{n,j}(c,\rn)\}_{j=1}^2\}$,  $\{\eta_{n,j}(c,\rn)\}_{j=1}^2$, $\delta_n(c,\rn)$ and $\CSsc$. The details are thus skipped here for brevity. The proof of Theorem \ref{linreg:main_thm1} is now complete. \qed

\subsection{Proof of Lemma \ref{linreg:main_lem1}}\label{proof:main:lem1}

Applying Lemma \ref{lem:7:matconc}, under Assumption \ref{linreg:main_assmpn} (i), to the random vectors $\{\bXSi - \bmuS\}_{i=1}^n$ for any $S \in \Ssc$, and recalling the definition of the constant $\KSsc > 0$ in (\ref{linreg:main_assmpn:eq1}) along with the fact that  $s \leq \qn$ $\forall \; S \in \Ssc$, it follows that for any $\epsilon \geq 0$ and for each $S \in \Ssc$,
\begin{eqnarray*}
&& \P \left\{ \MLtwonorm {\bSigmatilS - \bSigmaS} \; > \; \KSsc \left( \sqrt{\frac{\qn + \epsilon}{n}} + \frac{\qn + \epsilon}{n} \right) \right\} \; \leq \; 2 \exp\left( - \epsilon \right), \nonumber 
\end{eqnarray*}
or equivalently, for any $\epsilon \geq 0$ and $ S\in \Ssc$,
\begin{eqnarray}
&& \P \left\{ \MLtwonorm {\bSigmatilS - \bSigmaS} \; > \; \KSsc \left( \sqrt{\epsilon} + \epsilon \right) \right\} \; \leq \; 2 \exp\left( - n \epsilon +  \qn \right). \label{eq1:pf:mainlem1}
\end{eqnarray}
Consequently, using (\ref{eq1:pf:mainlem1}) along with the union bound, we then have:
\begin{eqnarray}
&& \P \left\{ \underset{S \in \Ssc} {\sup} \MLtwonorm {\bSigmatilS - \bSigmaS} \; > \; \KSsc \left( \sqrt{\epsilon} + \epsilon \right) \right\} \nonumber \\
&& \qquad \leq \; \sum_{s \in \Ssc} \P \left\{ \MLtwonorm {\bSigmatilS - \bSigmaS} \; > \; \KSsc \left( \sqrt{\epsilon} + \epsilon \right) \right\} \nonumber \\
&& \qquad \leq \; 2 \Ln \exp\left( - n \epsilon +  \qn \right)  \; \equiv \; 2\exp\left( - n \epsilon +  \qn + \log \Ln\right) \quad \forall \; \epsilon \geq 0. \label{eq2:pf:mainlem1}
\end{eqnarray}
Substituting $\epsilon$ in (\ref{eq2:pf:mainlem1}) above as: $\epsilon = (c + 1)(\qn + \log \Ln)/n \equiv (c+1) \rn/n$ for any $c \geq 0$, and noting that $\sqrt{c+1} \leq (c+1)$, we then have: $\forall \; c \geq 0$,
\begin{equation}
\P \left\{ \underset{S \in \Ssc} {\sup} \MLtwonorm {\bSigmatilS - \bSigmaS} \; > \; (c+1) \KSsc \left( \sqrt{\frac{\rn}{n}} + \frac{\rn}{n} \right) \right\} \; \leq \; 2 \exp( - c\rn). \label{eqfinal1:pf:mainlem1}
\end{equation}
This therefore establishes the first claim (i) in (\ref{eq:main_lem1:res1}). \qed
\par\medskip
Next, using Lemma \ref{lem:3:sgconc}, along with Lemma \ref{lem:2:sgvectprop} (i) and the definition of $\psitwonormstar{\cdot}$ in \ref{def:subgaussvect}, it follows, under Assumption \ref{linreg:main_assmpn} (i), that for any $S \in \Ssc$,
\begin{eqnarray*}
&& \psitwonorm{\bv^T(\bXbarS -\bmuS)} \; \leq \; (4 \sigmaXSsc/\sqrt{n}) \Ltwonorm{\bv} \;\;\; \mbox{for any} \; \bv \in \R^s, \quad \mbox{and thus}, \nonumber \\
&& \sup_{S \in \Ssc} \psitwonormstar{\bXbarS - \bmuS} \;\; \leq \; (4 \sigmaXSsc/\sqrt{n}).
\end{eqnarray*}
Further, $\overline{\bSigma}_{n,S} := \E\{ \bXbarS - \bmuS) (\bXbarS - \bmuS)^T \equiv \Cov(\bXbarS - \bmuS)  = n^{-1}\bSigmaS$, so that $\MLtwonorm{\bar{\bSigma}_{n,S}} \equiv n^{-1}\lambdamax(\bSigmaS) \leq n^{-1} \lambdasupSsc$.
Hence, using Lemma \ref{lem:7:matconc} again, this time applied to (a single observation of) $\bXbarS - \bmuS$ for any $S \in \Ssc$, we have: for any $\epsilon \geq 0$ and any $S \in \Ssc$,
\begin{eqnarray*}
&& \P \left\{ \MLtwonorm{\bGammahatS}  \; > \; \frac{\lambdasupSsc}{n} + \frac{16}{n} \KSsc \left( \sqrt{\qn + \epsilon} + \qn + \epsilon \right) \right\} \; \leq \; 2 \exp(-\epsilon),
\nonumber
\end{eqnarray*}
or equivalently, for any $\epsilon \geq 0$ and $S \in \Ssc$,
\begin{eqnarray}
&& \P \left\{ \MLtwonorm{\bGammahatS}  \; > \; \frac{\lambdasupSsc}{n} + \frac{16}{n} \KSsc \left( \sqrt{\epsilon} + \epsilon \right) \right\} \; \leq \; 2 \exp(-\epsilon + \qn).
\label{eq3:pf:mainlem1}
\end{eqnarray}
Consequently, using (\ref{eq3:pf:mainlem1}) along with the union bound, similar to the arguments used earlier for obtaining (\ref{eq2:pf:mainlem1}), we have: for any $\epsilon \geq 0$,
\begin{equation}
\P \left\{ \underset{S \in \Ssc}{\sup} \MLtwonorm{\bGammahatS}  \; > \; \frac{\lambdasupSsc}{n} + \frac{16}{n} \KSsc \left( \sqrt{\epsilon} + \epsilon \right) \right\} \; \leq 2\exp\left( - \epsilon +  \qn + \log \Ln\right). \label{eq4:pf:mainlem1} \\
\end{equation}
Substituting $\epsilon$ in (\ref{eq4:pf:mainlem1}) above as: $\epsilon = (c + 1)(\qn + \log \Ln) \equiv (c+1) \rn$ for any $c \geq 0$, and noting that $\sqrt{c+1} \leq (c+1)$, we then have: $\forall \; c \geq 0$,
\begin{equation}
\P \left\{ \underset{S \in \Ssc} {\sup} \MLtwonorm {\bGammahatS} \; > \; 16 (c+1) \KSsc \left( \frac{\sqrt{\rn}}{n} + \frac{\rn}{n} \right) + \frac{\lambdasupSsc}{n} \right\} \; \leq \; 2 \exp( - c\rn). 
\label{eqfinal2:pf:mainlem1}
\end{equation}
This establishes the second claim (ii) in (\ref{eq:main_lem1:res1}). \qed
\par\medskip
Finally, the third claim in (\ref{eq:main_lem1:res1}) follows from a simple application of the triangle inequality, along with combination (via the union bound) of the bounds in (\ref{eqfinal1:pf:mainlem1}) and (\ref{eqfinal2:pf:mainlem1}) with a slight adjustment applied to (\ref{eqfinal2:pf:mainlem1}). Specifically, since $\bSigmahatS \equiv \bSigmatilS - \bGammahatS \; \forall \; S \in \Ssc$, we have:
\begin{eqnarray*}
\underset{S \in \Ssc}{\sup} \MLtwonorm{\bSigmahatS - \bSigmaS} &\equiv& \underset{S \in \Ssc}{\sup}  \MLtwonorm{(\bSigmatilS - \bSigmaS) + \bGammahatS}  \nonumber \\
 &\leq&  \underset{S \in \Ssc}{\sup} \MLtwonorm{\bSigmatilS - \bSigmaS}  + \underset{S \in \Ssc}{\sup} \MLtwonorm{\bGammahatS}.
\end{eqnarray*}
Hence, combining (\ref{eqfinal1:pf:mainlem1}) and (\ref{eqfinal2:pf:mainlem1}) through the union bound, and simplifying the resulting bound further by noting that $\sqrt{\rn}/n \leq \rn/n$, we then have: 
\begin{eqnarray}
&& \P \left\{ \underset{S \in \Ssc} {\sup} \MLtwonorm{\bSigmahatS - \bSigmaS} \; > \; (c+1) \KSsc \left( \sqrt{\frac{\rn}{n}} + \frac{33 \rn}{n} \right) + \frac{\lambdasupSsc}{n} \right\} \label{eqfinal3:pf:mainlem1} \\
&& \qquad \leq \; 4 \exp( - c\rn) \quad \mbox{for any} \; c \geq 0. \nonumber
\end{eqnarray}
This now establishes the third and final claim (iii) in (\ref{eq:main_lem1:res1}). \qed
\par\medskip
To establish the claims (i) and (ii) in (\ref{eq:main_lem1:res2}), we first recall the definitions of $\bZS$, $\bnuS$ and $\bXiS$ from Assumption \ref{linreg:main_assmpn} (ii) and further, with $\{\bZSi\}_{i=1}^n := \{(Y_i,\bXSi)\}_{i=1}^n$, we define: $\forall \; S \in \Ssc$,
\begin{alignat*}{2}
& \bXitilS \; := \; \frac{1}{n}\sum_{i=1}^n (\bZSi - \bnuS)(\bZSi - \bnuS)^T, \quad &\bZbarS \; := \; \frac{1}{n} \sum_{i=1}^n \bZSi \quad \mbox{and} \nonumber \\
& \bUpshatS \; := \; (\bZbarS - \bnuS)(\bZbarS - \bnuS)^T. &
\end{alignat*}
Then, note that the vectors $(\bSigmatilSY - \bSigmaSY)$ and $\bGammahatSY$ are simply $s \times 1$ sub-matrices of the matrices $\bXitilS$ and $\bUpshatS$, respectively. Hence, using Lemma \ref{lem:5:matbasic} (iii), we deterministically have: for any $S \in \Ssc$,
\begin{equation}
\Ltwonorm{\bSigmatilSY - \bSigmaSY} \; \leq \; \MLtwonorm{\bXitilS - \bXiS} \;\; \mbox{and} \;\; \Ltwonorm{\bGammahatSY} \; \leq \; \MLtwonorm{\bUpshatS}. 
\label{eq5:pf:mainlem1}
\end{equation}

Next, under Assumption \ref{linreg:main_assmpn} (i), note that for any $\bu \equiv (a,\bv) \in \R^{s+1}$ with $a \in \R, \bv \in \R^s$, and for any $S \in \Ssc$, we have: $\psitwonorm{\bu^T(\bZS - \bnuS)} \equiv \psitwonorm{a \Ytil + \bv^T\bXtilS} $ $\leq |a| \sigmaY + \Ltwonorm{\bv}\|\bXtilS\|_{\psitwo}^*$ $\leq \Ltwonorm{\bu}(\sigmaY + \sigmaXSsc)$, where the steps follow through repeated use of Lemma \ref{lem:1:genprop} (i), along with use of Lemma \ref{lem:2:sgvectprop} (i), Assumption \ref{linreg:main_assmpn} (i) and the definition of $\psitwonormstar{\cdot}$ in \ref{def:subgaussvect}. Using Definition \ref{def:subgaussvect} again, we therefore have: $\sup_{S \in \Ssc}\psitwonormstar{\bZS - \bnuS} \leq \sigmatilZSsc \equiv (\sigmaY + \sigmaXSsc)$. \par\medskip
Hence, similar to (\ref{eq1:pf:mainlem1}), applying Lemma \ref{lem:7:matconc} to the random vectors $\{\bZSi - \bnuS\}_{i=1}^n$ for any $S \in \Ssc$, and recalling the definition of the constant $\KtilSsc > 0$ in Assumption \ref{linreg:main_assmpn} (ii) along with the fact that  $\mbox{dim} (\bZS) \leq \qntil := \qn +1$ $\forall \; S \in \Ssc$, it follows that for any $\epsilon \geq 0$,
\begin{equation*}
 \P \left\{ \MLtwonorm {\bXitilS - \bXiS} \; > \; \KtilSsc \left( \sqrt{\epsilon} + \epsilon \right) \right\} \; \leq \; 2 \exp\left( - n \epsilon +  \qntil \right) \;\; \forall \; S \in \Ssc, 
\end{equation*}
and therefore, for any $\epsilon \geq 0$,
\begin{equation}
 \P \left\{ \underset{S \in \Ssc} {\sup} \MLtwonorm {\bXitilS - \bXiS} \; > \; \KtilSsc \left( \sqrt{\epsilon} + \epsilon \right) \right\} \; \leq \; 2\exp\left( - n \epsilon +  \qntil + \log \Ln\right), \label{eq6:pf:mainlem1}
\end{equation}
where the last bound follows from using the union bound, similar to (\ref{eq2:pf:mainlem1}). Consequently, for any $\epsilon \geq 0$,
\begin{equation*}
\P \left\{ \underset{S \in \Ssc} {\sup} \Ltwonorm {\bSigmatilSY - \bSigmaSY} \; > \; \KtilSsc \left( \sqrt{\epsilon} + \epsilon \right) \right\} \; \leq \; 2\exp\left( - n \epsilon +  \qntil + \log \Ln\right), 
\end{equation*}
and hence, for any $c \geq 0$,
\begin{equation}
\P \left\{ \underset{S \in \Ssc} {\sup} \Ltwonorm {\bSigmatilSY - \bSigmaSY} \; > \; (c+1) \KtilSsc \left( \sqrt{\frac{\rntil}{n}} + \frac{\rntil}{n} \right) \right\} \; \leq \; 2 \exp( - c\rntil), 
\label{eqfinal4:pf:mainlem1}
\end{equation}
where the first bound follows from using (\ref{eq5:pf:mainlem1}) and (\ref{eq6:pf:mainlem1}), and the second bound follows from substituting $\epsilon$ as: $\epsilon = (c + 1)(\qntil + \log \Ln)/n \equiv (c+1) \rntil/n$ for any $c \geq 0$, and noting that $\sqrt{c+1} \leq (c+1)$. This therefore establishes the first claim (i) in (\ref{eq:main_lem1:res2}). \qed
\par\medskip
Next, similar to arguments used to prove claim (ii) in (\ref{eq:main_lem1:res1}), it follows using Lemma \ref{lem:3:sgconc}, Lemma \ref{lem:2:sgvectprop} (i)  and the definition of $\psitwonormstar{\cdot}$ in \ref{def:subgaussvect}, that
\begin{equation*}
\psitwonorm{\bu^T(\bZbarS -\bnuS)} \; \leq \; (4 \sigmatilZSsc/\sqrt{n}) \Ltwonorm{\bu}, \;\; \mbox{for any} \; S \in \Ssc \; \mbox{and any} \; \bu \in \R^{s+1},
\end{equation*}
and thus,
\begin{equation*}
\sup_{S \in \Ssc} \psitwonormstar{\bZbarS - \bnuS} \; \leq \; (4 \sigmatilZSsc/\sqrt{n}).
\end{equation*}
Further, $\overline{\bXi}_{n,S} := \E\{ \bZbarS - \bnuS) (\bZbarS - \bnuS)^T \equiv \Cov(\bZbarS - \bnuS)  = n^{-1}\bXiS$, so that $\MLtwonorm{\bar{\bXi}_{n,S}} \equiv n^{-1}\lambdamax(\bXiS) \leq n^{-1} \lambdatilsupSsc$.
Hence, similar to (\ref{eq3:pf:mainlem1}), applying Lemma \ref{lem:7:matconc} to (a single observation of) $\bZbarS - \bmuS$ for any $S \in \Ssc$, we have: for any $\epsilon \geq 0$ and any $S \in \Ssc$,
\begin{equation*}
\P \left\{ \MLtwonorm{\bUpshatS}  \; > \; \frac{\lambdatilsupSsc}{n} + \frac{16}{n} \KtilSsc \left( \sqrt{\epsilon} + \epsilon \right) \right\} \; \leq \; 2 \exp(-\epsilon + \qntil), 
\end{equation*}
and therefore, for any $\epsilon \geq 0$,
\begin{equation}
\P \left\{ \underset{S \in \Ssc}{\sup} \MLtwonorm{\bUpshatS}  \; > \; \frac{\lambdatilsupSsc}{n} + \frac{16}{n} \KtilSsc \left( \sqrt{\epsilon} + \epsilon \right) \right\} \; \leq \; 2\exp\left( - \epsilon +  \qntil + \log \Ln\right), \label{eq7:pf:mainlem1}
\end{equation}
where the last bound follows from using the union bound, similar to (\ref{eq4:pf:mainlem1}). Consequently, for any $\epsilon \geq 0$,
\begin{align*}
& \P \left\{ \underset{S \in \Ssc}{\sup} \Ltwonorm{\bGammahatSY}  \; > \; \frac{\lambdatilsupSsc}{n} + \frac{16}{n} \KtilSsc \left( \sqrt{\epsilon} + \epsilon \right) \right\} \; \leq \; 2\exp\left( - \epsilon +  \qntil + \log \Ln\right), \nonumber 
\end{align*}
and hence, for any $c \geq 0$,
\begin{align}
& \P \left\{ \underset{S \in \Ssc} {\sup} \Ltwonorm {\bGammahatSY} \; > \; 16 (c+1) \KtilSsc \left( \frac{\sqrt{\rntil}}{n} + \frac{\rntil}{n} \right) + \frac{\lambdatilsupSsc}{n} \right\} \; \leq \; 2 \exp( - c\rntil), 
\label{eqfinal5:pf:mainlem1}
\end{align}
where the bounds follow from using (\ref{eq5:pf:mainlem1}) and (\ref{eq7:pf:mainlem1}) and the second bound follows from substituting $\epsilon$ as: $\epsilon = (c + 1)(\qntil + \log \Ln) \equiv (c+1) \rntil$ for any $c \geq 0$, and noting that $\sqrt{c+1} \leq (c+1)$. This therefore establishes the second claim (ii) in (\ref{eq:main_lem1:res2}). \qed
\par\medskip
The remaining claims at the end of Lemma \ref{linreg:main_lem1} are quite straightforward. Using Lemma \ref{lem:2:sgvectprop} (i) and Lemma \ref{lem:1:genprop} (iii), we first note that under Assumption \ref{linreg:main_assmpn}, for any $S \in \Ssc$, $\bv \in \R^s$ and $\bu \in \R^{s+1}$, $\E\{(\bv^T \bXtilS)^2\} \leq 2 \sigmaXSsc^2 \Ltwonorm{\bv}^2$ and $\E\{(\bu^T\bZtilS)^2 \} \leq 2 \sigmatilZSsc^2$. Further, $\lambdamax(\bSigmaS) \equiv \sup_{\Ltwonorm{\bv} \leq 1} \E\{(\bv^T\bXtilS)^2\}$ and $\lambdamax(\bXiS) \equiv \sup_{\Ltwonorm{\bu} \leq 1} \E\{(\bu^T\bZtilS)^2\}$ for each $S \in \Ssc$. Hence, we have: $\sup_{S \in \Ssc} \lambdamax(\bSigmaS) \leq 2 \sigmaXSsc^2$, and $\sup_{S \in \Ssc} \lambdamax(\bXiS) \leq 2 \sigmatilZSsc^2$. This justifies the claimed choices for the constants $\lambdasupSsc$ and $\lambdatilsupSsc$ in Assumption \ref{linreg:main_assmpn}. \qed
\par\medskip
Lastly, owing to the very definition of $\bbetaS$ in (\ref{eq:targetdef}) and the estimating equation satisfied by $\bbetaS$ therein, we have: $\forall \; S \in \Ssc$, $\E\{ \bXtilS (\Ytil - \bXtilS^T\bbetaS)\} = \bzero$ and $\E\{(\Ytil - \bXtilS^T\bbetaS) (\bXtilS^T\bbetaS)\} = 0$, so that $\E(\Ytil^2) = \E\{(\Ytil - \bXtilS^T \bbetaS)^2\} + \E\{ (\bXtilS^T\bbetaS)^2\}$ and hence, for each $S \in \Ssc$,
$$
\E(\Ytil^2) \; \equiv \; \Var(Y) \; \geq \; \E\{ (\bXtilS^T \bbetaS)^2\} \; \equiv \; \bbetaS^T \bSigmaS \bbetaS \; \geq \; \Ltwonorm{\bbetaS}^2 \lambdamin(\bSigmaS). 
$$
Using (\ref{linreg:main_assmpn:eq1}), we therefore have: $\sup_{S \in \Ssc} \Ltwonorm{\bbetaS}^2 \leq \lambdainfSsc^{-1} \Var(Y)$. Further, due to Lemma \ref{lem:1:genprop} (iii), $\Var(Y) \leq 2 \sigmaY^2$ and thus $\sup_{S \in \Ssc} \Ltwonorm{\bbetaS}^2 \leq \lambdainfSsc^{-1} \Var(Y) \leq 2 \lambdainfSsc^{-1} \sigmaY^2$. This establishes the final claim in Lemma \ref{linreg:main_lem1}. The proof of Lemma \ref{linreg:main_lem1} is now complete. \qed

\subsection{Proof of Lemma \ref{linreg:main_lem2}}\label{proof:main:lem2}

For for any $c > 0$ and any constant $c^* > 0$ satisfying (\ref{eq:cstarcond}), let us define the events:

\begin{align}
& \AscnS(c)   \;\; :=  \; \left\{ \sup_{S\in \Ssc} \MLtwonorm{\bSigmahatS - \bSigmaS}  > \; (c+1)\KSsc \left( \sqrt{\frac{\rn}{n}} + \frac{33 \rn}{n} \right) + \frac{\lambdasupSsc}{n} \right\}, \label{eq1:pf:mainlem2} \\
& \AscnS(c^*)  := \;  \left\{ \sup_{S\in \Ssc} \MLtwonorm{\bSigmahatS - \bSigmaS}  > \; (c^*+1)\KSsc \left( \sqrt{\frac{\rn}{n}} + \frac{33 \rn}{n} \right) + \frac{\lambdasupSsc}{n} \right\}, \nonumber
\end{align}
\begin{align*}
& \BscnS(c)  :=  \left\{ \sup_{S\in \Ssc} \MLtwonorm{\bSigmahatS^{-1} - \bSigmaS^{-1}} >  (c+1)\KSsc^* \left( \sqrt{\frac{\rn}{n}} + \frac{33 \rn}{n} \right) + \frac{2}{n}\frac{\lambdasupSsc}{\lambdainfSsc^2} \right\} \nonumber
\end{align*}
and let $\AscnS^c(c^*)$ denote the complement event of $\AscnS(c^*)$. Then, for any $c > 0$ and for any $c^* > 0$ satisfying (\ref{eq:cstarcond}), we first note that
\begin{equation}
\P\{ \AscnS(c)\} \; \leq \;  4 \exp(-c \rn), \;\; \mbox{and} \;\; \P\{ \AscnS(c^*) \} \; \leq \; 4 \exp(-c^* \rn), \label{eq2:pf:mainlem2}
\end{equation}
where both bounds are direct consequences of Lemma \ref{linreg:main_lem1} (iii), which applies under Assumption \ref{linreg:main_assmpn} (i). Further, for the events defined in (\ref{eq1:pf:mainlem2}), the following inclusions hold:
\begin{eqnarray}
&& \BscnS(c) \cap \AscnS^c(c^*) \; \subseteq_{(a)}\; \BscnS(c) \cap \left\{ \sup_{S\in \Ssc} \MLtwonorm{\bSigmahatS - \bSigmaS} \; \leq \; \frac{\lambdainfSsc}{2} \right\}  \nonumber \\
&& \qquad\quad \subseteq_{(b)} \; \AscnS(c) \cap \left\{ \sup_{S\in \Ssc} \MLtwonorm{\bSigmahatS - \bSigmaS} \; \leq \; \frac{\lambdainfSsc}{2} \right\}   \quad \subseteq \; \AscnS(c), \label{eq3:pf:mainlem2}
\end{eqnarray}
and hence, $\P\{\BscnS(c) \cap \AscnS^c(c^*) \} \leq \P\{ \AscnS(c)\}$. The inclusion (a) in (\ref{eq3:pf:mainlem2}) above follows since $c^*$ satisfies the condition (\ref{eq:cstarcond}) in Lemma \ref{linreg:main_lem2}, while the inclusion (b) follows from an application of Lemma \ref{lem:6:matinv} and from noting the definitions of the constants $\KSsc^*$ and $\lambdainfSsc$.
\par\medskip
Hence, for any $c > 0$ and for any $c^* > 0$ satisfying (\ref{eq:cstarcond}), we then have:
\begin{eqnarray*}
&& \P \left\{ \sup_{S\in \Ssc} \MLtwonorm{\bSigmahatS^{-1} - \bSigmaS^{-1}} \; > \; (c+1)\KSsc^* \left( \sqrt{\frac{\rn}{n}} + \frac{33 \rn}{n} \right) + \frac{2}{n}\frac{\lambdasupSsc}{\lambdainfSsc^2} \right\} \nonumber \\
&& \quad \equiv \; \P\left\{ \BscnS(c)\right\} \; = \; \P\left\{ \BscnS(c) \cap \AscnS^c(c^*)\right\} + \P\left\{ \BscnS(c) \cap \AscnS(c^*)\right\} \nonumber \\
&& \quad \leq_{(a)} \; \P\left\{\AscnS(c)\right\}  + \P\left\{\AscnS(c^*) \right\} \nonumber \\
&& \quad \leq_{(b)} \; 4 \exp(-c\rn) + 4 \exp(-c^* \rn) \;\; \equiv \; 4 a_n(c,c^*, \rn), \nonumber
\end{eqnarray*}
where the inequalities (a) and (b) follow from using (\ref{eq3:pf:mainlem2}) and (\ref{eq2:pf:mainlem2}), respectively. This establishes the claim in Lemma \ref{linreg:main_lem2} and completes the proof. \qed

\subsection{Proof of Theorem \ref{theorem: asymptotic linearity}}

\begin{proofof}[the first part of Theorem \ref{theorem: asymptotic linearity}]
First, note that
\[
\mathrm{aver}(\hat{\Theta}_{jp}(\hat{\mathcal{C}}_{0}')) - \mathrm{aver}(\Theta_{jp}(\mathcal{C}_{0}')) - E_{n,jp}(\hat{\mathcal{C}}_{0}', \mathcal{C}_{0}')
~=~  \mathrm{aver}(\hat{\Theta}_{jp}(\mathcal{C}_{0}')) - \mathrm{aver}(\Theta_{jp}(\mathcal{C}_{0}')).
\]
Next, let $\mathbf{X}_{S_{j1}'},\ldots,\mathbf{X}_{S_{jL_{\mathrm{{distinct}, j}}}'} \subseteq \mathbf{Adj}_{\mathcal{C}_0'}(X_j)$ be the distinct parent sets of $X_j$ in $\MEC(\mathcal{C}_0')$ with multiplicities $m_{j1},\ldots,m_{jL_{\mathrm{distinct}, j}}$ respectively, so that
\begin{align*}
\mathrm{aver}(\hat{\Theta}_{jp}(\mathcal{C}_{0}')) & \; = \; \frac{1}{L} \sum_{r = 1}^{L_{\mathrm{{distinct}, j}}} m_r~ \hat{\beta}_{jp|\mathbf{X}_{S_{jr}'} \cup \mathbf{X}_{\{1,\ldots,t\}}}\quad \text{and} \\
\mathrm{aver}(\Theta_{jp}(\mathcal{C}_{0}')) & \; = \; \frac{1}{L} \sum_{r = 1}^{L_{\mathrm{{distinct}, j}}} m_r~ \beta_{jp|\mathbf{X}_{S_{jr}'} \cup \mathbf{X}_{\{1,\ldots,t\}}}, \;\;\; \mbox{where} \;\; L \; := \sum_{r = 1}^{L_{\mathrm{{distinct}, j}}} m_r.
\end{align*}
These now enable us to apply the results obtained in Section \ref{subsec:linreg}. To see this, note that the assumptions of Theorem \ref{linreg:main_thm1} for
\[
\text{$Y := X_p$ \;\; and \;\; $\mathcal{S} \; := \; \{S_{jr} = (j, 1, \ldots, t, S_{jr}') : r \in \{1,\ldots,L_{\mathrm{distinct},j}\}\}$}
\] follow from Assumptions \ref{assumption: LSEM}, \ref{assumption: high-dim}, \ref{assumption: sparsity} and \ref{assumption: eigenvalue condition} and Lemma \ref{lemma: bounded psi2 norm}. Thus, it follows directly from the first part of Remark \ref{remark: linear functionals} with $\mathbf{a}_{S_{jr}}^T =  ~(m_r/L, 0, \ldots, 0)$ that
\[
\mathrm{aver}(\hat{\Theta}_{jp}(\mathcal{C}_{0}')) - \mathrm{aver}(\Theta_{jp}(\mathcal{C}_{0}')) \; = \; \frac{1}{n}\sum_{r = 1}^{n} Z_{jp}^{(r)} + O_{\Prob}\left(\frac{q_{j} + \log(L_{\mathrm{distinct}, j})}{n}\right),
\]
since
$$\sum_{r = 1}^{L_{\mathrm{{distinct}, j}}} ||\mathbf{a}_{S_{jr}}||_2 \; = \; \sum_{r=1}^{L_{\mathrm{{distinct}, j}}} \frac{m_r}{L} \; = \; 1. \qed$$
\end{proofof}

\begin{proofof}[the second part of Theorem \ref{theorem: asymptotic linearity}]
Recall that
\[
\hat{\eta}(\hat{\mathcal{C}}_0') - \eta(\mathcal{C}_0') ~=~ \hat{\theta}_{ij}~ \mathrm{aver}(\hat{\Theta}_{jp}(\hat{\mathcal{C}}_{0}')) - \theta_{tj}~ \mathrm{aver}(\Theta_{jp}(\mathcal{C}_{0}')).
\]
Therefore, it is straightforward to obtain the result by applying the identity $a_nb_n - ab = a(b_n - b) + b(a_n - a) + (a_n - a)(b_n - b)$ with $a_n = \hat{\theta}_{ij}$, $b_n = \mathrm{aver}(\hat{\Theta}_{jp}(\hat{\mathcal{C}}_{0}'))$, $a = \theta_{tj}$, and $b = \mathrm{aver}(\Theta_{jp}(\mathcal{C}_{0}'))$, and using the first part of Theorem \ref{theorem: asymptotic linearity} and the following well-known result from the asymptotic theory \tcr{of} multiple linear regression \tcr{(e.g., see \citet{VanderVaart98})}
\begin{align*}
\hat{\theta}_{tj} - \theta_{tj} ~=~ \hat{\beta}_{tj | \mathbf{X}_{\{1,\ldots,t-1\}}} - \beta_{tj | \mathbf{X}_{\{1,\ldots,t-1\}}} ~=~ \frac{1}{n} \sum_{r = 1}^{n}Z_{tj}^{(r)} ~ + ~O_{\Prob}\left(\frac{1}{n}\right). \qed
\end{align*}
\end{proofof}

\subsection{Proof of Corollary \ref{corollary: asymptotic normality of IDA}}
Recall that
\begin{align*}
&~\tcr{\mathrm{aver}(\hat{\Theta}_{jp}(\hat{\mathcal{C}}_{0}')) ~-~ \mathrm{aver}(\Theta_{jp}(\mathcal{C}_{0}')) 
 ~= 
 ~E_{n,jp}(\hat{\mathcal{C}}_{0}', \mathcal{C}_{0}') ~+~
 \mathrm{aver}(\hat{\Theta}_{jp}(\mathcal{C}_{0}')) - \mathrm{aver}(\Theta_{jp}(\mathcal{C}_{0}')).}
\end{align*}
Since from the discussion before Theorem \ref{theorem: asymptotic linearity} (see Section \ref{section: asymptotic distribution}) we have
$$E_{n,jp}(\hat{\mathcal{C}}_{0}', \mathcal{C}_{0}')  ~=~ o_{\Prob}(1/ \sqrt{n}),$$ it is sufficient to show $\Exp[Z_{jp}^2] = \Omega(1)$ and
\begin{align}\label{eq: modified asymp normality}
\frac{\sqrt{n}\left\{ \mathrm{aver}(\hat{\Theta}_{jp}(\mathcal{C}_{0}')) - \mathrm{aver}(\Theta_{jp}(\mathcal{C}_{0}')) \right\}}{\sqrt{\Exp[Z_{jp}^2]}}~ \; \overset{d}{\longrightarrow} ~\;\mathcal{N}(0,1).
\end{align}

As discussed in the proof of Theorem \ref{theorem: asymptotic linearity}, $ \mathrm{aver}(\hat{\Theta}_{jp}(\mathcal{C}_{0}')) - \mathrm{aver}(\Theta_{jp}(\mathcal{C}_{0}'))$ can be written as $\sum_{S \in \mathcal{S}} \mathbf{a}_S^T (\hat{\boldsymbol{\beta}}_S - \boldsymbol{\beta}_S)$, \tcr{following the notations and setup of Remark \ref{remark: linear functionals}}, for some set \tcr{of vectors} $\{\mathbf{a}_S : S \in \mathcal{S} \}$ satisfying $\sum_S ||\mathbf{a}_S||_2 = O(1)$, where $\boldsymbol{\beta}_S$ denotes the vector of regression coefficients in the regression of $Y := X_p$ on $\mathbf{X}_S$ \tcr{and} $\hat{\boldsymbol{\beta}}_S$ denotes its sample version (i.e. the corresponding OLS estimator) respectively, and $\mathcal{S} := \{S_{jr} = (j, 1,\ldots, t, S_{jr}') : r \in \{1,\ldots,L_{\mathrm{distinct},j}\}\}$ is as in the proof of Theorem \ref{theorem: asymptotic linearity}.

\par\smallskip
Therefore, given the stronger sparsity assumption \tcr{(from Assumption \ref{assumption: conditional variance})}
$$
n^{-1/2}\{q_{j}+ \log(L_{\mathrm{distinct}, j})\} ~\longrightarrow~ 0,
$$ \eqref{eq: modified asymp normality} \tcr{now} follows from the second part of Remark \ref{remark: linear functionals} as long as the second moment of the influence function in the asymptotic linear expansion of $\mathrm{aver}(\hat{\Theta}_{jp}(\mathcal{C}_{0}')) - \mathrm{aver}(\Theta_{jp}(\mathcal{C}_{0}'))$ is bounded below, i.e.\ $\Exp[Z_{jp}^2] = \Omega(1)$. \qed

\par\medskip
We prove $\Exp[Z_{jp}^2] = \Omega(1)$ \tcr{by verifying the sufficient conditions given in the last paragraph of \tcR{Appendix} 
\ref{sec:remarkverify} (regarding the two moment conditions required in Remark \ref{remark: linear functionals}), namely:} 
there exist constants $c_1 > 0$ and $c_2 > 0$ such that
\begin{align}\label{eq: sufficient lower bound condition}
\Var(Y \mid \cup_{S \in \mathcal{S}} \mathbf{X}_{S}) ~>~ c_1 ~~ \text{and} ~~ \Exp\left[\left(\sum_{S \in \mathcal{S}} \mathbf{a}_S^T\left(\Sigma_{SS}\right)^{-1}(\mathbf{X}_S - \boldsymbol{\mu}_S) \right)^2 \right] ~>~ c_2.
\end{align}


The first part of Assumption \ref{assumption: conditional variance} and the first part of \eqref{eq: sufficient lower bound condition} with $c_1 = v$ are identical, since $\cup_{S \in \mathcal{S}}~ \mathbf{X}_S = \mathbf{Adj}_{\mathcal{C}_0'}(X_j) \cup \{X_1,\ldots,X_t, X_j\}$. Thus we complete the proof by showing that the second part of \eqref{eq: sufficient lower bound condition} follows from Assumptions \ref{assumption: eigenvalue condition} and \ref{assumption: conditional variance}.

\par\smallskip
To this end, following the notation in the proof of Theorem \ref{theorem: asymptotic linearity}, we write
\begin{align*}
&~\sum_{S \in \mathcal{S}} \mathbf{a}_S^T(\Sigma_{SS})^{-1}(\mathbf{X}_S - \boldsymbol{\mu}_S) \\
= &~ \sum_{r = 1}^{L_{\mathrm{distinct},j}} \frac{m_r}{L}~ e_{1, |S_{jr}'| + t + 1}^T \left((\Sigma_0)_{(j, 1, \ldots, t, S_{jr}')(j, 1, \ldots, t, S_{jr}')}\right)^{-1} \\
& \qquad \qquad \qquad \qquad \qquad \qquad \qquad \qquad  \left(\mathbf{X}_{(j, 1, \ldots, t, S_{jr}')} - \boldsymbol{\mu}_{(j, 1, \ldots, t, S_{jr}')} \right),
\end{align*}
where $e_{1, |S_{jr}'| + t + 1}^T$ denote the first row of an $(|S_{jr}'| + t + 1) \times (|S_{jr}'| + t + 1)$ identity matrix.

By partitioning $(\Sigma_0)_{(j, 1, \ldots, t, S_{jr}')(j, 1, \ldots, t, S_{jr}')}$ as
\[
(\Sigma_0)_{(j, 1, \ldots, t, S_{jr}')(j, 1, \ldots, t, S_{jr}')}
~=~ \left( \begin{array}{ll} (\Sigma_0)_{jj} & (\Sigma_0)_{j(1, \ldots, t, S_{jr}')} \\ (\Sigma_0)_{(1, \ldots, t, S_{jr}')j} & (\Sigma_0)_{(1, \ldots, t, S_{jr}')} \end{array} \right),
\]
and applying the well-known formula for the inverse of a partitioned matrix, we obtain
\begin{align*}
&~e_{1, |S_{jr}'| + t + 1}^T \left((\Sigma_0)_{(j, 1, \ldots, t, S_{jr}') (1, \ldots, t, S_{jr}')}\right)^{-1} \left(\mathbf{X}_{(j, 1, \ldots, t, S_{jr}')} - \boldsymbol{\mu}_{(j, 1, \ldots, t, S_{jr}')} \right) \\
= &~ \frac{(X_j - \mu_j) - (\Sigma_0)_{j(1, \ldots, t, S_{jr}')} \left((\Sigma_0)_{(1, \ldots, t, S_{jr}') (1, \ldots, t, S_{jr}')}\right)^{-1} \left(\mathbf{X}_{(1, \ldots, t, S_{rj}')} - \boldsymbol{\mu}_{(1, \ldots, t, S_{rj}')}\right)}{(\Sigma_0)_{jj} - (\Sigma_0)_{j(1, \ldots, t, S_{jr}')} \left((\Sigma_0)_{(1, \ldots, t, S_{jr}') (1, \ldots, t, S_{jr}')}\right)^{-1} (\Sigma_0)_{(1, \ldots, t, S_{jr}')j}}.
\end{align*}
We define
\begin{align*}
\boldsymbol{\beta}_{j|(1, \ldots, t, S_{jr}')}^T &~:=~ (\Sigma_0)_{j(1, \ldots, t, S_{jr}')} \left((\Sigma_0)_{(1, \ldots, t, S_{jr}')\tcm{(1, \ldots, t, S_{jr}')}}\right)^{-1}, \quad \text{and} \\
\sigma_{j | (1, \ldots, t, S_{jr}')}^2 &~:=~ (\Sigma_0)_{jj} - (\Sigma_0)_{j(1, \ldots, t, S_{jr}')} \left((\Sigma_0)_{(1, \ldots, t, S_{jr}')\tcm{(1, \ldots, t, S_{jr}')}}\right)^{-1} (\Sigma_0)_{(1, \ldots, t, S_{jr}')j}.
\end{align*}
Therefore,
\begin{align*}
&~\Exp\left[\left(\sum_{S \in \mathcal{S}} \mathbf{a}_S^T\left(\Sigma_{SS}\right)^{-1}(\mathbf{X}_S - \boldsymbol{\mu}_S) \right)^2 \right] \\
=&~ \Exp\left[\left( \sum_{r = 1}^{L_{\mathrm{distinct},j}} \frac{m_r}{L}~ e_{1, |S_{jr}'| + t + 1}^T \left((\Sigma_0)_{(j, 1, \ldots, t, S_{jr}')(j, 1, \ldots, t, S_{jr}')}\right)^{-1} \left(\mathbf{X}_{(j, 1, \ldots, t, S_{jr}')} - \boldsymbol{\mu}_{(j, 1, \ldots, t, S_{jr}')} \right) \right)^2 \right] \\
=&~ \Exp\left[\left( \sum_{r = 1}^{L_{\mathrm{distinct},j}} \frac{m_r}{L~\sigma_{j | (1, \ldots, t, S_{jr}')}^2}~ \left( (X_j - \mu_j) - \boldsymbol{\beta}_{j|(1, \ldots, t, S_{jr}')}^T (\mathbf{X}_{(1, \ldots, t, S_{rj}')} - \boldsymbol{\mu}_{(1, \ldots, t, S_{rj}')})\right) \right)^2 \right] 
\end{align*}    
\begin{align*}
=&~ \Var\left( \sum_{r = 1}^{L_{\mathrm{distinct},j}} \frac{m_r~ \left(X_j - \boldsymbol{\beta}_{j|(1, \ldots, t, S_{jr}')}^T\mathbf{X}_{(1, \ldots, t, S_{rj}')}\right)}{L~\sigma_{j | (1, \ldots, t, S_{jr}')}^2} \right) \\
\geq &~ \Exp\left[ \Var\left( \left\{\sum_{r = 1}^{L_{\mathrm{distinct},j}} \frac{m_r~ \left(X_j - \boldsymbol{\beta}_{j|(1, \ldots, t, S_{jr}')}^T\mathbf{X}_{(1, \ldots, t, S_{rj}')}\right)}{L~\sigma_{j | (1, \ldots, t, S_{jr}')}^2} \right\} \mid \mathbf{Adj}_{\mathcal{C}_0'}(X_j) \cup \mathbf{X}_{\{1,\ldots,t\}} \right) \right] \\
\;\; = & ~ \Exp\left[ \Var\left( \left\{\sum_{r = 1}^{L_{\mathrm{distinct},j}} \frac{m_r X_j}{L~\sigma_{j | (1, \ldots, t, S_{jr}')}^2} \right\} \mid \mathbf{Adj}_{\mathcal{C}_0'}(X_j) \cup \mathbf{X}_{\{1,\ldots,t\}} \right) \right] \\
\;\; = & ~ \left\{\sum_{r = 1}^{L_{\mathrm{distinct},j}} \frac{m_r}{L~\sigma_{j | (1, \ldots, t, S_{jr}')}^2} \right\}^2 \Exp\left[ \Var\left(X_j \mid  \mathbf{Adj}_{\mathcal{C}_0'}(X_j) \cup \mathbf{X}_{\{1,\ldots,t\}} \right) \right],
\end{align*}
where the second last equality follows from the fact that $\mathbf{X}_{S_{jr}'} \subseteq \mathbf{Adj}_{\mathcal{C}_0'}(X_j)$ for all $r \in \{1,\ldots, L_{\mathrm{distinct}, j}\}$.

Now since $\sum_r m_r = L$ and $\sigma_{j | (1, \ldots, t, S_{jr}')}^2 \leq (\Sigma_0)_{j\tcr{j}}$ for all $r \in \{1,\ldots, L_{\mathrm{distinct}, j}\}$, by Assumption \ref{assumption: eigenvalue condition}, we have
\[
\left\{\sum_{r = 1}^{L_{\mathrm{distinct},j}} \frac{m_r}{L~\sigma_{j | (1, \ldots, t, S_{jr}')}^2} \right\}^2 ~\geq~ \frac{1}{C_3^2}.
\]
Finally, by Assumption \ref{assumption: conditional variance}, we have
\[
\Exp\left[ \Var\left(X_j \mid  \mathbf{Adj}_{\mathcal{C}_0'}(X_j) \cup \mathbf{X}_{\{1,\ldots,t\}} \right) \right] ~\geq~ v.
\]
This \tcr{therefore establishes \eqref{eq: sufficient lower bound condition} and} completes \tcr{the proof of Corollary \ref{corollary: asymptotic normality of IDA}.} \qed

\subsection{Proof of Corollary \ref{corollary: asymptotic distribution of MIDA}}
It is easy to see that Theorem \ref{theorem: asymptotic linearity}, Corollary \ref{corollary: asymptotic normality of IDA} and Assumption \ref{assumption: stronger sparsity} imply
\begin{align*}
&~ \hat{\eta}(\hat{\mathcal{C}}_0') - \eta(\mathcal{C}_0')  ~=  ~\hat{\theta}_{tj}~E_{n,jp}(\hat{\mathcal{C}}_{0}', \mathcal{C}_{0}') ~+~ \\
  & \qquad \qquad \qquad \quad \;\; ~+~ \frac{1}{n}\sum_{r = 1}^{n} \left\{ \theta_{tj}Z_{jp}^{(r)} ~+~  \mathrm{aver}(\Theta_{jp}(\mathcal{C}_{0}')) Z_{tj}^{(r)} \right\} ~+~ o_{\Prob}\left( \frac{1}{\sqrt{n}}\right).
\end{align*}
Thus the result follows from the same arguments given in the proof of Corollary \ref{corollary: asymptotic normality of IDA}. \tcr{This completes the proof of Corollary \ref{corollary: asymptotic distribution of MIDA}.} \qed

\subsection{Proof of Theorem \ref{theorem: asymptotic distribution}}
%
First, consider the case $\theta_{jt} = \mathrm{aver}(\Theta_{jp}(\mathcal{C}_{0}')) = 0$. Then $\tilde{T}_{n,jp}$ can be written as
\begin{align*}
\tilde{T}_{n,jp} \; := \; \frac{W_{n,1j}~ W_{n,jp}}{\sqrt{W_{n,1j}^2 + W_{n,jp}^2 + 2~ \rho~W_{n,1j}~ W_{n, jp}}},
\end{align*}
where $ \rho$ is the correlation coefficient between $Z_{jp}$ and $Z_{tj}$, and
\[
(W_{n, 1j}, W_{n, jp})^T \; := \; \left(\frac{\sqrt{n}~ \hat{\theta}_{tj} }{\sqrt{\Exp[Z_{tj}^2]}}, \frac{\sqrt{n} ~\mathrm{aver}(\hat{\Theta}_{jp}(\mathcal{C}_{0}'))}{\sqrt{\Exp[Z_{jp}^2]}} \right)^T.
\]

By following similar arguments as in Theorem \ref{theorem: asymptotic linearity} and in Corollary \ref{corollary: asymptotic distribution of MIDA} it can be shown that
\[
(W_{n, 1j}, W_{n, jp})^T ~ \overset{d}{\longrightarrow} ~ (W_1, W_2)^T ~ \sim ~\mathcal{N}\left( \mathbf{0},~ \left(\begin{array}{cc} 1 & \rho\\ \rho & 1 \end{array} \right) \right).
\]

Therefore, by the continuous mapping theorem,  $\tilde{T}_{n,jp}$ is asymptotically distributed as
\begin{align}\label{eq: 0-0 case}
W \; := \; \frac{W_1W_2}{\sqrt{W_1^2 + W_2^2 + 2 \rho W_1 W_2}}.
\end{align}

Next, suppose at least one of $\theta_{jt}$ or $\mathrm{aver}(\Theta_{jp}(\mathcal{C}_{0}'))$ is non-zero. It follows from Theorem \ref{theorem: high-dimensional consistency} that the denominator of $\tilde{T}_{n,jp}$ converges in probability to the denominator of $T_{n,jp}$. Therefore, $\tilde{T}_{n,jp} \overset{d}{\longrightarrow} \mathcal{N}(0, 1)$ follows from Corollary \ref{corollary: asymptotic distribution of MIDA} and \tcr{an application of} Slutsky's theorem. \qed

\section{Technical Tools - \tcr{Definitions and} Supporting Lemmas}\label{sec:technicaltools}   


In this section, we collect \tcr{some definitions referred in the main paper} and some key technical lemmas that \tcr{will} be useful \tcr{in the proofs of all our main results.}

\subsection{\tcr{Sub-Gaussians and Sub-Exponentials}}\label{subsec:orlicznorms}


\tcr{Here,} we formally define sub-Gaussian and sub-exponential random variables (and vectors), \tcr{used in some of our assumptions,} based on the \tcr{concept} of (exponential) Orlicz norms. 


\begin{definition}[The $\psialpha$-Orlicz norm, and sub-Gaussian and sub-exponential variables]\label{def:psialnorm}
For any $\alpha > 0$, define the function $\psialpha(u)  := \exp(u^\alpha) - 1$ $\forall \; u \geq 0$. For any random variable $X$ and any $\alpha > 0$, the \emph{$\psialpha$-Orlicz norm} (the exponential Orlicz norm of order $\alpha > 0$) of $X$ is then defined as:
\begin{equation*}
\psialphanorm{X} \; := \; \inf \{ c > 0: \; \E\{ \psialpha(|X|/c) \} \; \leq \; 1 \}, 
\end{equation*}
where $\psialphanorm{X}$ is understood \tcr{to} be $\infty$ if the infimum above is over an empty set. The special cases of
$\alpha = 2$ and $\alpha = 1$ correspond to \tcr{the} 
sub-Gaussian and sub-exponential random variables, respectively. $X$ is said to be 
sub-Gaussian if $\psitwonorm{X} < \infty$  (and $\psitwonorm{X}$ is 
\tcr{called} the `sub-Gaussian norm' of $X$), and $X$ is 
sub-exponential if $\psionenorm{X} < \infty$ \tcr{(and $\psionenorm{X}$ is its `sub-exponential norm').}
\end{definition}

\begin{definition}[Sub-Gaussian norm(s) for random vectors]\label{def:subgaussvect}
A random vector \tcr{$\bX = (X_j)_{j=1}^d \in \R^d$} $(d \geq 1)$ is defined to be sub-Gaussian if and only if \tcr{for all} $\bv \in \R^d$, $\bv^T\bX$ is sub-Gaussian, as in Definition \ref{def:psialnorm}. For such random vectors, we define two sub-Gaussian norms as follows: 
\begin{equation*}
\psitwonorm{\bX} \; := \; \underset{1 \leq j \leq d}{\max} \psitwonorm{\tcr{X_j}} \;\; \mbox{and} \;\; \psitwonormstar{\bX} \; := \; \underset{\|\bv \|_2 \leq 1}{\sup} \psitwonorm{\bv^T\bX}.
\end{equation*}
For a general $\alpha > 0$, we also define, analogous to $\psitwonorm{\bX}$, the $\psialpha$-Orlicz norm of a random vector $\bX \in \R^d$ as: $\psialphanorm{\bX} \; := \; \underset{1 \leq j \leq d}{\max} \psialphanorm{\tcr{X_j}}$. 
\end{definition}
For most of our analyses, we use $\psitwonormstar{\cdot}$ as the vector sub-Gaussian norm which has usually been the accepted definition \citep[e.g.]{Vershynin12, Vershynin18}.
The corresponding extension of the $\psitwonormstar{\bX}$ norm to a $\psialphanorm{\cdot}^*$ norm for a general $\alpha > 0$ is however not immediate, and certainly not standard in the literature.

\subsection{Properties of Orlicz Norms and Concentration Bounds}\label{subsec:probbound}

We next enlist, through a sequence of lemmas, some useful general properties of Orlicz norms, as well as a few specific ones for sub-Gaussians and sub-exponentials. These are all quite well known and routinely used. Their statements (possibly with slightly different constants) and proofs can be found in several relevant references, including \citet{VanderVaartWellner96:EmpiricalProcessesBook, Pollard15, Vershynin12, Vershynin18, Wainwright_Book_2019} and \citet{Rigollet17}, among others. The proofs are \tcr{therefore skipped here} for brevity.

\begin{lemma}[General properties of Orlicz norms, sub-Gaussians and sub-exponentials]\label{lem:1:genprop}
In the following, $X, Y \in \R$ denote generic random variables and $\mu$ denotes $\E(X) \in \R$.
\begin{enumerate}[(i)]
\item \emph{(Basic properties).} For $\alpha \geq 1$, $\psialphanorm{\cdot}$ is a norm  (and a quasinorm if $\alpha < 1$) satisfying: (a) $\psialphanorm{X} \geq 0$ and $\psialphanorm{X} = 0 \Leftrightarrow X = 0$ almost surely (a.s.), (b) $\psialphanorm{c X} = |c| \psitwonorm{X}$ $\forall \; c \in \R$ and $\psialphanorm{|X|} = \psialphanorm{X}$, and (c) $\psialphanorm{X + Y} \leq \psialphanorm{X} + \psialphanorm{Y}$.
\par\smallskip

\item \emph{(Tail bounds and equivalences).} (a) If $\psialphanorm{X} \leq \sigma$ for some $(\alpha, \sigma) > 0$, then $\forall \; \epsilon \geq 0$, $\P(|X| > \epsilon) \leq 2 \exp(- \epsilon^{\alpha}/ \sigma^{\alpha})$. (b) Conversely, if $\P(|X| > \epsilon) \leq C \exp(- \epsilon^\alpha/ \sigma^\alpha) \; \forall \; \epsilon \geq 0$, for some $(C, \sigma, \alpha) > 0$, then $\psialphanorm{X} \leq \sigma (1 + C/2)^{1/\alpha}$.
\par\smallskip

\item \emph{(Moment bounds).} If $\psialphanorm{X} \leq \sigma$ for some $(\alpha, \sigma) > 0$, then $\E(|X|^m) \leq C_{\alpha}^m \sigma^m m^{m/\alpha}$ $\forall \; m \geq 1$, for some constant $C_{\alpha}$ depending only on $\alpha$. (A converse of this result also holds, although not explicitly presented here). For $\alpha = 1$ and $2$ in particular, we have:
    \begin{enumerate}[(a)]
    \item If $\psionenorm{X} \leq \sigma$, then for each $m \geq 1$, $\E(|X|^m) \; \leq \; \sigma^m  m! \; \leq \; \sigma^m m^m $.
    \item If $\psitwonorm{X} \leq \sigma$, then $\E(|X|^m) \leq 2 \sigma^m \Gamma(m/2 + 1)$ for each $m \geq 1$, where $\Gamma(a) := \int_0^{\infty} x^{a - 1} exp(-x) dx $ $\forall \; a > 0$ denotes the Gamma function. Hence, $\E(|X|) \leq \sigma \sqrt{\pi}$ and $\E(|X|^m) \leq 2 \sigma^m (m/2)^{m/2}$ for any $m \geq 2$.
    \end{enumerate}

\item \emph{(H\"{o}lder-type inequality for the Orlicz norm of products).} For any $\alpha, \beta > 0$, let $\gamma := (\alpha^{-1} + \beta^{-1})^{-1}$. Then, for any two random variables $X$ and $Y$ with $\psialphanorm{X} < \infty$ and $\psibetanorm{Y} < \infty$, $\psigammanorm{XY} < \infty$ and $\psigammanorm{XY} \leq  \psialphanorm{X} \psibetanorm{Y}$. In particular, for any two sub-Gaussians $X$ and $Y$, $XY$ is sub-exponential and $\psionenorm{XY} \leq \psitwonorm{X} \psitwonorm{Y}$. Moreover, if $Y \leq M$ a.s. and $\psialphanorm{X} < \infty$, then $\psialphanorm{XY} \leq M \psialphanorm{X}$.

\par\smallskip

\item \emph{(MGF related properties of sub-Gaussians).} Let $\E[\exp\{t(X-\mu)\}]$ denote the moment generating function (MGF) of $X - \mu$ at $t \in \R$. Then:
\begin{enumerate}[(a)]
\item If $\psitwonorm{X - \mu} < \sigma$ for some $\sigma > 0$, then $\E[\exp\{t(X-\mu)\}] \leq \exp(2\sigma^2 t^2)$ $\forall \; t \in \R$.
\item Conversely, if $\E[\exp\{t(X-\mu)\}] \leq \exp(\sigma^2 t^2)$  $\forall \; t \in \R$  for some $\sigma \geq 0$, then for any $\epsilon \geq 0$, $\P(| X - \mu | > \epsilon) \leq 2 \exp(- \epsilon^2/ 4 \sigma^2)$ and hence, $\psitwonorm{X-\mu} \leq 2\sqrt{2} \sigma$.
    \end{enumerate}
\end{enumerate}
\end{lemma}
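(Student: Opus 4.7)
The plan is to verify each of the five parts by standard Orlicz-norm techniques. For part (i), I would check the norm axioms directly: non-negativity is built into $\psialpha \geq 0$; definiteness follows by letting $c \downarrow 0$ with monotone convergence in $\E\{\psialpha(|X|/c)\} \leq 1$, which forces $\P(X \neq 0) = 0$; absolute homogeneity is a direct substitution; and the triangle inequality for $\alpha \geq 1$ follows from convexity of $\psialpha$ applied to $|X+Y|/(a+b) \leq (a/(a+b))(|X|/a) + (b/(a+b))(|Y|/b)$, taking expectations, and then letting $a \downarrow \psialphanorm{X}$, $b \downarrow \psialphanorm{Y}$.

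For (ii)(a), I would apply Markov's inequality to $\exp(|X|^\alpha/\sigma^\alpha)$ with $\sigma := \psialphanorm{X}$, giving $\P(|X| > \epsilon) \leq 2\exp(-\epsilon^\alpha/\sigma^\alpha)$. The converse (ii)(b) uses the layer-cake representation $\E\{\psialpha(|X|/c)\} = c^{-1}\int_0^\infty \psialpha'(t/c)\,\P(|X|>t)\,dt$, into which I would plug the assumed tail bound; after the substitution $u = t^\alpha/\sigma^\alpha$ the integral is explicit and forces $c$ to be of the form $\sigma(1+C/2)^{1/\alpha}$. For (iii), either layer cake applied to $\E|X|^m = \int_0^\infty m t^{m-1}\,\P(|X|>t)\,dt \leq (2m/\alpha)\sigma^m\Gamma(m/\alpha)$, or Taylor expansion of $\E\{\exp(|X|^\alpha/\sigma^\alpha)\} \leq 2$ term by term, delivers all the stated moment bounds, with Stirling used to rewrite the Gamma factors in the $\alpha = 2$ case and to absorb numerical constants in the $\alpha = 1$ case.

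For part (iv), the central tool is Young's inequality with exponents $p = \alpha/\gamma$ and $q = \beta/\gamma$ (chosen so that $1/p + 1/q = 1$): applied to $u = (|X|/\psialphanorm{X})^\gamma$ and $v = (|Y|/\psibetanorm{Y})^\gamma$, this yields $(|XY|/(\psialphanorm{X}\psibetanorm{Y}))^\gamma \leq (\gamma/\alpha)(|X|/\psialphanorm{X})^\alpha + (\gamma/\beta)(|Y|/\psibetanorm{Y})^\beta$. Exponentiating, invoking Jensen's inequality for the convex function $\exp$ on this convex combination (weights summing to $1$), and taking expectations while using $\E\{\exp((|X|/\psialphanorm{X})^\alpha)\} \leq 2$ (and analogously for $Y$), gives $\E\{\psigamma(|XY|/(\psialphanorm{X}\psibetanorm{Y}))\} \leq 1$, which is exactly the claim. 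The bounded-$|Y|$ addendum is immediate by monotonicity of $\psialpha$.

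Finally, for (v)(a), I would Taylor-expand the MGF: since $\E(X-\mu) = 0$, one has $\E\{\exp(t(X-\mu))\} = 1 + \sum_{m \geq 2} t^m\,\E\{(X-\mu)^m\}/m!$, and plugging in the moment bound from (iii)(b) reduces the series to a power series in $\sigma^2 t^2$ whose terms compare, via Stirling, to those of $\exp(2\sigma^2 t^2)$. For (v)(b), a Chernoff optimization gives $\P(X-\mu > \epsilon) \leq \exp(-\epsilon^2/(4\sigma^2))$, the two-sided version follows by symmetry applied to $-(X-\mu)$, and the $\psitwonorm{\cdot}$-norm bound then follows by invoking (ii)(b) with $C = 2$. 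The main bookkeeping nuisance throughout is tracking precise numerical constants, particularly in (v)(a) where the Stirling estimates for $\Gamma(m/2+1)/m!$ need to combine cleanly to recover the exponent $2\sigma^2 t^2$; however, the exact constants do not affect the downstream applications of the lemma in the paper.
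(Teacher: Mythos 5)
The paper does not actually prove this lemma: it is presented as a collection of standard facts whose proofs are explicitly skipped and deferred to the cited references (van der Vaart--Wellner, Pollard, Vershynin, Wainwright, Rigollet), ``possibly with slightly different constants.'' Your sketch is the standard textbook argument and its structure is sound: convexity of $\psialpha$ for the triangle inequality in (i), Markov applied to $\exp(|X|^\alpha/\sigma^\alpha)$ for (ii)(a), layer cake or term-by-term expansion for (iii), Young's inequality plus convexity of $\exp$ for (iv), and Taylor expansion of the MGF followed by Chernoff for (v). So there is no proof in the paper to compare against; your proposal supplies what the paper omits, and in outline it is correct.

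Two constant-level points, however, do not come out the way you assert. In (ii)(b), the layer-cake computation you describe gives $\E\exp(|X|^\alpha/c^\alpha) \le 1 + C\sigma^\alpha/(c^\alpha-\sigma^\alpha)$, which is $\le 2$ only when $c^\alpha \ge \sigma^\alpha(1+C)$; it does not ``force'' the stated $\sigma(1+C/2)^{1/\alpha}$, and in fact the stated constant fails in general: with $\alpha=1$, $\sigma=1$, $C=2$ and $\P(|X|>\epsilon)=\min(1,2e^{-\epsilon})$ one computes $\E\, e^{|X|/2}=2\sqrt{2}>2$, so $\psionenorm{X}>2=\sigma(1+C/2)$. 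The correct conclusion from your computation is the classical $\sigma(1+C)^{1/\alpha}$. This propagates to (v)(b): Chernoff indeed gives $\P(|X-\mu|>\epsilon)\le 2\exp\{-\epsilon^2/(4\sigma^2)\}$, but feeding that into (ii)(b) with the corrected constant yields $2\sqrt{3}\,\sigma$ rather than $2\sqrt{2}\,\sigma$; to recover $2\sqrt{2}$ (or better) one should bound $\E\exp\{\lambda(X-\mu)^2\}$ directly, e.g.\ via the Gaussian-integral identity $\exp(\lambda z^2)=\E_g\exp(\sqrt{2\lambda}\,gz)$ with $g\sim\mathcal{N}(0,1)$, which gives $\E\exp\{\lambda(X-\mu)^2\}\le(1-4\lambda\sigma^2)^{-1/2}$. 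Similarly, in (iii)(a) the layer-cake route produces $2\sigma^m m!$, so the stated $\sigma^m m!$ requires the term-by-term expansion of $\E\exp(|X|/\sigma)\le 2$ (which you mention only as an alternative); a factor of $2$ cannot simply be ``absorbed.'' None of this matters for how the lemma is used downstream, and the paper itself flags that its constants may differ from the references, but your write-up should either adjust the stated constants or supply the sharper arguments rather than claim they drop out of the computations you outline.
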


\begin{lemma}[Properties of sub-Gaussian random vectors]\label{lem:2:sgvectprop}
Let \tcr{$\bX = (X_j)_{j=1}^d \in \R^d$} be any random vector, 
and let $\bv \in \R^d$ and $\bM \in \R^{d \times d}$ denote any generic (fixed) vectors and matrices, \tcr{for any $d \geq 1$.} Then,
\begin{enumerate}[(i)]
\item For any $\bv \in \R^d$, $\psitwonorm{\bv^T\bX} \leq \| \bv \|_2 \psitwonormstar{\bX}$ and $\psitwonorm{\bv^T\bX} \leq \| \bv \|_1 \psitwonorm{\bX}$ $\leq \sqrt{d} \Ltwonorm{\bv} \psitwonorm{\bX}$. Hence, $\psitwonorm{\bX} \leq \psitwonormstar{\bX} \leq \sqrt{d} \psitwonorm{\bX}$.
    Further, for any \tcr{matrix} 
    $\bM \in \R^{d \times d}$, $\psitwonorm{\bM \bX} \leq \MLinfnorm{\bM} \psitwonorm{\bX} \leq \sqrt{d} \MLtwonorm{\bM} \psitwonorm{\bX}$ and $\psitwonormstar{\bM \bX} \leq \MLtwonorm{\bM} \psitwonormstar{\bX} \leq \sqrt{d} \MLtwonorm{\bM} \psitwonorm{\bX}$.
\par\smallskip

\item Suppose $\E(\bX) = \mathbf{0}$, $\psitwonorm{\bX} \leq \sigma$ and assume further that the coordinates \tcr{$\{X_j\}_{j=1}^d$} 
of $\bX$ are independent. Then for any $\bv \in \R^d$, $\psitwonorm{\bv^T\bX} \leq 2\sqrt{2} \sigma \Ltwonorm{\bv}$. Thus, under these additional assumptions on $\bX$, it holds that $\psitwonorm{\bX} \leq \psitwonormstar{\bX} \leq 2\sqrt{2} \psitwonorm{\bX}$. Further, for any $\bM \in \R^{d \times d}$, $\psitwonorm{\bM \bX} \leq \psitwonormstar{\bM \bX} \leq \MLtwonorm{\bM} \psitwonormstar{\bX} \leq 2 \sqrt{2} \MLtwonorm{\bM} \psitwonorm{\bX}$.
\end{enumerate}
\end{lemma}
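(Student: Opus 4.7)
The plan is to prove each claim in the lemma through direct manipulation of the Orlicz-norm definitions together with the standard machinery from Lemma \ref{lem:1:genprop}. For part (i), the first bound $\psitwonorm{\bv^T\bX} \leq \Ltwonorm{\bv} \psitwonormstar{\bX}$ will follow immediately by factoring out $\Ltwonorm{\bv}$, writing $\bv^T\bX = \Ltwonorm{\bv} (\bv/\Ltwonorm{\bv})^T \bX$, using the homogeneity in Lemma \ref{lem:1:genprop}(i)(b), and applying the supremum definition of $\psitwonormstar{\cdot}$ in Definition \ref{def:subgaussvect}. The second bound $\psitwonorm{\bv^T\bX} \leq \Lonenorm{\bv} \psitwonorm{\bX}$ will follow from the triangle and homogeneity properties in Lemma \ref{lem:1:genprop}(i), expanding $\bv^T\bX = \sum_j \bvj \bXj$, then applying the coordinatewise definition $\psitwonorm{\bX} = \max_j \psitwonorm{\bXj}$; the chain completes via Cauchy--Schwarz $\Lonenorm{\bv} \leq \sqrt{d}\,\Ltwonorm{\bv}$.

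The sandwich $\psitwonorm{\bX} \leq \psitwonormstar{\bX} \leq \sqrt{d}\,\psitwonorm{\bX}$ will then follow by instantiating the first bound with $\bv$ ranging over the standard basis (to get the left inequality after taking the maximum) and by combining the two preceding bounds (to get the right inequality, since for $\Ltwonorm{\bv} \leq 1$ one has $\psitwonorm{\bv^T \bX} \leq \Lonenorm{\bv} \psitwonorm{\bX} \leq \sqrt{d}\,\psitwonorm{\bX}$). The matrix bounds will be derived row-by-row (for $\MLinfnorm{\cdot}$, since $(\bM\bX)_{[j]} = \bM_{j,\cdot}^T\bX$ and $\max_j \Lonenorm{\bM_{j,\cdot}} = \MLinfnorm{\bM}$) and via composition with a linear functional (for $\MLtwonorm{\cdot}$, since for $\Ltwonorm{\bv} \leq 1$, $\bv^T\bM\bX = (\bM^T\bv)^T\bX$ and $\Ltwonorm{\bM^T\bv} \leq \MLtwonorm{\bM}$), then exploiting $\MLinfnorm{\bM} \leq \sqrt{d}\,\MLtwonorm{\bM}$.

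For part (ii), the main new ingredient is independence, which I would exploit through the MGF characterization of sub-Gaussianity in Lemma \ref{lem:1:genprop}(v). Specifically, for independent zero-mean $\{\bXj\}$ with $\psitwonorm{\bXj} \leq \sigma$, Lemma \ref{lem:1:genprop}(v)(a) gives $\E[\exp(t \bvj \bXj)] \leq \exp(2\sigma^2 \bvj^2 t^2)$, and independence lets us multiply these bounds to obtain $\E[\exp(t \bv^T \bX)] \leq \exp(2\sigma^2 \Ltwonorm{\bv}^2 t^2)$ for all $t \in \R$. Applying Lemma \ref{lem:1:genprop}(v)(b) in the reverse direction then converts this MGF bound back to a sub-Gaussian norm bound on $\bv^T\bX$ with a constant proportional to $\sigma \Ltwonorm{\bv}$, yielding the improved dimension-free inequality. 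The remaining assertions in (ii) will follow exactly as in part (i) by taking suprema over $\Ltwonorm{\bv} \leq 1$ and specializing $\bv$ to basis vectors, and the matrix bounds then follow from the already-established operator-norm-$\psitwonormstar{\cdot}$ inequality in (i) combined with the sandwich relations.

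The routine part is just bookkeeping; the only place where anything non-trivial happens is the MGF tensorization in (ii), and even that is a standard consequence of independence. I do not expect any genuine obstacles — the entire proof is an assembly exercise that strings together the basic properties catalogued in Lemma \ref{lem:1:genprop} with the definitions in \ref{def:psialnorm} and \ref{def:subgaussvect}. The only minor care needed is to track the precise numerical constant $2\sqrt{2}$ in (ii), which emerges from chaining the constants in Lemma \ref{lem:1:genprop}(v)(a) and (v)(b); if the chosen constants in those auxiliary results produce a slightly different final constant, the statement can be adjusted without affecting any downstream use, since all subsequent results only depend on the existence of some absolute constant.
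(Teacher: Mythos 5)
The paper does not actually prove this lemma: it is catalogued among the "well known" technical tools whose proofs are "skipped for brevity" with pointers to standard references (van der Vaart--Wellner, Vershynin, etc.), so there is no in-paper argument to compare against. Your outline is the standard argument and part (i) is completely fine: homogeneity plus the supremum definition of $\psitwonormstar{\cdot}$ gives $\psitwonorm{\bv^T\bX}\leq\Ltwonorm{\bv}\psitwonormstar{\bX}$, the triangle inequality gives the $\Lonenorm{\bv}$ bound, basis vectors and Cauchy--Schwarz give the sandwich, and the row-wise/composition arguments give the matrix bounds exactly as you describe.

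The one concrete point to tighten is the constant in part (ii). If you literally chain Lemma \ref{lem:1:genprop}(v)(a) with (v)(b) as proposed, the per-coordinate MGF bound is $\E[\exp(t v_j X_{[j]})]\leq\exp(2\sigma^2 v_j^2 t^2)$, tensorization gives $\E[\exp(t\,\bv^T\bX)]\leq\exp(2\sigma^2\Ltwonorm{\bv}^2 t^2)$, and then (v)(b), applied with "$\sigma$" equal to $\sqrt{2}\,\sigma\Ltwonorm{\bv}$, yields $\psitwonorm{\bv^T\bX}\leq 4\sigma\Ltwonorm{\bv}$, not $2\sqrt{2}\,\sigma\Ltwonorm{\bv}$ as stated in the lemma. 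To recover exactly $2\sqrt{2}$ you need the sharper per-coordinate bound $\E[\exp(tX_{[j]})]\leq\exp(\sigma^2 t^2)$ (provable directly from the Orlicz definition for a centered variable, e.g.\ via $e^x\leq x+e^{x^2}$ for $|t|\leq 1/\sigma$ and $tx\leq t^2\sigma^2/2+x^2/(2\sigma^2)$ otherwise), after which (v)(b) gives the claimed constant. You flagged this yourself, and the looser constant is harmless for every downstream use in the paper — indeed the paper's own Lemma \ref{lem:3:sgconc} carries the constant $4$ for exactly this kind of weighted sum — but as written your proof establishes the lemma only with $2\sqrt{2}$ replaced by $4$ unless you substitute the sharper MGF step.
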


\begin{lemma}[Concentration bounds for sums of independent sub-Gaussian variables]\label{lem:3:sgconc}
For any $n \geq 1$, let $\{X_i\}_{i=1}^n$ be independent (not necessarily i.i.d.) random variables with means $\{\mu_i\}_{i=1}^n$ and $\underset{1 \leq i \leq n} {\max} \psitwonorm{X_i - \mu_i} \leq \sigma $ for some constant $\sigma \geq 0$. Then, for any collection of real numbers $\{a_i\}_{i=1}^n$ and letting $\mathbf{a} = (a_1,\hdots,a_n) \in \R^n$, we have:
    \begin{eqnarray*}
    &&\E \left[ \exp\left\{t \sum_{i=1}^n a_i (X_i -\mu_i) \right\}\right]  \;\leq \; \exp\left(2 \sigma^2 t^2 \| \mathbf{a} \|_2^2\right) \quad \forall \; t \in \R, \quad \mbox{and} \\
    &&\P\left\{ \left| \sum_{i=1}^n a_i (X_i - \mu_i) \right| \; > \; \epsilon \right\} \;\; \leq \; 2 \exp\left\{ - \epsilon^2/ \left(8 \sigma^2 \|\mathbf{a} \|_2^2 \right)\right\} \quad \forall \; \epsilon \geq 0.
    \end{eqnarray*}
    In particular, when $a_i = 1/n$, we have: $\psitwonorm{\frac{1}{n} \sum_{i=1}^n (X_i - \mu_i)} \leq (4\sigma)/\sqrt{n}$, \tcr{and for any $\epsilon \geq 0$,
    $ \P \left\{ \left|\frac{1}{n} \sum_{i=1}^n (X_i - \mu_i)\right| >  \epsilon \right\} \leq  2 \exp\left\{- n \epsilon^2 / \left(8 \sigma^2 \right)\right\}$.} 
\end{lemma}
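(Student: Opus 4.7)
The plan is to leverage the MGF machinery already established in Lemma \ref{lem:1:genprop}(v), which gives us the crucial conversion between the $\psi_2$-norm bound and an explicit exponential bound on the MGF. Specifically, from $\psitwonorm{X_i - \mu_i} \leq \sigma$, part (v)(a) of that lemma immediately yields $\E[\exp\{t(X_i-\mu_i)\}] \leq \exp(2\sigma^2 t^2)$ for all $t \in \R$. So the first main step is purely to assemble independent one-variable MGF bounds into a joint bound.

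First I would show the MGF bound. By independence of $\{X_i\}_{i=1}^n$, for any $t \in \R$,
\[
\E\left[\exp\left\{t \sum_{i=1}^n a_i (X_i - \mu_i)\right\}\right] = \prod_{i=1}^n \E[\exp\{(ta_i)(X_i - \mu_i)\}].
\]
Applying the MGF bound from Lemma \ref{lem:1:genprop}(v)(a) to each factor with $t$ replaced by $ta_i$, each factor is bounded by $\exp(2\sigma^2 t^2 a_i^2)$. Multiplying these together gives $\exp(2\sigma^2 t^2 \sum_{i=1}^n a_i^2) = \exp(2 \sigma^2 t^2 \|\mathbf{a}\|_2^2)$, which is exactly the claimed bound.

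Next I would derive the tail bound via a standard Chernoff argument. For any $t > 0$ and $\epsilon \geq 0$, Markov's inequality applied to the exponentiated sum gives
\[
\P\left\{\sum_{i=1}^n a_i (X_i - \mu_i) > \epsilon\right\} \leq \exp(-t\epsilon)\, \E\left[\exp\left\{t \sum_{i=1}^n a_i (X_i - \mu_i)\right\}\right] \leq \exp(2\sigma^2 t^2 \|\mathbf{a}\|_2^2 - t\epsilon).
\]
Optimizing the right-hand side over $t > 0$ by setting $t = \epsilon/(4\sigma^2 \|\mathbf{a}\|_2^2)$ yields the bound $\exp\{-\epsilon^2/(8\sigma^2\|\mathbf{a}\|_2^2)\}$ for the one-sided tail. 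Applying the same argument to $-\sum_i a_i(X_i - \mu_i)$ (which has the same MGF bound by symmetry of the derivation — replace $a_i$ by $-a_i$, noting $\|\mathbf{a}\|_2$ is invariant) and combining via the union bound produces the desired two-sided tail inequality with the extra factor of $2$.

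For the special case $a_i = 1/n$, I simply plug in $\|\mathbf{a}\|_2^2 = 1/n$ into the tail bound to get the stated concentration inequality, and then use Lemma \ref{lem:1:genprop}(ii)(b) (or equivalently Lemma \ref{lem:1:genprop}(v)(b) applied to the MGF bound $\exp(2\sigma^2 t^2/n)$) to convert this exponential tail bound into the $\psi_2$-norm bound $\psitwonorm{n^{-1}\sum_{i=1}^n(X_i - \mu_i)} \leq 4\sigma/\sqrt{n}$. There is no real obstacle here — the entire proof is a routine assembly of the MGF tensorization, Chernoff's method, and the standard $\psi_2$-tail-bound conversion, all of which are already available from Lemma \ref{lem:1:genprop}.
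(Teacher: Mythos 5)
Your proof is correct. The paper itself gives no proof of this lemma — Section \ref{subsec:probbound} explicitly skips the proofs of Lemmas \ref{lem:1:genprop}--\ref{lem:3:sgconc}, citing standard references — and your argument (tensorizing the single-variable MGF bound from Lemma \ref{lem:1:genprop}(v)(a) over independent summands, running the Chernoff optimization with $t=\epsilon/(4\sigma^2\|\mathbf{a}\|_2^2)$, and converting back to a $\psitwo$-norm bound via Lemma \ref{lem:1:genprop}(v)(b) or (ii)(b), which indeed yields exactly $4\sigma/\sqrt{n}$) is precisely the canonical derivation the authors are implicitly invoking; the only cosmetic point is that (v)(a) is stated with a strict inequality $\psitwonorm{X-\mu}<\sigma$, which you can reconcile with the non-strict hypothesis by applying it with $\sigma+\delta$ and letting $\delta\downarrow 0$.
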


\subsection{Basic Matrix Inequalities and Deviation Bounds for Random Matrices under the Spectral Norm}\label{subsec:matbound}

We provide here a sequence of lemmas collecting some useful and fairly well known inequalities regarding matrix norms and spectral properties of matrices and their submatrices and inverses. The lemmas also include some important results such as deterministic inequalities relating spectral distance between inverses of two p.d. matrices to that between the original matrices, as well as exact concentration bounds for deviations (under the spectral norm) of covariance-type random matrices defined by sub-gaussian random vectors.

\begin{lemma}[Basic inequalities on matrix norms and spectral properties of submatrices]\label{lem:5:matbasic}
Let $\bM \in \R^{d \times d}$ $(d \geq 1)$ denote any generic square matrix. Then, $\MLinfnorm{\bM} \leq \sqrt{d} \MLtwonorm{\bM} \leq d \MLinfnorm{\bM}$ and $\Mmaxnorm{\bM} \leq \MLtwonorm{\bM} \leq d \Mmaxnorm{\bM}$. Further, the following results hold.
\begin{enumerate}[(i)]
\item $\Ltwonorm{\bM \bv} \leq \MLtwonorm{\bM} \Ltwonorm{\bv}$ and $\Linfnorm{\bM \bv} \leq \MLinfnorm{\bM} \Linfnorm{\bv}$ for any $\bv \in \R^d$. Further, for any $\bM_1, \bM_2 \in \R^{d \times d}$, $\MLtwonorm{\bM_1 \bM_2} \leq \MLtwonorm{\bM_1} \MLtwonorm{\bM_2}$ and $\MLinfnorm{\bM_1 \bM_2} \leq \MLinfnorm{\bM_1} \MLinfnorm{\bM_2}$.
\par\smallskip
\item Let $\bM $ be symmetric and let $\bM_k$ denote any principal submatrix of $\bM$ of order $k \leq d$. Let $\lambda_1 \geq \hdots \geq \lambda_d$ and $\mu_1 \geq \hdots \geq \mu_k$ respectively denote the ordered eigenvalues of $\bM$ and $\bM_k$. Then, these are `interlaced' as: $\lambda_{d - k + 1} \leq \mu_i \leq \lambda_i$ $\forall \; 1 \leq i \leq k$. 

\par\smallskip
\item For any $\bM \in \R^{d \times d}$ (not necessarily symmetric) and any square submatrix (not necessarily principal) $\bM_k$ of $\bM$ of order $k \leq d$, let $\lambda_1 \geq \hdots \geq \lambda_d$ and $\mu_1 \geq \hdots \geq \mu_k$ respectively denote the ordered singular values of $\bM$ and $\bM_k$. Then, we have the `upper' and `lower' interlacing(s): $\mu_i \leq \lambda_i$ $\forall \; 1 \leq i \leq k$, and $\mu_i \geq \lambda_{2d - 2k + i} $ $\forall \; 1 \leq i \leq (2k - d)$.
\end{enumerate}
\end{lemma}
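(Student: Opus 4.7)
The plan is to establish the preliminary norm equivalences first, then dispatch parts (i) and (ii) via standard arguments, and finally devote the bulk of the effort to part (iii), which is the only nontrivial piece.

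For the opening inequalities, I would begin with the variational characterizations $\MLtwonorm{\bM} = \sup_{\Ltwonorm{\bv}\leq 1}\Ltwonorm{\bM\bv}$ and $\MLinfnorm{\bM} = \max_i \sum_j |\bMij|$. The bound $\MLinfnorm{\bM} \leq \sqrt{d}\MLtwonorm{\bM}$ follows by applying Cauchy--Schwarz to each row-$L_1$ sum and noting that each row's $L_2$-norm is $\leq \MLtwonorm{\bM}$ (since row $i$ equals $\bM^T e_i$ and $\MLtwonorm{\bM}=\MLtwonorm{\bM^T}$). The bound $\MLtwonorm{\bM} \leq \sqrt{d}\MLinfnorm{\bM}$ follows from the operator-norm interpolation $\MLtwonorm{\bM}^2 \leq \MLinfnorm{\bM}\cdot\MLonenorm{\bM}$ (Riesz--Thorin / Hölder for operator norms), together with $\MLonenorm{\bM} \leq d\,\MLinfnorm{\bM^T}$ after controlling $\MLonenorm{\bM}$ crudely. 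For the max-norm bounds, $\Mmaxnorm{\bM} = \max_{i,j}|e_i^T\bM e_j| \leq \MLtwonorm{\bM}$ by Cauchy--Schwarz, while $\MLtwonorm{\bM} \leq \|\bM\|_F \leq d\,\Mmaxnorm{\bM}$ via Frobenius.

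Part (i) is then immediate: $\Ltwonorm{\bM\bv}\leq \MLtwonorm{\bM}\Ltwonorm{\bv}$ and $\Linfnorm{\bM\bv}\leq \MLinfnorm{\bM}\Linfnorm{\bv}$ are the defining consistency properties of the two operator norms. Submultiplicativity $\MLtwonorm{\bM_1\bM_2}\leq\MLtwonorm{\bM_1}\MLtwonorm{\bM_2}$ and the analogous $L_\infty$ statement then follow by applying consistency twice: for any unit vector $\bv$, $\Ltwonorm{\bM_1\bM_2\bv} \leq \MLtwonorm{\bM_1}\Ltwonorm{\bM_2\bv}\leq \MLtwonorm{\bM_1}\MLtwonorm{\bM_2}$, and taking the supremum over $\bv$. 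Part (ii) is the classical Cauchy interlacing theorem for real symmetric matrices; I would prove it via the Courant--Fischer--Weyl min--max characterization $\mu_i = \max_{\dim V = i} \min_{\bv \in V, \|\bv\|=1} \bv^T \bM_k \bv$, noting that any $i$-dimensional subspace of the coordinate axes indexing $\bM_k$ embeds into the corresponding subspace for $\bM$, which yields $\mu_i \leq \lambda_i$; the lower bound $\mu_i \geq \lambda_{d-k+i}$ comes symmetrically from the dual max--min formulation.

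The main obstacle will be part (iii), the singular-value interlacing for \emph{arbitrary} (not necessarily principal) square submatrices. My plan is to view $\bM_k$ as $P\bM Q^T$ for two $k\times d$ row-selector matrices $P,Q$ with orthonormal rows, so that $\bM_k\bM_k^T = P\bM Q^T Q \bM^T P^T$. The upper interlacing $\mu_i\leq\lambda_i$ then follows from two applications of the Courant--Fischer min--max representation of singular values, combined with the fact that pre- or post-multiplication by a partial isometry cannot increase operator norms on any subspace (so $\sigma_i(P\bM Q^T)\leq \sigma_i(\bM)$). The lower interlacing $\mu_i \geq \lambda_{2d-2k+i}$ is more delicate; the cleanest route I foresee is to extend $\bM_k$ back to a $d\times d$ matrix $\widetilde{\bM}_k := P^T P \bM Q^T Q$ by zero-padding, observe that the nonzero singular values of $\widetilde{\bM}_k$ coincide with those of $\bM_k$, and then write $\bM - \widetilde{\bM}_k = (I - P^T P)\bM + P^T P\bM(I - Q^T Q)$ as a sum of two matrices each of rank $\leq (d-k)$. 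Applying Weyl's inequality for singular values ($\sigma_{i+j-1}(A+B)\leq \sigma_i(A)+\sigma_j(B)$) twice to $\bM = \widetilde{\bM}_k + (\bM - \widetilde{\bM}_k)$ and using that the perturbation has rank $\leq 2(d-k)$ then yields $\lambda_{i+2(d-k)} \leq \mu_i$, which is the claimed bound valid precisely when $i + 2(d-k) \leq d$, i.e., $i \leq 2k - d$. I expect the rank-accounting in this last step to be the most delicate piece and the place where I would need to be careful about index bookkeeping.
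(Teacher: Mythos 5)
Your proposal is correct, but it is a genuinely different route from the paper's: the paper does not prove Lemma \ref{lem:5:matbasic} at all, it simply records the norm equivalences and part (i) as standard facts, invokes the Poincar\'e separation / Courant--Fischer theorem for (ii), and cites Thompson (1972) for the singular-value interlacing in (iii). You instead give a self-contained argument, and the substantive new content is your proof of (iii): writing $\bM_k = P\bM Q^T$ with row-selector partial isometries gives $\mu_i \le \lambda_i$ immediately, and for the lower interlacing your zero-padded matrix $\widetilde{\bM}_k = P^TP\bM Q^TQ$ (same nonzero singular values as $\bM_k$) differs from $\bM$ by $(I-P^TP)\bM + P^TP\bM(I-Q^TQ)$, a perturbation of rank at most $2(d-k)$, so Weyl's singular-value inequality with $j = 2(d-k)+1$ yields $\lambda_{i+2(d-k)} \le \mu_i$ exactly on the range $i \le 2k-d$. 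The index bookkeeping you were worried about checks out. What each approach buys: the paper's citation is shorter and defers to Thompson's sharper variational treatment, while your argument is elementary, explicit about where the factor $2(d-k)$ comes from, and usable without chasing the reference.

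Two small points to tidy. First, in the norm equivalence $\MLtwonorm{\bM} \le \sqrt{d}\,\MLinfnorm{\bM}$, the interpolation $\MLtwonorm{\bM}^2 \le \MLonenorm{\bM}\,\MLinfnorm{\bM}$ is the right tool, but the auxiliary bound you state, $\MLonenorm{\bM} \le d\,\MLinfnorm{\bM^T}$, is vacuous (the two sides differ only by the factor $d$ applied to an identity, since $\MLonenorm{\bM} = \MLinfnorm{\bM^T}$); what you need is $\MLonenorm{\bM} \le d\,\MLinfnorm{\bM}$, which holds because every entry is dominated by its row sum, so each column sum is at most $d\,\MLinfnorm{\bM}$. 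Second, for (ii) your min--max argument delivers the classical interlacing $\lambda_{d-k+i} \le \mu_i \le \lambda_i$; the lower bound printed in the lemma, $\lambda_{d-k+1} \le \mu_i$ for all $i$, is evidently a typo (it already fails for a diagonal matrix with a principal submatrix omitting the largest entry), so your version is the correct and intended statement.
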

A few remarks regarding Lemma \ref{lem:5:matbasic} (ii)\tcR{--}(iii) are in order. The interlacing inequalities in (ii) are special cases of the well known Poincare Separation Theorem (and more generally, the Courant-Fisher Min-Max Theorem). The particular case of $k = d-1$ is also known as the Cauchy Interlacing Theorem (see \citet{Thompson72} for further details). Note that these inequalities are only for the eigenvalues (not singular values) of symmetric matrices and their principal submatrices (for n.n.d. matrices however, these two coincide).  The inequalities in (iii) are adopted from \citet{Thompson72} (they also apply more generally to non-square matrices). Notably, they apply directly to singular values (not eigenvalues) of matrices and submatrices of arbitrary nature and order. Among other implications, they also establish that $\MLtwonorm{\bM^*} \leq \MLtwonorm{\bM}$ for arbitrary matrices $\bM$ and submatrices $\bM^*$ of $\bM$.

\begin{lemma}[Inequalities relating spectral deviations of p.d. matrices and their inverses]\label{lem:6:matinv}
Let $\bM_0 \in \R^{d \times d}$ be any symmetric positive definite matrix with inverse $\bM_0^{-1}$ and minimal eigenvalue (also singular value) $\lambda_{\min}(\bM_0) \equiv \MLtwonorm{\bM_0^{-1}}^{-1} > 0$. Let $\bM \in \R^{d \times d}$ be any matrix such that $\MLtwonorm{\bM - \bM_0} \leq \lambda_{\min}(\bM_0)$. Then, $\MLtwonorm{(\bM - \bM_0)\bM_0^{-1}} < 1$, and $\{I + (\bM - \bM_0)\bM_0^{-1}\}$ and $\bM$ are both invertible. Further,
\begin{eqnarray*}
&& \MLtwonorm{\bM^{-1} - \bM_0^{-1}} 
 \;\; \leq \;\; \frac{\lambda_{\min}^{-2}(\bM_0)}{1 - \MLtwonorm{\bM - \bM_0}\lambda_{\min}^{-1}(\bM_0)}  \MLtwonorm{\bM - \bM_0} \\
 && \qquad\qquad \leq \; 2\lambda_{\min}^{-2}(\bM_0) \MLtwonorm{\bM - \bM_0} \quad \mbox{if} \;\; \MLtwonorm{\bM - \bM_0} \; \leq \; \frac{1}{2}\lambda_{\min}(\bM_0).
\end{eqnarray*}
\end{lemma}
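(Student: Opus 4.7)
My plan is to treat this as a standard perturbation result for matrix inversion, driven by a Neumann series expansion. The key observation is that $\bM$ can be factored as $\bM = \bM_0 + (\bM - \bM_0) = \{I + (\bM - \bM_0)\bM_0^{-1}\} \bM_0$, which reduces the problem to controlling the inverse of an identity-plus-small-perturbation.

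First, I would establish the two preliminary invertibility claims. By submultiplicativity of the spectral norm (Lemma \ref{lem:5:matbasic} (i)) and the identity $\MLtwonorm{\bM_0^{-1}} = \lambda_{\min}^{-1}(\bM_0)$, we have $\MLtwonorm{(\bM - \bM_0)\bM_0^{-1}} \leq \MLtwonorm{\bM - \bM_0} \lambda_{\min}^{-1}(\bM_0) \leq 1$, with strict inequality in the regimes relevant for the quantitative bounds below (in particular whenever the hypothesis $\MLtwonorm{\bM - \bM_0} \leq (1/2)\lambda_{\min}(\bM_0)$ holds, we get $\leq 1/2$). Writing $A := (\bM - \bM_0)\bM_0^{-1}$, since $\MLtwonorm{A} < 1$, the Neumann series $\sum_{k=0}^{\infty} (-A)^k$ converges in spectral norm and provides an explicit inverse of $(I + A)$, so $(I + A)$ is invertible; consequently, $\bM = (I + A)\bM_0$ is invertible as a product of invertible matrices.

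Next, the quantitative bound: from $\bM = (I + A)\bM_0$, I obtain $\bM^{-1} = \bM_0^{-1}(I + A)^{-1}$, and therefore
\begin{equation*}
\bM^{-1} - \bM_0^{-1} \;=\; \bM_0^{-1}\{(I+A)^{-1} - I\} \;=\; -\bM_0^{-1} A (I + A)^{-1},
\end{equation*}
where the last equality uses $(I+A)^{-1} - I = -A(I+A)^{-1}$. Applying submultiplicativity again, along with the Neumann-series bound $\MLtwonorm{(I+A)^{-1}} \leq \sum_{k=0}^{\infty} \MLtwonorm{A}^k = (1 - \MLtwonorm{A})^{-1}$, gives
\begin{equation*}
\MLtwonorm{\bM^{-1} - \bM_0^{-1}} \;\leq\; \MLtwonorm{\bM_0^{-1}} \cdot \MLtwonorm{A} \cdot \frac{1}{1 - \MLtwonorm{A}} \;\leq\; \frac{\lambda_{\min}^{-2}(\bM_0) \MLtwonorm{\bM - \bM_0}}{1 - \MLtwonorm{\bM - \bM_0}\lambda_{\min}^{-1}(\bM_0)},
\end{equation*}
which is the first inequality. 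The second inequality follows immediately by noting that under the stronger hypothesis $\MLtwonorm{\bM - \bM_0} \leq (1/2)\lambda_{\min}(\bM_0)$, the denominator is bounded below by $1/2$, so the right-hand side is at most $2 \lambda_{\min}^{-2}(\bM_0) \MLtwonorm{\bM - \bM_0}$.

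There is no serious obstacle here; the only mild subtlety is the borderline case $\MLtwonorm{\bM - \bM_0} = \lambda_{\min}(\bM_0)$ in the standing hypothesis, where $\MLtwonorm{A} \leq 1$ rather than $< 1$, but this is a measure-zero edge case that does not affect any downstream use, and the strict inequality holds as soon as $\MLtwonorm{\bM - \bM_0} < \lambda_{\min}(\bM_0)$, which is automatically the case under the halving hypothesis used in all applications (notably Lemma \ref{linreg:main_lem2}).
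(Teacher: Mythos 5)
Your proof is correct and follows essentially the same route as the paper: the paper itself does not spell the proof out (it attributes the lemma to the proof of Lemma 5 in Harris and Drton, 2013), but the argument it relies on there—and reproduces inline in the proof of Theorem \ref{theorem: high-dimensional consistency} via the inequality $\MLtwonorm{(I+A)^{-1}-I} \leq \MLtwonorm{A}/(1-\MLtwonorm{A})$ from Horn and Johnson—is exactly your factorization $\bM = \{I + (\bM-\bM_0)\bM_0^{-1}\}\bM_0$ combined with the Neumann-series bound and submultiplicativity. Your observation about the borderline case $\MLtwonorm{\bM-\bM_0} = \lambda_{\min}(\bM_0)$ (where one only gets $\MLtwonorm{(\bM-\bM_0)\bM_0^{-1}} \leq 1$) is a fair quibble with the statement itself rather than a gap in your argument, and it is immaterial in every application, which uses the halving hypothesis.
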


\begin{lemma}[Deviation bounds under the spectral norm for covariance-type matrices]\label{lem:7:matconc}
Let $\bX \in \R^d$ be any random vector with $\E(\bX) = \mathbf{0}$ and $\psitwonormstar{\bX} \leq \sigma_*$ for some $\sigma_* \geq 0$. Let $\bSigma := \E(\bX\bX^T)$ which is assumed to be positive definite with minimum and maximum eigenvalues $\lambdamin(\bSigma) > 0$ and $\lambdamax(\bSigma) \equiv \MLtwonorm{\bSigma} \geq \lambdamin(\bSigma) > 0$ respectively. Consider a collection $\{\bX_i \}_{i=1}^n$ of $n \geq 1$ independent realizations of $\bX$. Then, for any $\epsilon \geq 0$, we have:
\begin{eqnarray*}
&& \P \left\{ \MLtwonorm {\frac{1}{n}\sum_{i=1}^n\bX_i \bX_i^T - \bSigma}  > \; C_1 K_{\bX}^2 \left( \sqrt{\frac{d + \epsilon}{n}} + \frac{d + \epsilon}{n} \right) \right\} \; \leq \; 2 \exp\left( - \epsilon \right)\\ 
&& \mbox{and} \;\; \E\left( \MLtwonorm {\frac{1}{n}\sum_{i=1}^n\bX_i \bX_i^T - \bSigma} \right) \; \leq \; C_2 K_{\bX}^2 \left( \sqrt{\frac{d}{n}} + \frac{d}{n} \right), \quad \mbox{where} 
\end{eqnarray*}
$K_{\bX}^2 := \frac{\sigma_*^2 \lambdamax(\bSigma)}{\lambdamin(\bSigma)}$, and $C_1, C_2 > 0$ are absolute constants that do not depend on any other quantities introduced above. Specifically, choosing $\epsilon = cd$ for any $c > 0$ and noting that $\sqrt{c+1} \leq c+1$, we have: for any $c > 0$,
\begin{equation*}
\P \left\{ \MLtwonorm {\frac{1}{n}\sum_{i=1}^n\bX_i \bX_i^T - \bSigma} \; > \; C_1 K_{\bX}^2 (c+1)\left( \sqrt{\frac{d}{n}} + \frac{d}{n} \right) \right\} \; \leq \; 2 \exp\left( - cd\right). 
\end{equation*}
\end{lemma}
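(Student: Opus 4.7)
The statement to be proved is a standard non-asymptotic concentration inequality for sample covariance-type matrices formed from i.i.d.\ centered sub-Gaussian vectors, and the parenthetical hint in Assumption \ref{linreg:main_assmpn} already points to Theorem 4.7.1 (and Exercise 4.7.3) of \citet{Vershynin18} as the backbone. My plan is therefore to \emph{reduce to the isotropic case} and invoke that known result as a black box.

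The first step is whitening. Define $\bY_i := \bSigma^{-1/2} \bX_i$, so that $\E(\bY_i) = \bzero$ and $\E(\bY_i \bY_i^T) = I_d$, i.e.\ $\bY_i$ is isotropic. To control its sub-Gaussian norm, I would use Lemma \ref{lem:2:sgvectprop} (i), which gives $\psitwonormstar{\bY_i} \equiv \psitwonormstar{\bSigma^{-1/2} \bX_i} \leq \MLtwonorm{\bSigma^{-1/2}} \psitwonormstar{\bX_i} \leq \sigma_* / \sqrt{\lambdamin(\bSigma)}$. Thus $\{\bY_i\}_{i=1}^n$ are i.i.d.\ centered isotropic sub-Gaussian vectors whose $\psitwonormstar{\cdot}$ norm is bounded by $\sigma_*/\sqrt{\lambdamin(\bSigma)}$.

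The second step is to apply the isotropic concentration bound (Theorem 4.7.1 of \citet{Vershynin18}, essentially a $\varepsilon$-net argument on the unit sphere combined with a Hanson-Wright / Bernstein bound for sub-exponential quadratic forms) to $\{\bY_i\}$. This yields, for some absolute constant $C_1 > 0$ and any $\epsilon \geq 0$,
\begin{equation*}
\P \left\{ \MLtwonorm{\frac{1}{n}\sum_{i=1}^n \bY_i \bY_i^T - I_d} > C_1 \frac{\sigma_*^2}{\lambdamin(\bSigma)} \left( \sqrt{\frac{d+\epsilon}{n}} + \frac{d+\epsilon}{n} \right) \right\} \leq 2 \exp(-\epsilon),
\end{equation*}
together with the analogous expectation bound with constant $C_2$. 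The third and final step is to transfer the bound back to $\bX_i$. Writing $\frac{1}{n}\sum_{i=1}^n \bX_i \bX_i^T - \bSigma = \bSigma^{1/2} \bigl( \frac{1}{n}\sum_{i=1}^n \bY_i \bY_i^T - I_d \bigr) \bSigma^{1/2}$ and applying the submultiplicativity of the spectral norm (Lemma \ref{lem:5:matbasic} (i)) gives
\begin{equation*}
\MLtwonorm{\frac{1}{n}\sum_{i=1}^n \bX_i \bX_i^T - \bSigma} \;\leq\; \lambdamax(\bSigma) \cdot \MLtwonorm{\frac{1}{n}\sum_{i=1}^n \bY_i \bY_i^T - I_d},
\end{equation*}
so that the constant $\sigma_*^2/\lambdamin(\bSigma)$ from the isotropic bound gets inflated to $K_{\bX}^2 = \sigma_*^2 \lambdamax(\bSigma)/\lambdamin(\bSigma)$, exactly as required. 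The last display in the statement (the choice $\epsilon = cd$) is just a direct substitution together with $\sqrt{c+1} \leq c+1$.

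The main obstacle here is really the isotropic bound itself, which I am quoting as a black box; were one to prove it from scratch, the crux would be combining an $\varepsilon$-net/discretization of the unit sphere with a sub-exponential Bernstein-type deviation inequality for $\bv^T \bigl( \frac{1}{n}\sum \bY_i \bY_i^T - I_d \bigr) \bv = \frac{1}{n}\sum \{(\bv^T \bY_i)^2 - \E (\bv^T \bY_i)^2\}$ for fixed unit $\bv$, noting that $(\bv^T \bY_i)^2$ is sub-exponential by Lemma \ref{lem:1:genprop} (iv). Since the statement invokes this standard result verbatim (with matching constant $C_1$), the actual proof in this paper will consist almost entirely of the whitening reduction and the spectral-norm transfer sketched above.
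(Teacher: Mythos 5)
Your proposal is correct and follows essentially the same route as the paper: the paper supplies no written proof of this lemma, stating only that it "is obtained using Theorem 4.7.1 (more fundamentally, Theorem 4.6.1) of \citet{Vershynin18}, in conjunction with Exercise 4.7.3 therein, along with appropriate modifications of his notations and assumptions." Your whitening step $\bY_i = \bSigma^{-1/2}\bX_i$ (with $\psitwonormstar{\bY_i} \leq \sigma_*/\sqrt{\lambdamin(\bSigma)}$ via Lemma \ref{lem:2:sgvectprop}(i)) followed by the spectral-norm transfer $\MLtwonorm{\bSigma^{1/2} \bM \bSigma^{1/2}} \leq \lambdamax(\bSigma)\MLtwonorm{\bM}$ is exactly those "appropriate modifications" — indeed it mirrors how Vershynin himself reduces the anisotropic statement to the isotropic Theorem 4.6.1 — and it correctly produces the constant $K_{\bX}^2 = \sigma_*^2\lambdamax(\bSigma)/\lambdamin(\bSigma)$.
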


Lemma \ref{lem:6:matinv} is adopted from (the proof of) Lemma 5 in \citet{HarrisDrton13}. Lemma \ref{lem:7:matconc} is obtained using Theorem 4.7.1 (more fundamentally, Theorem 4.6.1) of \citet{Vershynin18}, in conjunction with Exercise 4.7.3 therein, along with appropriate modifications of his notations and assumptions to adapt to our setting. \tcr{Similar} results, though slightly more involved and with less explicit constants, may also be obtained using Theorem 5.39 of \citet{Vershynin12}, along with equation (5.26) in Remark 5.40 therein.


\section{\tcr{Verifying the Moment Conditions in Remark \ref{remark: linear functionals}}}\label{sec:remarkverify}
{We provide here some discussions regarding verification of the moment conditions: $\rhoxiAS = O(1)$ and $\sigmaxiAS = \Omega(1)$ \tcr{required in the last part of} 
Remark \ref{remark: linear functionals}.
The first condition \tcr{(and more)} can \tcr{indeed be} verified generally under our basic assumptions \tcr{in Section \ref{subsec:linreg}.} 
To this end, note that under Assumption \ref{linreg:main_assmpn} (i), and through multiple uses of Lemma  \ref{lem:2:sgvectprop} (i), Lemma \ref{lem:1:genprop} (i) and Lemma \ref{lem:1:genprop} (iv), as well as the last claim in Lemma \ref{linreg:main_lem1}, we have: for each $S \in \Ssc$,
\begin{eqnarray*}
&& \psionenorm{\baS^T \bPsiS(\bZ)} \;\; \leq \;\; \psitwonorm{\baS^T \bSigmaS^{-1} \bXtilS} \psitwonorm{\Ytil - \bXtilS^T \bbetaS} \nonumber  \\ 
&& \quad \;\; \leq \;\; \Ltwonorm{\baS} \{\lambdamin(\bSigmaS)\}^{-1} \psitwonormstar{\bXS} (\sigmaY + \psitwonormstar{\bXS}\Ltwonorm{\bbetaS})  \nonumber \\
&& \quad \;\; \leq \;\; \Ltwonorm{\baS} \left\{\lambdainfSsc^{-1} \sigmaXSsc (\sigmaY + \sqrt{2} \sigmaXSsc \lambdainfSsc^{-1/2} \sigmaY) \right\} \;\; \equiv \; \Ltwonorm{\baS} D_{\Ssc} \; \mbox{(say)}, \nonumber
\end{eqnarray*}
where $D_{\Ssc} := \lambdainfSsc^{-1} \sigmaXSsc \sigmaY (1 + \sqrt{2} \sigmaXSsc \lambdainfSsc^{-1/2}) $ depends only on the constants in Assumption \ref{linreg:main_assmpn} (i). Thus, as long as $D_{\Ssc} = O(1)$ and $\sum_{S \in \Ssc}\Ltwonorm{\baS} = O(1)$, as assumed before, we have:
\begin{equation*}
 \psionenorm{\xi_{\ASsc}(\bZ)} \; \equiv \; \psionenorm{\sum_{S \in \Ssc} \baS^T \bPsiS(\bZ)} \; \leq \; D_{\Ssc} \left( \sum_{S \in \Ssc} \Ltwonorm{\baS} \right) \; = \; O(1). \qed 
\end{equation*}
Consequently, using Lemma \ref{lem:1:genprop} (iii), we have $0 \leq \sigmaxiAS \leq \rhoxiAS \leq O(1)$. Among other implications, \tcr{this verifies the first condition: $\rhoxiAS = O(1)$. \qed} 
}

\par\smallskip
\tcr{Next,} we provide some sufficient conditions for verifying the other moment condition: $\sigmaxiAS = \Omega(1)$. To this end, suppose that for some positive constant $\eta_{\Ssc} = \Omega(1)$, we have $\Var(Y \mid \cup_{S \in \Ssc} \bXS) \geq \eta_{\Ssc} > 0$. Then,
\begin{align*}
& \sigmaxiAS^2 \; \equiv \; \Var \{ \xi_{\ASsc}(\bZ)\} \;\; \geq \; \E \left[ \Var \left\{ \xi_{\ASsc}(\bZ) \mid  \cup_{S \in \Ssc} \bXS \right\} \right] \nonumber \\
& \qquad \; \equiv \; \E \left[ \Var \left\{ \sum_{S \in \Ssc} \baS^T \bSigmaS^{-1} \bXtilS (\Ytil - \bXtilS^T\bbetaS) \mid  \cup_{S \in \Ssc} \bXS \right\} \right] \nonumber \\
&= \; \E \left\{ \Var\left(Y \mid \cup_{S \in \Ssc} \bXS\right) \left( \sum_{S \in \Ssc} \baS^T \bSigmaS^{-1} \bXtilS \right)^2 \right\} \; \geq \; \eta_{\Ssc} \E \left( \sum_{S \in \Ssc} \baS^T \bSigmaS^{-1} \bXtilS \right)^2.  \nonumber 
\end{align*}
Hence, as long as $\E \{ ( \sum_{S \in \Ssc} \baS^T \bSigmaS^{-1} \bXtilS )^2 \} = \Omega(1)$, and $\eta_{\Ssc} = \Omega(1)$ as assumed, we have $\sigmaxiAS^2 = \Omega(1)$, \tcr{thereby verifying the second condition.} \qed 

\section{False Discovery Rate (FDR) Control for MIDA}\label{sec:FDR_Control}



As an additional validation to our asymptotic results \tcr{on the theoretical properties and inferential tools for MIDA,} we discuss \tcr{here} some \tcr{numerical results on FDR control for MIDA, based on the setting used for our simulation studies in Section \ref{section: simulations}, for estimating the set of significant mediators:} 
\tcm{$\cup_{r=1}^{m}S^{(r)} := \cup_{r=1}^{m}\{ X_j^{(r)} : \eta_j^{(r)} \neq 0,~ j = 2, \ldots, p-1\}$} (\texttt{Target}) and
\tcm{$\cup_{r=1}^{m}S^{*(r)} := \cup_{r=1}^{m}\{ X_j^{(r)} : \theta_{1j}^{(r)}\mathrm{aver}(\Theta_{jp}^{(r)}) \neq 0,~ j = 2, \ldots, p-1\}$} (\texttt{Target\_CPDAG}),
when the true CPDAG is known as well as when the CPDAG is estimated. The BH procedure at \tcr{a} level $\alpha$ (asymptotically) guarantees to control \tcr{the} FDR at 
level 
\tcm{$\alpha m_0/M$} for estimating \texttt{Target\_CPDAG}, where for each simulation setting, \tcm{$m_0$} denotes the total number of true hypotheses 
\tcm{$|\cup_{r=1}^{m}S^{*(r)}|$ among the $M = m \times (p-2) = 5000$ hypotheses.}
Since 
\tcm{$\cup_{r=1}^{m}S^{*(r)} \subseteq \cup_{r=1}^{m}S^{(r)}$}, it is expected that the empirical FDR level would be higher when it is measured with respect to \texttt{Target}.

\begin{figure}[!ht]   
\centering
\includegraphics[width = \textwidth]{img/FDR_without_screening.pdf}
\caption{Empirical FDR of the estimated sets based on the BH procedure \tcr{\emph{without} any $p$-value} screening for estimating \texttt{Target} and \texttt{Target\_CPDAG} when the true CPDAG is known as well as when the CPDAG is estimated.}
\label{fig: FDR curves_noscreen}
\end{figure}

\par\smallskip
Figure \ref{fig: FDR curves_noscreen} shows that the BH procedure becomes \tcr{quite} \emph{conservative} for estimating
\tcm{$\cup_{r=1}^{m}S^{*(r)}$,} 
\tcr{though} we ignore the additional adjustment suggested by \cite{BenjaminiYekutieli01} in order to correct for possible dependencies among \tcr{the} hypotheses \tcr{here}. The conservativeness of \tcr{the} BH procedure can be attributed to the fact
\tcr{(a consequence of Theorem \ref{theorem: asymptotic distribution}, as was discussed in Section \ref{subsection: inference})} that the \tcr{$p$-value} corresponding to the test: 
\tcm{$\theta_{1j}^{(r)} \mathrm{aver}(\Theta_{jp}^{(r)}) = 0$}
has a stochastically larger distribution than $\mathrm{Uniform}[0,1]$ when both 
\tcm{$\mathrm{aver}(\Theta_{jp}^{(r)})$ and $\theta_{1j}^{(r)}$} are zero.
In order to mitigate this issue, we apply a heuristic screening, where\tcr{by} we \tcr{first} select the potential mediators for which the total effect of the treatment $X_1$ on the mediator is non-zero, by testing 
\tcm{$\theta_{1j}^{(r)} = 0$} at the significance level $0.01$. Then, we apply the BH procedure on \tcr{this} selected set. Figure \ref{fig: FDR curves} shows \tcr{the empirical FDR of the estimated sets based on the BH procedure after this screening, and demonstrates} that the heuristic screening method above is indeed effective in reducing the conservativeness of \tcr{the} BH procedure \tcr{in controlling the FDR level for multiple testing using MIDA.}



\begin{figure}[!ht]
\centering
\includegraphics[width = \textwidth]{img/FDR_with_screening.pdf}
\caption{Empirical FDR of the estimated sets based on the BH procedure \tcr{\emph{with} $p$-value} screening for estimating \texttt{Target} and \texttt{Target\_CPDAG} when the true CPDAG is known as well as when the CPDAG is estimated.}
\label{fig: FDR curves}
\end{figure}

\appendix   


\bibliographystyle{apalike}
\bibliography{bib/Mybibliography}

\end{document}